\renewcommand{\@biblabel}[1]{#1.}
\newtheorem{thm}{Theorem}    
\newtheorem{remark}{Remark}   
\newtheorem{lemma}{Lemma}   
\newtheorem{cor}{Corollary}  
\newtheorem{defin}{Definition}  
\begin{document}

\centerline{ \bf\Large{Description of Incomplete Financial Markets}}
\centerline{ \bf\Large{ for the  Discrete Time  Evolution of Risk Assets.}}

\vskip 21mm
{\bf \centerline {\Large  N.S. Gonchar } }
\vskip 5mm
\centerline{\bf {Bogolyubov Institute for Theoretical Physics of NAS of Ukraine.}}
\vskip 2mm

\begin{abstract}
In the paper, the martingales and super-martingales relative to a regular set of  measures are systematically studied.
The notion of  local  regular super-martingale relative to a  set of equivalent  measures is introduced and  the necessary and sufficient conditions of the local regularity of it  in the discrete case are founded.
The  regular set of  measures play fundamental role  for the description of incomplete markets. In the partial case, the description of the regular set of measures is presented. 
The notion of completeness of the regular set of measures have the  important significance for the simplification of the proof of the optional decomposition for super-martingales. 
Using this notion,  the important  inequalities for some random values are obtained. 
These inequalities give the simple proof of the optional decomposition   of the majorized super-martingales.
The description of all local regular super-martingales relative to the regular  set of  measures  is presented. 
It is  proved that every majorized  super-martingale   relative to the  complete  set of  measures  is a local regular one. 
In the case, as evolution of a risk asset is given by the discrete 
geometric Brownian motion, the financial market is incomplete  and a new formula  for the fair price of super-hedge is founded.
\end{abstract}

\centerline{{\bf Keywords:}  Random process; Regular set of measures; } 
\centerline{ Optional Doob decomposition;Local regular super-martingale; martingale; }
\centerline{Discrete geometric Brownian motion.}
\centerline{{\bf  2010 MSC}  60G07, 60G42.
\footnote{This work was supported  in part by The National Academy of Sciences of Ukraine (project No. 0118U003196).}}

\section{Introduction}

This paper is a continuation of the paper  \cite{GoncharNick}. In it, a new method of investigation of martingales and super-martingales relative to the regular set of  measures is developed.
A notion  of the local regular super-martingale relative to the regular  set of measures  is introduced and the necessary and sufficient conditions are found under that the above defined super-martingale is a local regular one. The last fact allowed us to describe the local regular super-martingales. 
On a measurable space, a notion of the set of equivalent measures consistent with the filtration is introduced.  Such a set of measures guarantee the  existence of  the sufficient set of nonnegative super-martingales. The next important fact is  the existence of a martingale on such a measurable space.  Further, we introduce the important notion of the regular set of measures. In  partial cases, we describe completely the set of regular measures.
An important notion of the completeness of the regular set  of  measures is introduced.
To prove that the regular  set of measures for the local regular martingale is a complete one we describe the set of equivalent measures to a given measure, which satisfy the condition: expectation of  a given random value  relative to  every measure   from this  set of  measures  equals zero.  
The representation for every measure of this  set of measures and a notion of  the  exhaustive  decomposition for the $\sigma$-algebra gives us the possibility  to prove the statement that the set of equivalent  martingale measures for the regular martingale  is a complete one.
 This notion is very important, since it permits us to find some important inequalities for a certain class of random variables. These inequalities simplify the proof of the optional decomposition for the class of  majorized super-martingales.

 The notion of the completeness of the regular  set of  measures permits us to give a new proof of the optional decomposition for a nonnegative super-martingale. This proof does not use the no-arbitrage arguments and  the measurable choice \cite{Kramkov}, \cite{FolmerKramkov1}, \cite{FolmerKabanov1},  \cite{FolmerKabanov}.

First, the optional decomposition for   diffusion processes super-martingale was opened by  by  El Karoui N. and  Quenez M. C. \cite{KarouiQuenez}. After that, Kramkov D. O. and Follmer H. \cite{Kramkov}, \cite{FolmerKramkov1} proved the optional decomposition for the nonnegative bounded super-martingales.  Folmer H. and Kabanov Yu. M.  \cite{FolmerKabanov1},  \cite{FolmerKabanov}  proved analogous result for an arbitrary super-martingale. Recently, Bouchard B. and Nutz M. \cite{Bouchard1} considered a class of discrete models and proved the necessary and sufficient conditions for the validity of the optional decomposition. 

The optional decomposition for super-martingales plays the fundamental role for the risk assessment in incomplete markets  \cite{Kramkov}, \cite{FolmerKramkov1},  \cite{KarouiQuenez},\cite{Gonchar2},  \cite{Gonchar555},  \cite{Gonchar557}, \cite{Honchar100}. 
Considered in the paper problem is a generalization of the corresponding one  that  appeared in mathematical finance  about the optional decomposition for a super-martingale and which is related with the construction of the super-hedge strategy in incomplete financial markets.

At last, we consider an application of the results obtained to find the new formula for the fair price of super-hedge in the case, as the risk asset evolves by the discrete geometric Brownian motion.

\section{Local regular super-martingales relative to  a  set of equivalent measures.}

We assume that on a measurable space $\{\Omega,\mathcal{F}\}$ a filtration    ${\mathcal{F}_{m}\subset\mathcal{F}_{m+1}}\subset\mathcal{F}, \ m=\overline{0, \infty},$ and a set  of equivalent  measures $ M$ on $\mathcal{F}$ are given. Further, we assume that ${\cal F}_0=\{\emptyset, \Omega \}$ and the $\sigma$-algebra  ${\cal F}=\sigma(\bigvee\limits_{n=1}^\infty {\cal F}_n)$ is a minimal $\sigma$-algebra generated by the algebra $\bigvee\limits_{n=1}^\infty {\cal F}_n.$
A random process $\psi={\{\psi_{m}\}_{m=0}^{\infty}}$ is said to be adapted one relative to the filtration $\{{\cal F}_m\}_{m=0}^{\infty},$ if $\psi_{m}$ is a ${\cal F}_m$ measurable random value, $m=\overline{0,\infty}.$
\begin{defin}
An adapted random process  $f={\{f_{m}\}_{m=0}^{\infty}}$ is said to be   a super-martingale relative to the filtration ${\cal F}_m,\ m=\overline{0,\infty},$ and the    family of equivalent  measures   $ M,$  if $E^P|f_m|<\infty, \ m=\overline{1, \infty}, \ P \in M,$ and the inequalities 
\begin{eqnarray}\label{pk11} 
E^P\{f_m|{\cal F}_k\} \leq f_k, \quad 0 \leq k \leq m, \quad m=\overline{1, \infty}, \quad P \in M,
\end{eqnarray}
are valid.
\end{defin}
 Further,   for an adapted process  $f$ we use both the denotation $\{f_{m}, {\cal F}_m\}_{m=0}^{\infty} $ and the denotation  $\{f_{m}\}_{m=0}^{\infty}.$

\begin{defin} A  super-martingale $\{f_{m},\ {\cal F}_m\}_{m=0}^{\infty}$ relative to a  set of equivalent measures M  is   a  local  regular one, if  $ \sup\limits_{P \in M}E^P|f_m| < \infty, \ m=\overline{1, \infty},$  and  there exists  an adapted nonnegative increasing  random process  $\{g_{m},\ {\cal F}_m\}_{m=0}^{\infty}, \ g_0=0,$  \  $ \sup\limits_{P \in M}E^P|g_m| < \infty,\ m=\overline{1, \infty},$ such that  $\{f_m+g_m, \ {\cal F}_m\}_{m=0}^{\infty}$
is a martingale relative to every measure from $M.$
\end{defin}
The next elementary Theorem \ref{reww1} will be very useful later.

\begin{thm}\label{reww1} Let a super-martingale  $\{f_{m},\ {\cal F}_m\}_{m=0}^{\infty}, $ relative to a  set of equivalent measures M  be such that   $ \sup\limits_{P \in M}E^P|f_m| < \infty, \ m=\overline{1, \infty}.$  The necessary and sufficient condition for  it   to be a local regular one is the existence of  an adapted nonnegative random process  $\{\bar g^0_{m},\ {\cal F}_m\}_{m=0}^{\infty},$  \  $ \sup\limits_{P \in M}E^P|\bar g^0_m| < \infty, \ m=\overline{1, \infty},$ such that
\begin{eqnarray}\label{o1}
f_{m-1} - E^P\{f_m|{\cal F}_{m-1}\}= E^P\{\bar g_m^0|{\cal F}_{m-1}\}, \quad m=\overline{1, \infty}, \quad P \in M.
\end{eqnarray}
\end{thm}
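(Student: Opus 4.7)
The plan is to prove both directions by unpacking the definitions and using the obvious candidate for $\bar g_m^0$, namely the one-step increment of the process $g_m$ appearing in the definition of local regularity.

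For necessity, assume $\{f_m,\mathcal{F}_m\}$ is local regular, so there is an adapted nonnegative increasing process $\{g_m,\mathcal{F}_m\}$ with $g_0=0$, $\sup_{P\in M}E^P|g_m|<\infty$, making $\{f_m+g_m\}$ a martingale for every $P\in M$. Define $\bar g_m^0=g_m-g_{m-1}$, which is $\mathcal{F}_m$-measurable and nonnegative by monotonicity, with $\sup_{P\in M}E^P\bar g_m^0\le\sup_{P\in M}E^P g_m<\infty$. The martingale identity $E^P\{f_m+g_m|\mathcal{F}_{m-1}\}=f_{m-1}+g_{m-1}$ rearranges directly to
\begin{equation*}
f_{m-1}-E^P\{f_m|\mathcal{F}_{m-1}\}=E^P\{g_m-g_{m-1}|\mathcal{F}_{m-1}\}=E^P\{\bar g_m^0|\mathcal{F}_{m-1}\},
\end{equation*}
which is exactly (\ref{o1}).

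For sufficiency, given an adapted nonnegative $\{\bar g_m^0,\mathcal{F}_m\}$ satisfying (\ref{o1}) with $\sup_{P\in M}E^P\bar g_m^0<\infty$, set $g_0=0$ and $g_m=\sum_{k=1}^m \bar g_k^0$. Then $\{g_m,\mathcal{F}_m\}$ is adapted, nonnegative, increasing, and
\begin{equation*}
\sup_{P\in M}E^P g_m\le\sum_{k=1}^m \sup_{P\in M}E^P\bar g_k^0<\infty.
\end{equation*}
From (\ref{o1}) we get $E^P\{f_m+g_m|\mathcal{F}_{m-1}\}=f_{m-1}+g_{m-1}$ for every $P\in M$, i.e. the one-step martingale property. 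Integrability $E^P|f_m+g_m|\le E^P|f_m|+E^P g_m<\infty$ holds, so by iterating the tower property we obtain $E^P\{f_m+g_m|\mathcal{F}_k\}=f_k+g_k$ for all $0\le k\le m$, meaning $\{f_m+g_m,\mathcal{F}_m\}$ is a martingale under every $P\in M$. Hence $\{f_m\}$ is local regular.

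There is essentially no obstacle here; the statement is a reformulation of the definition, trading the existence of an increasing process $g_m$ with prescribed conditional-expectation jumps for the existence of its nonnegative increments $\bar g_m^0$. The only point to watch is that one must take the \emph{same} process $\bar g_m^0$ (or $g_m$) for every $P\in M$, which is automatic in both directions of the argument above, and that the uniform integrability bound $\sup_{P\in M}E^P|\bar g_m^0|<\infty$ transfers cleanly between the two formulations via the finite sum defining $g_m$.
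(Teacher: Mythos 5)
Your proof is correct and follows essentially the same route as the paper: in the necessity direction you take $\bar g_m^0 = g_m - g_{m-1}$ as the increment of the compensator, and in the sufficiency direction you reconstruct $g_m=\sum_{k=1}^m \bar g_k^0$ and verify the one-step martingale property of $f_m+g_m$, exactly as in the paper's argument. The only cosmetic difference is that you spell out the iteration of the tower property and use the slightly sharper bound $E^P\bar g_m^0\le \sup_{P\in M}E^P g_m$, neither of which changes the substance.
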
 

\begin{proof}  The necessity. If   $\{f_{m},\ {\cal F}_m\}_{m=0}^{\infty}$ is  a local  regular super-martingale,  then there exist a martingale  $\{\bar M_{m},\ {\cal F}_m\}_{m=0}^{\infty}$ and a non-decreasing nonnegative random process  $\{g_{m},\ {\cal F}_m\}_{m=0}^{\infty},$ $ \ g_0=0,$ such that
\begin{eqnarray}\label{reww3}
f_m = \bar M_m - g_m, \quad m=\overline{1, \infty}.
\end{eqnarray}
From here, we  obtain  the equalities
$$E^P\{f_{m-1} -f_m|{\cal F}_{m-1}\}=$$
\begin{eqnarray}\label{reww4}
=E^P\{ g_m - g_{m-1}|{\cal F}_{m-1}\}=E^P\{\bar g_m^0|{\cal F}_{m-1}\}, \quad m=\overline{1, \infty} , \quad P \in M,
\end{eqnarray}
where we introduced the denotation $\bar g_m^0=g_m - g_{m-1} \geq 0.$
It is evident that $E^P\bar g_m^0\leq \sup\limits_{P\in M} E^Pg_m+\sup\limits_{P\in M} E^Pg_{m-1}< \infty.$

The sufficiency. Suppose that there exists an adapted nonnegative random process $\bar g^0=\{\bar g_m^0\}_{m=0}^\infty, \ \bar g_0^0=0,$  $ E^P\bar g_m^0<\infty, \ m=\overline{1, \infty}, $ such that the equalities (\ref{o1}) hold.  Let us consider the  random process $\{\bar M_{m},\ {\cal F}_m\}_{m=0}^{\infty},$ where
\begin{eqnarray}\label{reww5}
\bar M_0=f_0, \quad \bar M_m=f_m+\sum\limits_{i=1}^m\bar g_m^0, \quad m=\overline{1, \infty}.
\end{eqnarray}
It is evident that $E^P|\bar M_m|< \infty$ and
\begin{eqnarray}\label{apm1}
E^P\{\bar M_{m-1} -  \bar M_m|{\cal F}_{m-1}\}=E^P\{f_{m-1} - f_m- \bar g_m^0|{\cal F}_{m-1}\}=0.
\end{eqnarray}
  Theorem  \ref{reww1} is proved.
\end{proof}

\begin{lemma}\label{l1} Any super-martingale  ${\{f_m, {\cal F}_m\}_{m=0}^{\infty}}$ relative to  a family of measures  $ M$ for which there hold equalities   $E^{P}f_{m}=f_{0}, \ m=\overline{1,\infty},$ \  ${ P\in M},$ is a martingale  with respect to this family of measures and the filtration   ${\cal F}_m,\ m=\overline{1,\infty}.$
\end{lemma}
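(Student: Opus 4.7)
The plan is to exploit the standard trick that a nonnegative random variable with zero expectation must vanish almost surely. Concretely, the super-martingale property gives us, for every $P \in M$ and every $0 \leq k \leq m$, the pointwise (a.s.) inequality
\begin{equation*}
f_k - E^P\{f_m \mid {\cal F}_k\} \geq 0.
\end{equation*}
The goal is to promote each of these inequalities to an equality, which is exactly the martingale property with respect to the filtration $\{{\cal F}_m\}$ and every measure $P \in M$.

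First I would take the $P$-expectation of the displayed non-negative random variable. Using the tower property together with the hypothesis $E^P f_n = f_0$ for all $n$, both $E^P f_k$ and $E^P E^P\{f_m \mid {\cal F}_k\} = E^P f_m$ equal $f_0$, so their difference is zero. A non-negative random variable with zero expectation vanishes $P$-a.s., hence
\begin{equation*}
E^P\{f_m \mid {\cal F}_k\} = f_k \qquad P\text{-a.s.}
\end{equation*}
Since the measures in $M$ are equivalent, the almost-sure statement does not depend on the choice of $P$, and the equality holds simultaneously for every $P \in M$. This is precisely the martingale identity for every pair $0 \leq k \leq m$.

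There is essentially no obstacle here; the only mild subtlety worth flagging is integrability: the integrals $E^P f_k$ and $E^P f_m$ must be well-defined and finite for the subtraction step, but this is built into the definition of a super-martingale (which requires $E^P|f_m| < \infty$ for all $m$). So the proof reduces to the two lines above, and no further machinery beyond the definition of conditional expectation is needed.
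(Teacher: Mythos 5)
Your proof is correct and complete: the paper itself gives no argument for this lemma, merely citing Kallianpur's book, and your two-step reasoning (take expectations of the nonnegative defect $f_k - E^P\{f_m\mid{\cal F}_k\}$, conclude it vanishes $P$-a.s.) is the standard proof that the reference would supply. One small remark: the appeal to equivalence of the measures in $M$ is unnecessary, since the identity $E^P\{f_m\mid{\cal F}_k\}=f_k$ is derived separately for each fixed $P$ and is only ever needed $P$-a.s. for that same $P$.
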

\begin{proof} The proof of  Lemma \ref{l1} see \cite{Kallianpur}.\end{proof}

In the next Lemma, we present the formula for calculation of the conditional expectation relative to another measure from $M.$
\begin{lemma}\label{q1}
 On the measurable space  $\{ \Omega, {\cal F}\}$ with the filtration ${\cal F}_n$ on it, 
let $M$ be a set of equivalent measures  and let  $\xi$ be an integrable random value.  Then, the following formulas 
\begin{eqnarray}\label{n1}
E^{P_1}\{\xi|{\cal F}_n\}=E^{P_2}\left\{\xi \varphi_n^{P_1}|{\cal F}_n\right\}, \quad n=\overline{1, \infty},   
\end{eqnarray}
are valid,  where

\begin{eqnarray}\label{apm2}
 \varphi_n^{P_1}=\frac{dP_1}{dP_2}\left[E^{P_2}\left\{\frac{dP_1}{dP_2}|{\cal F}_n\right\}\right]^{-1}, \quad P_1, \ P_2 \in M.
\end{eqnarray}
\end{lemma}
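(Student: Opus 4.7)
This is the classical Bayes formula for conditional expectations under a change of equivalent measure, so the plan is to verify the standard derivation adapted to the notation of the paper.

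Set $Z = \frac{dP_1}{dP_2}$ and $Z_n = E^{P_2}\{Z|\mathcal{F}_n\}$, so that $\varphi_n^{P_1} = Z/Z_n$. Since $P_1 \sim P_2$, we have $Z > 0$ and $Z_n > 0$ a.s., so $\varphi_n^{P_1}$ is well defined. The right-hand side $\frac{1}{Z_n}E^{P_2}\{\xi Z|\mathcal{F}_n\}$ is clearly $\mathcal{F}_n$-measurable (here I pull the $\mathcal{F}_n$-measurable factor $1/Z_n$ inside the conditional expectation as in (\ref{n1})), and integrability follows from $E^{P_1}|\xi| = E^{P_2}|\xi| Z < \infty$.

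The verification reduces to showing that for every $A \in \mathcal{F}_n$,
\begin{eqnarray*}
\int_A \frac{1}{Z_n} E^{P_2}\{\xi Z|\mathcal{F}_n\}\, dP_1 = \int_A \xi\, dP_1.
\end{eqnarray*}
Rewriting the left-hand side via $dP_1 = Z\, dP_2$ gives $\int_A \frac{Z}{Z_n} E^{P_2}\{\xi Z|\mathcal{F}_n\}\, dP_2$. The factor $\frac{1}{Z_n}E^{P_2}\{\xi Z|\mathcal{F}_n\}$ is $\mathcal{F}_n$-measurable, so by the tower property this equals
\begin{eqnarray*}
\int_A \frac{1}{Z_n} E^{P_2}\{\xi Z|\mathcal{F}_n\} \cdot E^{P_2}\{Z|\mathcal{F}_n\}\, dP_2 = \int_A E^{P_2}\{\xi Z|\mathcal{F}_n\}\, dP_2 = \int_A \xi Z\, dP_2 = \int_A \xi\, dP_1,
\end{eqnarray*}
using $E^{P_2}\{Z|\mathcal{F}_n\} = Z_n$ in the first step. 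Combined with $\mathcal{F}_n$-measurability, the defining property of conditional expectation under $P_1$ yields the claimed identity (\ref{n1}).

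There is no real obstacle here; the only care needed is to ensure $Z_n > 0$ a.s. (which uses equivalence of the measures) and to handle the $\mathcal{F}_n$-measurable factor $1/Z_n$ correctly when pulling it in and out of the $P_2$-conditional expectation. Integrability of $\xi \varphi_n^{P_1}$ under $P_2$ is automatic from integrability of $\xi$ under $P_1$ via the identity $E^{P_2}|\xi \varphi_n^{P_1}| = E^{P_2}\frac{|\xi|Z}{Z_n}$ and the tower property.
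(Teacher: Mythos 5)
Your proof is correct; it is the standard Bayes formula argument (verify $\mathcal{F}_n$-measurability and the defining integral identity $\int_A \cdot\, dP_1=\int_A\xi\,dP_1$ for $A\in\mathcal{F}_n$, using $dP_1=Z\,dP_2$ and the tower property), and the equivalence of measures is correctly invoked to guarantee $Z_n>0$. The paper itself offers no argument --- it declares the lemma evident --- and what you have written is exactly the routine verification the author is alluding to.
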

\begin{proof} The proof of  Lemma \ref{q1}   is evident.\end{proof}

\section{Local regular super-martingales relative to a   set of equivalent measures consistent with the filtration.} 

\begin{defin}\label{mykvita1}
On a measurable space $\{\Omega, {\cal F}\}$  
with a filtration ${\cal F}_n$ on it, a set of equivalent measures  $M$ we call consistent  with the filtration ${\cal F}_n,$ if for every pair of measures $(Q_1, Q_2)  \in M^2$ the set of measures
\begin{eqnarray}\label{tatnick1}
R_s^k(A)=\int\limits_{A}\frac{E^{Q_1}\{\frac{dQ_2}{dQ_1}|{\cal F}_k\}}{E^{Q_1}\{\frac{dQ_2}{dQ_1}|{\cal F}_s\}}dQ_1,  \quad A \in {\cal F},
 \quad k\geq s \geq n, \quad n=\overline{0, \infty},
\end{eqnarray}
belongs to the set $M,$ where $M^2$ is  a direct product of the set $M$ by itself.
\end{defin}
\begin{lemma}\label{tinnick1}
On the measurable space $\{\Omega, {\cal F}\}$  with  the filtration ${\cal F}_n$ on it, the set 
of measures
\begin{eqnarray}\label{tinnick2} 
M=\{ Q, \ Q(A)=\int\limits_{A}\alpha(\omega)dP, \ A \in {\cal F}, \ Q(\Omega)=1\}
\end{eqnarray}
 is a consistent one with the filtration ${\cal F}_n,$ if  $P$ is a measure on  $\{\Omega, {\cal F}\}$ and a random value $\alpha(\omega)$ runs over all nonnegative  random values, satisfying the condition $P(\{\omega, \alpha(\omega)>0\})=1.$
\end{lemma}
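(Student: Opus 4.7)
The plan is to verify the consistency condition directly: given any pair $(Q_1,Q_2)\in M^2$ and indices $k\geq s\geq n\geq 0$, write the candidate measure $R_s^k$ as $\int_A\beta(\omega)\,dP$ for a suitable nonnegative density $\beta$, and then check the two defining requirements of $M$, namely $P(\{\omega:\beta(\omega)>0\})=1$ and $R_s^k(\Omega)=1$.

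First I would set $\alpha_i=dQ_i/dP$ for $i=1,2$. By hypothesis, both $\alpha_i$ are nonnegative and strictly positive $P$-almost surely, so each $Q_i$ is equivalent to $P$. Consequently $dQ_2/dQ_1=\alpha_2/\alpha_1$ is well defined $P$-almost surely, and with the abbreviation
$$\eta_k=E^{Q_1}\{\alpha_2/\alpha_1\mid{\cal F}_k\},\qquad \eta_s=E^{Q_1}\{\alpha_2/\alpha_1\mid{\cal F}_s\},$$
the identity $dQ_1=\alpha_1\,dP$ rewrites the defining integral (\ref{tatnick1}) as
$$R_s^k(A)=\int_A\frac{\eta_k}{\eta_s}\,\alpha_1\,dP,\qquad \beta(\omega):=\frac{\eta_k(\omega)}{\eta_s(\omega)}\alpha_1(\omega).$$

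Next I would check positivity of $\beta$. Since $\alpha_2/\alpha_1>0$ $Q_1$-a.s., the conditional expectations $\eta_k,\eta_s$ are strictly positive $Q_1$-a.s., hence also $P$-a.s. by equivalence of $Q_1$ and $P$. Combining this with $\alpha_1>0$ $P$-a.s. gives $P(\beta>0)=1$, so $R_s^k$ has the structural form required to lie in $M$. For the unit-mass condition I would compute
$$R_s^k(\Omega)=E^{Q_1}\!\left[\frac{\eta_k}{\eta_s}\right]=E^{Q_1}\!\left[\frac{1}{\eta_s}E^{Q_1}\{\eta_k\mid{\cal F}_s\}\right]=E^{Q_1}\!\left[\frac{\eta_s}{\eta_s}\right]=1,$$
using ${\cal F}_s$-measurability of $\eta_s$ together with the tower property $E^{Q_1}\{\eta_k\mid{\cal F}_s\}=\eta_s$, which holds because $k\geq s$ and $\eta_\cdot$ is a $Q_1$-martingale.

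There is no real obstacle here; the only delicate point is guaranteeing that the division by $\eta_s$ is legitimate, which is exactly where the hypothesis $P(\alpha>0)=1$ (and hence the equivalence of every pair of measures in $M$) enters. Once that is observed, both requirements defining membership in $M$ follow, and the consistency of the filtration in the sense of Definition \ref{mykvita1} is established.
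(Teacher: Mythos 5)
Your proof is correct and follows essentially the same route as the paper: rewrite $R_s^k$ as a density against $P$ and verify that this density is $P$-a.s.\ strictly positive using the equivalence of $Q_1$, $Q_2$ and $P$. The only difference is that you explicitly verify the normalization $R_s^k(\Omega)=1$ via the tower property, a step the paper leaves implicit.
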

\begin{proof} Suppose that $(Q_1, Q_2)$   belongs to $ M^2. $ Then, $\frac{dQ_2}{dQ_1}=\frac{\alpha_2(\omega)}{\alpha_1(\omega)}$ and $ P(\{\omega, \frac{dQ_2}{dQ_1}>0\})=1,$ since the equalities  $P( \{\omega, 0 <  \alpha_1(\omega)< \infty\})=1,$ $P( \{\omega, 0 <  \alpha_2(\omega)< \infty\})=1$ are true.
It is evident that

$$R_s^k(A)=\int\limits_{A}\frac{E^{Q_1}\{\frac{dQ_2}{dQ_1}|{\cal F}_k\}}{E^{Q_1}\{\frac{dQ_2}{dQ_1}|{\cal F}_s\}}dQ_1=$$
\begin{eqnarray}\label{tinnick3}
\int\limits_{A}\frac{E^{Q_1}\{\frac{dQ_2}{dQ_1}|{\cal F}_k\}}{E^{Q_1}\{\frac{dQ_2}{dQ_1}|{\cal F}_s\}}\alpha_1(\omega)dP
,  \quad A \in {\cal F},
 \quad k\geq s \geq n, \quad n=\overline{0, \infty}.
\end{eqnarray}
It is easy to see that
\begin{eqnarray}\label{tinnick4}
P(\{\omega, \frac{E^{Q_1}\{\frac{dQ_2}{dQ_1}|{\cal F}_k\}}{E^{Q_1}\{\frac{dQ_2}{dQ_1}|{\cal F}_s\}}\alpha_1(\omega)>0 \} )=1, \quad k \geq s,
\end{eqnarray}
since
\begin{eqnarray}\label{tinnick5}
P(\{ \omega,  E^{Q_1}\{\frac{dQ_2}{dQ_1}|{\cal F}_k\}>0 \})=1, \quad k \geq s,
\end{eqnarray}
\begin{eqnarray}\label{tinnick6}
P(\{ \omega, 0 < E^{Q_1}\{\frac{dQ_2}{dQ_1}|{\cal F}_s\}< \infty)=1,  \quad s \geq n, \quad n=\overline{0, \infty}.
\end{eqnarray}
The last equality follows from the equivalence of the measures $Q_1, Q_2$ and $P.$ Altogether, it means that the set of measures $ R_s^k, \ k \geq s \geq n, \ n=\overline{0, \infty},$
belongs to the set  $M.$ The same is true for the pair $(Q_2, Q_1) \in M^2.$
Lemma \ref{tinnick1} is proved.
\end{proof} 

\begin{thm}\label{tatnick2}
 On the measurable space $\{\Omega, {\cal F}\}$ with the filtration ${\cal F}_n$ on it, let the set of equivalent measures $M$  be  consistent  with the filtration  ${\cal F}_n.$ Then, for every  nonnegative random value $\xi$ such that
$\sup\limits_{P \in M} E^P\xi < \infty,$ the random process $\{f_n, {\cal F}_n\}_{n=0}^\infty$
 is a super-martingale relative to the set of measures $M,$ where $f_n=\mathrm{ess}\sup\limits_{P \in M}E^P\{\xi| {\cal F}_n\}, \ n=\overline{0, \infty}.$
\end{thm}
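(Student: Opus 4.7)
The task is to verify, for every $P \in M$ and all $0 \leq k \leq m$, the super-martingale inequality $E^P\{f_m | \mathcal{F}_k\} \leq f_k$ a.s. Integrability $E^P|f_m|<\infty$ will come for free, since $f_m \geq 0$ and the super-martingale property yields $E^P f_m \leq f_0 = \sup_{Q \in M} E^Q \xi < \infty$ (the last quantity being finite by hypothesis, and $f_0$ being $\mathcal{F}_0$-measurable, hence constant).

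My plan is to exploit the consistency of $M$ with the filtration to produce, for each fixed $P,Q \in M$, a measure $R \in M$ satisfying the pointwise identity $E^P\{E^Q\{\xi|\mathcal{F}_m\}|\mathcal{F}_k\} = E^R\{\xi|\mathcal{F}_k\}$ a.s. The right choice is to apply Definition \ref{mykvita1} to the ordered pair $(Q_1,Q_2)=(Q,P)$ with the indices $(s,k)$ in (\ref{tatnick1}) set to $(k,m)$ here, producing
\begin{equation*}
R(A) = \int_A \frac{E^Q\{dP/dQ | \mathcal{F}_m\}}{E^Q\{dP/dQ | \mathcal{F}_k\}}\,dQ, \qquad A \in \mathcal{F},
\end{equation*}
which belongs to $M$ by consistency. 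Rewriting both sides of the claimed identity via Lemma \ref{q1} reduces each conditional expectation to one under $Q$; using the tower property at $\mathcal{F}_k \subset \mathcal{F}_m$, and the $\mathcal{F}_m$-measurability of $E^Q\{\xi|\mathcal{F}_m\}$ and $E^Q\{dP/dQ|\mathcal{F}_m\}$, both numerators reduce to $E^Q\{E^Q\{\xi|\mathcal{F}_m\}\cdot E^Q\{dP/dQ|\mathcal{F}_m\}|\mathcal{F}_k\}$ and both denominators to $E^Q\{dP/dQ|\mathcal{F}_k\}$. Since $R \in M$, the definition of $f_k$ then gives
\begin{equation*}
E^P\{E^Q\{\xi|\mathcal{F}_m\}|\mathcal{F}_k\} = E^R\{\xi|\mathcal{F}_k\} \leq f_k \quad \text{a.s., for every } Q \in M.
\end{equation*}

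To turn this $Q$-wise bound into the super-martingale inequality itself, I will invoke the standard essential-supremum representation: the family $\{E^Q\{\xi|\mathcal{F}_m\}\}_{Q \in M}$ is upward directed, so one can select a sequence $Q_n \in M$ with $E^{Q_n}\{\xi|\mathcal{F}_m\} \uparrow f_m$ a.s. Conditional monotone convergence under $P$ then yields
\begin{equation*}
E^P\{f_m | \mathcal{F}_k\} = \lim_{n\to\infty} E^P\{E^{Q_n}\{\xi|\mathcal{F}_m\} | \mathcal{F}_k\} \leq f_k \quad \text{a.s.},
\end{equation*}
which is the required inequality.

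The main obstacle is justifying the upward directedness of $\{E^Q\{\xi|\mathcal{F}_m\}\}_{Q \in M}$. Given $Q_1,Q_2 \in M$ and the $\mathcal{F}_m$-measurable set $B = \{E^{Q_1}\{\xi|\mathcal{F}_m\} \geq E^{Q_2}\{\xi|\mathcal{F}_m\}\}$, one needs a measure $Q_3 \in M$ whose density ``pastes'' $dQ_1$ on $B$ and $dQ_2$ on $B^c$ so that $E^{Q_3}\{\xi|\mathcal{F}_m\} = \chi_B E^{Q_1}\{\xi|\mathcal{F}_m\}+\chi_{B^c}E^{Q_2}\{\xi|\mathcal{F}_m\}$. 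When $M$ is the full set of equivalent measures of Lemma \ref{tinnick1} this pasting is immediate, but in the general consistent setting one must verify that such a pasted measure can itself be realised as an $R_s^k$ (or an iterated such construction) from a suitable pair in $M^2$. This pasting step, rather than the identity derivation or the convergence argument, is the delicate part of the proof.
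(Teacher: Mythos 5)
Your proposal follows essentially the same route as the paper's own proof. The identity $E^P\{E^Q\{\xi|{\cal F}_m\}|{\cal F}_k\}=E^{R}\{\xi|{\cal F}_k\}$ with $R=R_k^m$ built from the ordered pair $(Q,P)$ is exactly the computation carried out in (\ref{tatnick4})--(\ref{tatnick5}); the paper merely packages it as the statement (\ref{tatnick6}) that $f_n$ equals an essential supremum of $E^Q\{\xi T|{\cal F}_n\}$ over a family $R_n$ of density martingales $T$, and then runs the same countable-selection plus conditional-monotone-convergence argument that you describe. The step you flag as delicate --- upward directedness of $\{E^Q\{\xi|{\cal F}_m\}\}_{Q\in M}$, i.e.\ the ability to paste two measures along an ${\cal F}_m$-measurable set and remain inside $M$ --- is precisely the point at which the paper introduces the random indices $\tau_1=1$ and $\tau_i\in\{\tau_{i-1},i\}$ chosen according to which of $E^{P}\{\xi T_{\tau_{i-1}}|{\cal F}_n\}$ and $E^{P}\{\xi T_i|{\cal F}_n\}$ is larger: this rewrites $\max_{1\le i\le k}E^{P}\{\xi T_i|{\cal F}_n\}$ as $E^{P}\{\xi T_{\tau_k}|{\cal F}_n\}$ for a martingale $T_{\tau_k}$ pasted along ${\cal F}_n$-measurable sets, and the subsequent bound by $\mathrm{ess}\sup\limits_{T\in R_n}E^Q\{\xi T|{\cal F}_{n-1}\}$ implicitly assumes that this pasted object still belongs to the admissible family. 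So you have correctly located the crux; the paper does not supply a separate verification of this pasting closure in the general ``consistent'' setting either, but absorbs it into the $\tau_k$ device. Apart from this shared point, your derivation of the identity, the reduction of integrability to $E^Pf_m\le f_0=\sup\limits_{Q\in M}E^Q\xi<\infty$ (legitimate since conditional expectations of nonnegative variables are always defined), and the limiting argument are all sound and consistent with the paper's.
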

\begin{proof}
Let $Q \in M,$ then,  due to Lemma  \ref{q1},  for every $ P \in M$ 
\begin{eqnarray}\label{tatnick3}
E^P\{\xi| {\cal F}_n\}=E^Q\left\{\xi| \frac{\frac{dP}{dQ}}{E^Q\{\frac{dP}{dQ}|{\cal F}_n\}}|{\cal F}_n\right\}.
\end{eqnarray}
If to put instead of the measure $P$ the measure $R_s^k, \ k \geq s \geq n,$ for the pair of measures $(Q, P)$ we obtain
\begin{eqnarray}\label{tatnick4}
E^{R_s^k}\{\xi| {\cal F}_n\}=E^Q\left\{\xi| \frac{\frac{dR_s^k}{dQ}}{E^Q\{\frac{dR_s^k}{dQ}|{\cal F}_n\}}|{\cal F}_n\right\}=E^Q\left\{\xi \frac{E^Q\{\frac{dP}{dQ}|{\cal F}_k\} }{E^Q\{\frac{dP}{dQ}|{\cal F}_s\}}|{\cal F}_n\right\},
\end{eqnarray}
where we took into account  the equality
\begin{eqnarray}\label{tatnick5}
E^Q\left\{\frac{dR_s^k}{dQ}|{\cal F}_n\right\}=E^Q\left\{\frac{E^Q\{ \frac{dP}{dQ}|{\cal F}_k\}}{E^Q\{\frac{dP}{dQ}|{\cal F}_s\}}|{\cal F}_n\right\}=1, \quad k \geq s \geq n.
\end{eqnarray}
From the formula (\ref{tatnick4}), it follows the equality 
\begin{eqnarray}\label{tatnick6}
\mathrm{ess}\sup\limits_{P \in M}E^P\{\xi| {\cal F}_n\}=\mathrm{ess}\sup\limits_{T \in R_n}E^P\{\xi T| {\cal F}_n\},
\end{eqnarray}
where $R_n $ is a set of martingales $T=\{T_m\}_{m=0}^\infty$ relative to the measure $Q$ such that $T_m=1, m \leq n, T_m=\frac{E^Q\{\frac{dP}{dQ}|{\cal F}_m\}}{E^Q\{\frac{dP}{dQ}|{\cal F}_s\}}, \  m \geq s \geq n, \ P \in M.$
The definition of $\mathrm{ess}\sup$ for the uncountable set of random values see \cite{Chow}. It is evident that $T_n \subseteq T_{n-1}.$
Let us consider

$$ E^Q\{\mathrm{ess}\sup\limits_{P \in M}E^P\{\xi| {\cal F}_n\}| {\cal F}_{n-1}\}=
E^Q\{\mathrm{ess}\sup\limits_{T \in R_n}E^P\{\xi T| {\cal F}_n\}| {\cal F}_{n-1}\}=
$$
$$
E^Q\{\sup\limits_{i \geq 1}E^P\{\xi T_i| {\cal F}_n\}| {\cal F}_{n-1}\}=
E^Q\{\lim\limits_{k\to \infty} \max\limits_{1 \leq i \leq k}E^P\{\xi T_i| {\cal F}_n\}| {\cal F}_{n-1}\}=$$
$$\lim\limits_{k\to \infty}E^Q\{ \max\limits_{1 \leq i \leq k}E^P\{\xi T_i| {\cal F}_n\}| {\cal F}_{n-1}\}=\lim\limits_{k\to \infty}E^P\{\xi T_{\tau_k}| {\cal F}_{n-1}\}\leq $$
$$\mathrm{ess}\sup\limits_{T \in R_n}E^Q\{\xi T|{\cal F}_{n-1}\} \leq 
\mathrm{ess}\sup\limits_{T \in R_{n-1}}E^Q\{\xi T|{\cal F}_{n-1}\}=$$
\begin{eqnarray}\label{tatnick7}
\mathrm{ess}\sup\limits_{P \in M}E^Q\{\xi |{\cal F}_{n-1}\},
\end{eqnarray}
where
\begin{eqnarray}\label{apm1}
\tau_1=1,
\end{eqnarray}
 \begin{eqnarray}\label{apm2}
 \tau_i=\left\{\begin{array}{l l} \tau_{i-1},  &   E^{P}\{\xi T_{\tau_{i-1}}|{\cal F}_n\} > E^{P}\{\xi T_i |{\cal F}_n\}, \\ 
i, &  E^{P}\{ \xi T_{\tau_{i-1}}|{\cal F}_n\} \leq E^{P}\{\xi T_i|{\cal F}_n\},
\end{array} \right. \quad i=\overline{2,k}.
\end{eqnarray}
Lemma \ref{tatnick2} is proved.
\end{proof}

\begin{thm}\label{tatnick8}
On the measurable space  $\{\Omega, {\cal F}\}, \ {\cal F}=\sigma(\bigvee\limits_{i=1}^\infty{\cal F}_i),$ let $M$ be a  set of equivalent measures being  consistent with the filtration ${\cal F}_n.$  If there exists a  nonnegative random value $\xi \neq 1$ such that $E^P \xi=1, P \in M,$ then  $E^P \{\xi |{\cal F}_n\}, P \in M,$ is a local regular martingale.
\end{thm}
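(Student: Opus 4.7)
The strategy is to build the candidate process via Theorem \ref{tatnick2}, use the normalization $E^P\xi=1$ together with $\mathcal{F}_0=\{\emptyset,\Omega\}$ to pin its $P$-expectation to a constant, invoke Lemma \ref{l1} to upgrade it from a super-martingale to a martingale, and finally observe that a martingale is a local regular super-martingale in the trivial way. Throughout, I would write $f_n=\mathrm{ess}\sup_{P\in M}E^P\{\xi\mid \mathcal{F}_n\}$, which is precisely the process whose super-martingale character is established by the previous theorem.

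\textbf{Step 1 (super-martingale).} Because $\xi\ge 0$ and $\sup_{P\in M}E^P\xi=1<\infty$, Theorem \ref{tatnick2} applies with this particular $\xi$ and shows directly that $\{f_n,\mathcal{F}_n\}_{n=0}^\infty$ is a nonnegative super-martingale relative to $M$.

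\textbf{Step 2 (constant expectation).} Using $\mathcal{F}_0=\{\emptyset,\Omega\}$, one has $E^P\{\xi\mid\mathcal{F}_0\}=E^P\xi=1$ for every $P\in M$, so $f_0=1$. For any $P\in M$ and any $n$, I would sandwich $E^Pf_n$ between the two obvious bounds: the super-martingale inequality gives $E^Pf_n\le f_0=1$, while the defining inequality $f_n\ge E^P\{\xi\mid\mathcal{F}_n\}$ together with the tower property yields $E^Pf_n\ge E^P\xi=1$. Therefore $E^Pf_n=f_0$ for every $n$ and every $P\in M$.

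\textbf{Step 3 (martingale and local regularity).} Now Lemma \ref{l1} applies verbatim to the super-martingale $\{f_n,\mathcal{F}_n\}$ with the set $M$, promoting it to a martingale relative to every $P\in M$. Local regularity is then immediate from Definition 2 by choosing the non-decreasing adjustment $g_n\equiv 0$: the integrability condition $\sup_{P\in M}E^P|f_n|=1<\infty$ is built in, and $f_n+g_n=f_n$ is a martingale by construction. (Equivalently, one could appeal to Theorem \ref{reww1} with $\bar g_m^0\equiv 0$.)

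\textbf{Where is the work?} There is essentially no obstacle once Theorem \ref{tatnick2} and Lemma \ref{l1} are in hand; the entire argument is a normalization-and-sandwich computation. The hypothesis $\xi\ne 1$ is not needed for any step of the proof; it only ensures the resulting martingale is nontrivial (not identically equal to $f_0$), which is the substantive content the author wants to extract, namely the existence of a genuine, nonconstant local regular martingale on $\{\Omega,\mathcal{F}\}$ whenever the consistent family $M$ admits a nontrivial common-expectation random variable.
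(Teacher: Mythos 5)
Your Steps 1--2 reproduce the paper's argument exactly: form $f_n=\mathrm{ess}\sup_{P\in M}E^P\{\xi\mid{\cal F}_n\}$, invoke Theorem \ref{tatnick2}, and sandwich $E^Pf_n$ between $1$ and $1$ to force the super-martingale inequalities into equalities (the paper derives the martingale property directly from $E^Qf_n=1$ rather than quoting Lemma \ref{l1}, but that is the same computation). Your observations that $\xi\neq 1$ is only a nontriviality hypothesis are also fair.

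There is, however, a gap at the end: what you have shown is that the \emph{essential supremum} process $\{f_n\}$ is a martingale relative to $M$, whereas the theorem asserts this for $E^P\{\xi\mid{\cal F}_n\}$, $P\in M$ --- which requires knowing that these conditional expectations do not depend on $P$ and coincide with $f_n$. You never make that identification. It does follow in one line from your own Step 2: for each fixed $P$ one has $f_n-E^P\{\xi\mid{\cal F}_n\}\ge 0$ and $E^P\bigl[f_n-E^P\{\xi\mid{\cal F}_n\}\bigr]=1-1=0$, hence $f_n=E^P\{\xi\mid{\cal F}_n\}$ $P$-a.s., so all the conditional expectations agree and form a single martingale relative to the whole family $M$ --- but as written your proof stops short of the stated conclusion. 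Note also that the paper goes one step further and proves $f_\infty=\xi$ a.s.\ via Fatou and the convergence of $E^{P_1}\{\xi\mid{\cal F}_n\}$; this is where the hypothesis ${\cal F}=\sigma(\bigvee_{i=1}^\infty{\cal F}_i)$ is actually used (it closes the martingale by $\xi$, a fact relied on later when $\{m_n\}$ is called the regular martingale generated by $\xi$). Your version leaves that hypothesis unused, which should have been a signal that something in the statement was not yet fully addressed.
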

\begin{proof}
Due to Lemma \ref{tatnick2}, the  random process  $\{f_n, {\cal F}_n\}_{n=0}^\infty,$ where $f_n=$ $\mathrm{ess}\sup\limits_{P \in M}E^P\{\xi| {\cal F}_n\},$ $ \ n=\overline{0, \infty},$ is a  super-martingale relative to
 the set of  measures $M,$ that is, 
\begin{eqnarray}\label{tatnick9}
E^Q\{\mathrm{ess}\sup\limits_{P \in M}E^P\{\xi| {\cal F}_n\}|{\cal F}_{n-1}\}\leq
\mathrm{ess}\sup\limits_{P \in M}E^P\{\xi| {\cal F}_{n-1}\}, \quad Q \in M, \quad n=\overline{0,\infty}.
\end{eqnarray}
From  the inequality (\ref{tatnick9}), it follows the inequality
\begin{eqnarray}\label{tatnick10}
E^Q\mathrm{ess}\sup\limits_{P \in M}E^P\{\xi| {\cal F}_n\} \leq 1, \quad n=\overline{0,\infty}.
\end{eqnarray}
Since $E^Q\mathrm{ess}\sup\limits_{P \in M}E^P\{\xi| {\cal F}_n\}  \geq E^QE^Q\{\xi| {\cal F}_n\}=1,$ we have
\begin{eqnarray}\label{tatnick11}
E^Q\mathrm{ess}\sup\limits_{P \in M}E^P\{\xi| {\cal F}_n\}=1, \quad Q \in M, \quad n=\overline{0,\infty}.
\end{eqnarray}
The inequalities (\ref{tatnick9}) and the equalities (\ref{tatnick11}) give the equalities
\begin{eqnarray}\label{tatnick12}
E^Q\{\mathrm{ess}\sup\limits_{P \in M}E^P\{\xi| {\cal F}_n\}|{\cal F}_{n-1}\}=
\mathrm{ess}\sup\limits_{P \in M}E^P\{\xi| {\cal F}_{n-1}\}, \quad Q \in M, \quad n=\overline{1,\infty},
\end{eqnarray}
which are true with the probability 1.
The last  means that $\{f_n, {\cal F}_n\}_{n=0}^\infty$
 is a martingale relative to the set of measures $M,$ where $f_n=\mathrm{ess}\sup\limits_{P \in M}E^P\{\xi| {\cal F}_n\}, \ n=\overline{0, \infty}.$
With the  probability 1,   $\lim\limits_{n\to \infty}\mathrm{ess}\sup\limits_{P \in M}E^P\{\xi| {\cal F}_n\}=f_{\infty},$ where the random value  $f_{\infty}$ is ${\cal F}$ measurable one.
From the inequality (\ref{tatnick10})  and Fatou Lemma \cite{Gonchar7}, \cite{Chow}, we obtain
\begin{eqnarray}\label{tatnick13}
E^Pf_{\infty} \leq 1, \quad P \in M.
\end{eqnarray}
Prove that $f_{\infty}=\xi.$ Going to the limit in the inequality 
\begin{eqnarray}\label{tatnick14}
\mathrm{ess}\sup\limits_{P \in M}E^P\{\xi| {\cal F}_n\} \geq E^{P_1}\{\xi| {\cal F}_n\},
\end{eqnarray}
as $n \to \infty,$  we obtain the inequality
\begin{eqnarray}\label{tatnick15}
f_{\infty} \geq \xi.
\end{eqnarray}
From the inequality (\ref{tatnick13}) and the inequality (\ref{tatnick15}), we obtain
the inequalities $1\geq E^Pf_{\infty} \geq E^P\xi=1.$ Or, $E^Pf_{\infty}=1.$
The equalities $E^Pf_{\infty}=1, E^P\xi=1$ and the inequality (\ref{tatnick15}) give
the equality $f_{\infty}= \xi$ with the probability 1. Lemma \ref{tatnick8} is proved.
\end{proof}

\begin{lemma}\label{vitanick1}
On the measurable space   $\{\Omega, {\cal F}\}$ with the  filtration ${\cal F}_n$ on it,
let there exist $k$ equivalent measures $P_1, \ldots, P_k, k>1,$  and  a nonnegative random value $\xi_0 \neq 1 $ be  such that  
\begin{eqnarray}\label{vitanick2}
E^{P_i}\{\xi_0|{\cal F}_n\}=E^{P_1}\{\xi_0|{\cal F}_n\}, \quad E^{P_i}\xi_0=1, \quad i=\overline{2,k},\quad n=\overline{0, \infty}.
\end{eqnarray}
Then, there exists  the  set of equivalent measures $M$ consistent with the filtration  ${\cal F}_n,$ satisfying the condition $E^P\xi_0=1, P \in M.$
\end{lemma}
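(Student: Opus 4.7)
My plan is to define $M$ implicitly by an invariant: let $\xi_0^{(n)} := E^{P_1}\{\xi_0|{\cal F}_n\}$, which by the hypothesis on the $P_i$ coincides with $E^{P_i}\{\xi_0|{\cal F}_n\}$ for every $i$, and take
$$M := \{Q\ :\ Q\sim P_1,\ E^Q\{\xi_0|{\cal F}_n\}=\xi_0^{(n)}\ \text{for all } n\geq 0\}.$$
Since ${\cal F}_0=\{\emptyset,\Omega\}$ is trivial, any $P\in M$ automatically satisfies $E^P\xi_0=\xi_0^{(0)}=E^{P_1}\xi_0=1$, which is the conclusion of the lemma. The set $M$ is clearly nonempty, containing $P_1,\ldots,P_k$; a short Bayes computation (using $dP/dP_1=\sum_i\lambda_i\,dP_i/dP_1$ and the coincidence of the conditional expectations) also shows every strictly convex combination of $P_1,\ldots,P_k$ lies in $M$.

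The core task is to verify that $M$ is consistent with the filtration in the sense of Definition \ref{mykvita1}. Fix $(Q_1,Q_2)\in M^2$ and put $Z_n := E^{Q_1}\{dQ_2/dQ_1|{\cal F}_n\}$. Then $R_s^k$ has density $Z_k/Z_s>0$ with respect to $Q_1$, hence $R_s^k\sim Q_1\sim P_1$, so equivalence is preserved. Applying Lemma \ref{q1} yields
$$E^{R_s^k}\{\xi_0|{\cal F}_n\}=\frac{E^{Q_1}\{\xi_0 Z_k/Z_s|{\cal F}_n\}}{E^{Q_1}\{Z_k/Z_s|{\cal F}_n\}},$$
and the goal is to show this quotient equals $\xi_0^{(n)}$ in each of the three cases $n\leq s\leq k$, $s\leq n\leq k$, and $s\leq k\leq n$.

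The key observation enabling all three cases is that the process $M_n := \xi_0^{(n)}Z_n$ is a $Q_1$-martingale. Indeed, by Bayes' formula together with the defining property of $Q_2\in M$,
$$E^{Q_1}\{\xi_0\,dQ_2/dQ_1|{\cal F}_n\}=Z_n\,E^{Q_2}\{\xi_0|{\cal F}_n\}=Z_n\,\xi_0^{(n)},$$
which exhibits $M_n$ as the conditional expectation of a fixed integrable random variable. Combined with the martingale property of $Z_n$ under $Q_1$ and the tower property $E^{Q_1}\{\xi_0^{(s)}|{\cal F}_n\}=\xi_0^{(n)}$ for $n\leq s$ (which follows from $Q_1\in M$), a direct case-by-case reduction — pulling out ${\cal F}_s$- or ${\cal F}_k$-measurable factors when possible and using the identity $E^{Q_1}\{\xi_0 Z_k|{\cal F}_s\}=\xi_0^{(s)}Z_s$ otherwise — collapses both numerator and denominator so that the quotient equals $\xi_0^{(n)}$. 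Hence $R_s^k\in M$, $M$ is consistent with the filtration, and the lemma follows.

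The main obstacle is purely mechanical: keeping the case analysis for $R_s^k$ organized and invoking $Q_1\in M$ and $Q_2\in M$ at the right place (the former gives the martingale property of $\xi_0^{(n)}$ under $Q_1$; the latter gives the martingale property of $M_n=\xi_0^{(n)}Z_n$). The real conceptual content is the observation that the hypothesis on the $P_i$ makes $\xi_0^{(n)}$ measure-independent within $M$, which is precisely what makes the invariant preserved under the $R_s^k$-construction.
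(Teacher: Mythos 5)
Your proposal is correct and follows essentially the same route as the paper: the same implicit definition of $M$ by the invariant $E^Q\{\xi_0|{\cal F}_n\}=E^{P_1}\{\xi_0|{\cal F}_n\}$, and the same verification via Lemma \ref{q1} that the measures $R_s^k$ preserve this invariant. Your packaging of the computation through the $Q_1$-martingale $\xi_0^{(n)}Z_n$ is just a cleaner organization of the chain of conditional-expectation manipulations the paper writes out explicitly (the paper records only the case $n\le s\le k$), and the case-by-case reduction you sketch does close as claimed.
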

\begin{proof}
Let us consider the set of equivalent measures $M,$  satisfying the condition
\begin{eqnarray}\label{vitanick3}
E^{P}\{\xi_0|{\cal F}_n\}=E^{P_1}\{\xi_0|{\cal F}_n\}, \quad n=\overline{0, \infty}, \quad P \in M.
\end{eqnarray}
Such a set of measures is a nonempty one. Suppose that $Q_1, Q_2 \in M,$ then 
\begin{eqnarray}\label{vitanick4}
E^{Q_1}\{\xi_0|{\cal F}_n\}=E^{Q_2}\{\xi_0|{\cal F}_n\}, \quad n=\overline{0, \infty}. 
\end{eqnarray}
Let us prove that the formula 
\begin{eqnarray}\label{vitanick5}
E^{Q_1}\left\{\xi_0 \frac{E^{Q_1}\{\frac{dQ_2}{dQ_1}|{\cal F}_k\} }{E^{Q_1}\{\frac{dQ_2}{dQ_1}|{\cal F}_s\}}|{\cal F}_n\right\}=E^{Q_1}\{\xi_0|{\cal F}_n\}, \quad n \leq s \leq k, \quad n=\overline{0, \infty},
\end{eqnarray}
is valid.
Let $ s \geq n. $ Then, from the equalities  (\ref{vitanick4}), we have
$$ E^{Q_1}\{ E^{Q_2}\{\xi_0|{\cal F}_s\}|{\cal F}_n\}=E^{Q_1}\{\xi_0|{\cal F}_n\}.$$
Let $ k \geq s.$ Then,
$$ E^{Q_1}\{ E^{Q_2}\{\xi_0|{\cal F}_s\}|{\cal F}_n\}=E^{Q_1}\{ E^{Q_2}\{E^{Q_2}\{\xi_0|{\cal F}_k\} |{\cal F}_s\}|{\cal F}_n\}=$$
$$E^{Q_1}\{ E^{Q_1}\{E^{Q_2}\{\xi_0|{\cal F}_k\} \frac{ \frac{dQ_2}{dQ_1}}{E^{Q_1}\{\frac{dQ_2}{dQ_1}|{\cal F}_s\} }|{\cal F}_s\}|{\cal F}_n\}=$$
$$ E^{Q_1}\{E^{Q_2}\{\xi_0|{\cal F}_k\} \frac{ \frac{dQ_2}{dQ_1}}{E^{Q_1}\{\frac{dQ_2}{dQ_1}|{\cal F}_s\}}|{\cal F}_n\}=$$
$$ E^{Q_1}\{E^{Q_1}\{\xi_0|{\cal F}_k\} \frac{ \frac{dQ_2}{dQ_1}}{E^{Q_1}\{\frac{dQ_2}{dQ_1}|{\cal F}_s\}}|{\cal F}_n\}=$$

$$ E^{Q_1}\{E^{Q_1}\{\xi_0|{\cal F}_k\} \frac{E^{Q_1}\{ \frac{dQ_2}{dQ_1}|{\cal F}_k\}}{E^{Q_1}\{\frac{dQ_2}{dQ_1}|{\cal F}_s\}}|{\cal F}_n\}=$$

$$ E^{Q_1}\{\xi_0|\frac{E^{Q_1}\{ \frac{dQ_2}{dQ_1}|{\cal F}_k\}}{E^{Q_1}\{\frac{dQ_2}{dQ_1}|{\cal F}_s\}}|{\cal F}_n\}.$$
This proves the formula (\ref{vitanick5}).
To finish the proof of Lemma \ref{vitanick1}, it needs to prove that the set of  measures
\begin{eqnarray}\label{vitanick6}
R_s^k(A)=\int\limits_{A}\frac{E^{Q_1}\{\frac{dQ_2}{dQ_1}|{\cal F}_k\}}{E^{Q_1}\{\frac{dQ_2}{dQ_1}|{\cal F}_s\}}dQ_1,  \quad A \in {\cal F},
 \quad k\geq s \geq n, \quad n=\overline{0, \infty},
\end{eqnarray}
belongs to the set $M.$
Really,

$$ E^{R_s^k}\{\xi_0| {\cal F}_n\}=E^{Q_1}\left\{\xi_0| \frac{\frac{dR_s^k}{dQ_1}}{E^{Q_1}\{\frac{dR_s^k}{dQ_1}|{\cal F}_n\}}|{\cal F}_n\right\}=$$
\begin{eqnarray}\label{vitanick7}
E^{Q_1}\left\{\xi_0 \frac{E^{Q_1}\{\frac{dQ_2}{dQ_1}|{\cal F}_k\} }{E^{Q_1}\{\frac{dQ_2}{dQ_1}|{\cal F}_s\}}|{\cal F}_n\right\}=
E^{Q_1}\{\xi_0|{\cal F}_n\},  
\end{eqnarray}
where we took into account  the equality
\begin{eqnarray}\label{vitanick8}
E^{Q_1}\left\{\frac{dR_s^k}{dQ_1}|{\cal F}_n\right\}=E^{Q_1}\left\{\frac{E^{Q_1}\{ \frac{dQ_2}{dQ_1}|{\cal F}_k\}}{E^{Q_1}\{\frac{dQ_2}{dQ_1}|{\cal F}_s\}}|{\cal F}_n\right\}=1, \quad k \geq s \geq n.
\end{eqnarray}
From this, it follows that the set of measures $R_s^k \in M.$ This proves the consistence with the filtration of the set of measures $M.$ Lemma \ref{vitanick1} is proved.
\end{proof}

On a probability space   $\{\Omega, {\cal F}, P\},$ 
let $\xi$ be a random value, satisfying the conditions 
\begin{eqnarray}\label{vitanick10}
0 < P(\{\omega, \xi >0\})<1, \quad 0 < P(\{\omega, \xi <0\}).
\end{eqnarray}
Denote $\Omega^+=\{\omega, \xi(\omega)>0\}, \ \Omega^-=\{\omega, \xi(\omega)\leq 0\}$ and let ${\cal F}^-,$ $  {\cal F}^+$ be the restrictions of the $\sigma$-algebra ${\cal F}$ on the sets  $\Omega^-$ and  $\Omega^+,$ correspondingly. Suppose that $P^-$ and $P^+$ are the contractions of the measure $P$ on the $\sigma$-algebras ${\cal F}^-,$ $  {\cal F}^+,$ correspondingly. Consider the measurable space with measure  $\{  \Omega^-\times  \Omega^+, {\cal F}^-\times{\cal F}^+, \mu\},$ which is a direct product of the measurable spaces with measures  $\{\Omega^-, {\cal F}^-, P^-\}$ and $\{\Omega^+, {\cal F}^+, P^+\},$ where $\mu=P^-\times P^+.$ 
Introduce the  denotations
\begin{eqnarray}\label{100vitanick1}
\xi^+(\omega)=\left\{\begin{array}{l l} \xi(\omega),  &  \omega \in \{\xi(\omega)>0 \}, \\ 
0, & \omega \in \{\xi(\omega)\leq 0 \},
\end{array} \right.
\end{eqnarray}

\begin{eqnarray}\label{100vitanick2}
\xi^-(\omega)=\left\{\begin{array}{l l} -\xi(\omega),  &  \omega \in \{\xi(\omega)\leq 0 \},\\ 
0, & \omega \in \{\xi(\omega) > 0 \}.
\end{array} \right.
\end{eqnarray}
Then, $\xi(\omega)=\xi^+(\omega) -\xi^-(\omega).$ 

 On the measurable space $\{  \Omega^-\times  \Omega^+, {\cal F}^-\times{\cal F}^+, P^-\times P^+\},$  we assume that  the set of nonnegative measurable functions $\alpha(\omega_1,\omega_2),$ satisfying the conditions
\begin{eqnarray}\label{vitanick12}
\mu(\{(\omega_1,\omega_2) \in \Omega^- \times \Omega^+, \    \alpha(\omega_1, \omega_2)>0\})=P(\Omega^+)P(\Omega^-), 
\end{eqnarray}
\begin{eqnarray}\label{vitanick13}  \int\limits_{\Omega^-}\int\limits_{\Omega^+}\alpha(\omega_1, \omega_2)\frac{\xi^-(\omega_1) \xi^+(\omega_2)}{\xi^-(\omega_1)+\xi^+(\omega_2)}d\mu(\omega_1,\omega_2)<\infty,
\end{eqnarray}

\begin{eqnarray}\label{vitanick14}
\int\limits_{\Omega^-}\int\limits_{\Omega^+}\alpha(\omega_1, \omega_2)d\mu(\omega_1,\omega_2)=1,
\end{eqnarray}
is a nonempty set. Such assumptions are true for the nonempty set of  bounded random values $\alpha(\omega_1,\omega_2),$ for example, if the random value $\xi$ is an integrable one relative to the measure $P.$ 

\begin{lemma}\label{vitanick9}
On the probability space  $\{\Omega, {\cal F}, P\},$ let a random value $\xi$  satisfy the conditions (\ref{vitanick10}) and let
a measure $Q$  be equivalent to the measure $P$ and such that $E^Q\xi=0.$
Then, for the measure $Q$ the following representation 
$$Q(A)=\int\limits_{\Omega^-}\int\limits_{\Omega^+}\chi_{A}(\omega_1)\alpha(\omega_1, \omega_2)\frac{\xi^+(\omega_2)}{\xi^-(\omega_1)+\xi^+(\omega_2)}d\mu(\omega_1,\omega_2)+$$
\begin{eqnarray}\label{vitanick11} \int\limits_{\Omega^-}\int\limits_{\Omega^+}\chi_{A}(\omega_2)\alpha(\omega_1, \omega_2)\frac{\xi^-(\omega_1)}{\xi^-(\omega_1)+\xi^+(\omega_2)}d\mu(\omega_1,\omega_2), \quad A \in {\cal F},
\end{eqnarray}
is valid for those  random value  $\alpha(\omega_1, \omega_2)$  
that satisfy the conditions  (\ref{vitanick12}) - (\ref{vitanick14}).

Every measure $Q,$ given by the formula (\ref{vitanick11}), with the random value $\alpha(\omega_1, \omega_2),$ satisfying the conditions   (\ref{vitanick12}) -   (\ref{vitanick14}) is equivalent to the measure $P$ and is such that $E^Q\xi=0.$ For the measure $Q,$ the canonical representation 
$$Q(A)=\int\limits_{\Omega^-}\int\limits_{\Omega^+}\chi_{A}(\omega_1)\alpha_1(\omega_1, \omega_2)\frac{\xi^+(\omega_2)}{\xi^-(\omega_1)+\xi^+(\omega_2)}d\mu(\omega_1,\omega_2)+$$
\begin{eqnarray}\label{100vitanick3} \int\limits_{\Omega^-}\int\limits_{\Omega^+}\chi_{A}(\omega_2)\alpha_1(\omega_1, \omega_2)\frac{\xi^-(\omega_1)}{\xi^-(\omega_1)+\xi^+(\omega_2)}d\mu(\omega_1,\omega_2), \quad A \in {\cal F},
\end{eqnarray}
is valid, where
\begin{eqnarray}\label{100vitanick4}
\alpha_1(\omega_1, \omega_2)=\frac{\psi_1(\omega_1)\psi_2(\omega_2)[\xi^-(\omega_1)+\xi^+(\omega_2)]}{d}, \quad (\omega_1, \omega_2) \in \Omega^-\times \Omega^+,
\end{eqnarray}
\begin{eqnarray}\label{100vitanick5}
\psi_1(\omega_1)=\int\limits_{\Omega^+}\alpha(\omega_1, \omega_2)\frac{\xi^+(\omega_2)}{\xi^-(\omega_1)+\xi^+(\omega_2)}dP(\omega_2), \quad \omega_1 \in \Omega^-,
\end{eqnarray}
\begin{eqnarray}\label{100vitanick6}
\psi_2(\omega_2)=\int\limits_{\Omega^-}\alpha(\omega_1, \omega_2)\frac{\xi^-(\omega_1)}{\xi^-(\omega_1)+\xi^+(\omega_2)}dP(\omega_1), \quad \omega_2 \in \Omega^+,
\end{eqnarray}
\begin{eqnarray}\label{100vitanick7}
d=\int\limits_{\Omega^-}\xi^-(\omega_1)\psi_1(\omega_1)dP(\omega_1)=
\int\limits_{\Omega^+}\xi^+(\omega_2)\psi_2(\omega_2)dP(\omega_2).
\end{eqnarray}
\end{lemma}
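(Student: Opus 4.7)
The plan is to observe first that formula (\ref{vitanick11}), specialised to $A\subset\Omega^-$ or $A\subset\Omega^+$, says exactly that $Q$ restricted to $\Omega^-$ has $P$-density $\psi_1$ of (\ref{100vitanick5}) and $Q$ restricted to $\Omega^+$ has $P$-density $\psi_2$ of (\ref{100vitanick6}). So the whole lemma reduces to producing, for each admissible $Q$, a kernel $\alpha$ whose two ``marginals'' in this sense are the restrictions of $q=dQ/dP$ to $\Omega^\mp$.

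For the sufficiency half, I assume $\alpha$ satisfies (\ref{vitanick12})--(\ref{vitanick14}) and define $Q$ by (\ref{vitanick11}). Condition (\ref{vitanick12}), together with the strict positivity of $\xi^+(\omega_2)$ on $\Omega^+$ (so the denominator $\xi^-(\omega_1)+\xi^+(\omega_2)$ never vanishes on $\Omega^-\times\Omega^+$), forces $\psi_1>0$ $P$-a.e.\ on $\Omega^-$ and $\psi_2>0$ $P$-a.e.\ on $\Omega^+$, whence $Q\sim P$. Fubini gives $Q(\Omega)=\int_{\Omega^-}\psi_1\,dP+\int_{\Omega^+}\psi_2\,dP=\int\alpha\,d\mu=1$. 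Finally, $E^Q\xi=\int_{\Omega^+}\xi^+\psi_2\,dP-\int_{\Omega^-}\xi^-\psi_1\,dP$, and a further Fubini swap reduces each of these two terms to $\int\alpha\,\frac{\xi^-\xi^+}{\xi^-+\xi^+}\,d\mu$, finite by (\ref{vitanick13}); the difference vanishes.

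For the necessity half, given $Q\sim P$ with $E^Q\xi=0$, I put $q=dQ/dP$ and observe that the common value $d=E^Q\xi^+=E^Q\xi^-$ is strictly positive and finite (the assumption $E^Q\xi=0$ presupposes $Q$-integrability of $\xi^\pm$, while $\xi^+>0$ on $\Omega^+$ of positive $P$-measure, together with $q>0$ a.e., precludes $d=0$). Take the explicit ansatz
$$\alpha_1(\omega_1,\omega_2)=\frac{q(\omega_1)\,q(\omega_2)\,(\xi^-(\omega_1)+\xi^+(\omega_2))}{d}.$$
Positivity of $q$ yields (\ref{vitanick12}); splitting the product inside $\int\alpha_1\,d\mu$ gives $\frac{1}{d}[E^Q\xi^-\cdot Q(\Omega^+)+Q(\Omega^-)\cdot E^Q\xi^+]=Q(\Omega^+)+Q(\Omega^-)=1$, and $\int\alpha_1\,\frac{\xi^-\xi^+}{\xi^-+\xi^+}\,d\mu=\frac{1}{d}E^Q\xi^-\cdot E^Q\xi^+=d<\infty$, verifying (\ref{vitanick13}) and (\ref{vitanick14}). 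Substituting $\alpha_1$ into (\ref{100vitanick5})--(\ref{100vitanick6}) gives $\psi_1(\omega_1)=q(\omega_1)$ on $\Omega^-$ and $\psi_2(\omega_2)=q(\omega_2)$ on $\Omega^+$, so (\ref{vitanick11}) with $\alpha_1$ reproduces $Q$. The same computation, carried out with an \emph{arbitrary} $\alpha$ representing $Q$, shows that the $\psi_1,\psi_2$ of (\ref{100vitanick5})--(\ref{100vitanick6}) always coincide with $q$ on $\Omega^\mp$ and that $d$ in (\ref{100vitanick7}) is again $E^Q\xi^\pm$; substituting these back into (\ref{100vitanick4}) yields the canonical representation (\ref{100vitanick3}).

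The main obstacle I expect is bookkeeping rather than substance: one must track which of $\omega_1,\omega_2$ sits in which of $\Omega^\mp$ at each step, confirm that $\xi^-(\omega_1)+\xi^+(\omega_2)>0$ everywhere on the product (so all fractions are well defined), and check that (\ref{vitanick13}) provides exactly the integrability needed for the Fubini swaps that establish both $E^Q\xi=0$ in the sufficiency half and the normalisation $\int\alpha_1\,d\mu=1$ in the necessity half.
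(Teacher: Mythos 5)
Your proposal is correct and follows essentially the same route as the paper: the necessity half uses the identical canonical ansatz $\alpha_1(\omega_1,\omega_2)=\psi_1(\omega_1)\psi_2(\omega_2)[\xi^-(\omega_1)+\xi^+(\omega_2)]/d$ built from the density $q=dQ/dP$, and the sufficiency half and canonical representation rest on the same Fubini computations identifying the two marginals of $\alpha$ with $q$ on $\Omega^\mp$. The only discrepancy worth noting is in the paper's own text, not yours: your value $d$ for the integral in condition (\ref{vitanick13}) is the correct one.
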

\begin{proof}
From  the Lemma \ref{vitanick9} conditions,
\begin{eqnarray}\label{vitanick15}
Q(A)=\int\limits_{A}\psi(\omega)dP, \quad P(\{\omega, \psi(\omega)>0\})=1,
\end{eqnarray}
\begin{eqnarray}\label{vitanick16}
\int\limits_{\Omega}\psi(\omega)\xi(\omega)dP(\omega)=0.
\end{eqnarray}
The  condition (\ref{vitanick16}) means
\begin{eqnarray}\label{vitanick17}
\int\limits_{\Omega^+}\psi_2(\omega_2)\xi^+(\omega_2)dP(\omega_2)=
\int\limits_{\Omega^-}\psi_1(\omega_1)\xi^-(\omega_1)dP(\omega_1)=d>0,
\end{eqnarray}
where
\begin{eqnarray}\label{100vitanick8}
\psi_1(\omega)=\left\{\begin{array}{l l} \psi(\omega),  &  \omega \in \Omega^-, \\ 
0, & \omega \in \Omega^+,
\end{array} \right.
\end{eqnarray}
\begin{eqnarray}\label{100vitanick9}
\psi_2(\omega)=\left\{\begin{array}{l l} \psi(\omega),  &  \omega \in \Omega^+, \\ 
0, & \omega \in \Omega^-.
\end{array} \right.
\end{eqnarray}
Let us put 
\begin{eqnarray}\label{vitanick18}
\alpha(\omega_1, \omega_2)=\frac{\psi_1(\omega_1)\psi_2(\omega_2)[\xi^-(\omega_1+\xi^+(\omega_2)]}{d}, \quad   (\omega_1,\omega_2) \in \Omega^- \times \Omega^+.
\end{eqnarray}
Then, for such $\alpha(\omega_1, \omega_2) $ the equality (\ref{vitanick12}) is true.
Moreover,
\begin{eqnarray}\label{vitanick19}  \int\limits_{\Omega^-}\int\limits_{\Omega^+}\alpha(\omega_1, \omega_2)\frac{\xi^-(\omega_1)\xi^+(\omega_2)}{\xi^-(\omega_1)+\xi^+(\omega_2)}d\mu(\omega_1, \omega_2)=d^2<\infty,
\end{eqnarray}
$$\int\limits_{\Omega^-}\int\limits_{\Omega^+}\alpha(\omega_1, \omega_2)d\mu(\omega_1, \omega_2)=$$
\begin{eqnarray}\label{vitanick20}
\int\limits_{\Omega^-}\psi_1(\omega_1)dP(\omega_1)+
\int\limits_{\Omega^+}\psi_2(\omega_2)dP(\omega_2)=1,
\end{eqnarray}
$$E^Q\xi=\int\limits_{\Omega^-}\int\limits_{\Omega^+}\alpha(\omega_1, \omega_2)\xi(\omega_1)\frac{\xi^+(\omega_2)}{\xi^-(\omega_1)+\xi^+(\omega_2)}d\mu(\omega_1, \omega_2)+$$
\begin{eqnarray}\label{vitanick21} \int\limits_{\Omega^-}\int\limits_{\Omega^+}\alpha(\omega_1, \omega_2)\xi(\omega_2)\frac{\xi^-(\omega_1)}{\xi^-(\omega_1)+\xi^+(\omega_2)}d\mu(\omega_1, \omega_2)=0,
\end{eqnarray}
since $\xi(\omega_1)=-\xi^-(\omega_1), \ \omega_1 \in  \Omega^-, \ \xi(\omega_2)=\xi^+(\omega_2), \ \omega_2 \in \Omega^+.$ 

Let us prove  the last statement of  Lemma \ref{vitanick9}.
Suppose that the representation (\ref{vitanick11}) for the measure $Q,$ satisfying the conditions (\ref{vitanick12}) - (\ref{vitanick14}),  is valid.
 Taking into account  the denotations  (\ref{100vitanick5}) - (\ref{100vitanick7}),  we obtain
\begin{eqnarray}\label{100vitanick9}
Q(A)=\int\limits_{\Omega^-}\chi_{A}(\omega_1)\psi_1(\omega_1)dP(\omega_1)+\int\limits_{\Omega^+}\chi_{A}(\omega_2)\psi_2(\omega_2)dP(\omega_2),
\end{eqnarray}
$$ 0=E^Q\xi=\int\limits_{\Omega^-}\xi(\omega_1)\psi_1(\omega_1)dP(\omega_1)+
\int\limits_{\Omega^+}\xi(\omega_2)\psi_2(\omega_2)dP(\omega_2)=$$
\begin{eqnarray}\label{100vitanick10}
-\int\limits_{\Omega^-}\xi^-(\omega_1)\psi_1(\omega_1)dP(\omega_1)+
\int\limits_{\Omega^+}\xi^+(\omega_2)\psi_2(\omega_2)dP(\omega_2).
\end{eqnarray}
If to introduce the denotation
\begin{eqnarray}\label{100vitanick11}
\psi(\omega)=\left\{\begin{array}{l l} \psi_1(\omega),  &  \omega \in \Omega^-, \\ 
\psi_2(\omega), & \omega \in \Omega^+,
\end{array} \right.
\end{eqnarray}
then we obtain the representation
\begin{eqnarray}\label{100vitanick12}
Q(A)= \int\limits_{A}\psi(\omega)dP(\omega),
\end{eqnarray}
where $P(\psi_1(\omega)>0)=P(\Omega^-), \  P(\psi_2(\omega)>0)=P(\Omega^+).$

The last formula proves the equivalence of the measures $Q$ and $P.$
At last, to prove the  canonical representation (\ref{100vitanick3}) it is sufficient to substitute the expression  (\ref{100vitanick4}) for $\alpha_1(\omega_1, \omega_2)$ into the expression (\ref{100vitanick3}) for $Q(A).$ We obtain the expression (\ref{100vitanick9}) for $Q(A).$ Then, if to substitute the expressions (\ref{100vitanick5}), (\ref{100vitanick6}) for   $\psi_1(\omega_1), \psi_2(\omega_2)$ into the expression (\ref{100vitanick9}) for $Q(A),$ we obtain that the canonical representation for $Q(A)$ is true.   This proves
 Lemma \ref{vitanick9}.
\end{proof}

Let  $\{\Omega, {\cal F}, P\}$ be a probability space and let  $G$ be a sub $\sigma$-algebra of the $\sigma$-algebra $ {\cal F}.$

\begin{lemma}\label{101vitanick13}
On the probability space $\{\Omega, {\cal F}, P\},$ let a  random value $\xi$ satisfy  the conditions (\ref{vitanick10}) and let it be an integrable one relative to the measure $P.$  A measure $Q,$ being  equivalent to the measure $P,$
satisfies the condition
\begin{eqnarray}\label{100vitanick13}
E^Q\{\xi|G\}=0
\end{eqnarray}
if and only if for every $B \in G$ such that $P(B)>0$ for the measure  $Q$ the representation 
$$Q(A)=\int\limits_{\Omega^{B,-}}\int\limits_{\Omega^{B,+}}\chi_{A}(\omega_1)\alpha_1(\omega_1, \omega_2)\frac{\zeta^{B,+}(\omega_2)}{\zeta^{B,-}(\omega_1)+\zeta^{B,+}(\omega_2)}d\mu(\omega_1,\omega_2)+$$
\begin{eqnarray}\label{100vitanick14} \int\limits_{\Omega^{B,-}}\int\limits_{\Omega^{B,+}}\chi_{A}(\omega_2)\alpha_1(\omega_1, \omega_2)\frac{\zeta^{B,-}(\omega_1)}{\zeta^{B,-}(\omega_1)+\zeta^{B,+}(\omega_2)}d\mu(\omega_1,\omega_2), \ A \in {\cal F},
\end{eqnarray}
is true and the equalities
\begin{eqnarray}\label{100vitanick15}
\alpha_1(\omega_1, \omega_2)=\frac{\psi_1(\omega_1)\psi_2(\omega_2)[\zeta^{B,-}(\omega_1+\zeta^{B,+}(\omega_2)]}{d^B}, 
\end{eqnarray}
$$   (\omega_1,\omega_2) \in \Omega^{B,-} \times \Omega^{B,+},$$
\begin{eqnarray}\label{100vitanick16}
d^B=\int\limits_{\Omega^{B,-}}\zeta^{B,-}(\omega_1)\psi_1(\omega_1)dP(\omega_1)=
\int\limits_{\Omega^{B,+}}\zeta^{B,+}(\omega_2)\psi_2(\omega_2)dP(\omega_2),
\end{eqnarray}
are valid, where
\begin{eqnarray}\label{100vitanick17}
\zeta^{B,+}(\omega)=\left\{\begin{array}{l l} \xi(\omega),  &  \omega \in B\cap\{\xi(\omega)>0\}, \\ 
0, & \omega \in (\Omega\setminus B)\cup \{ \xi(\omega) \leq 0\},
\end{array} \right.
\end{eqnarray}
\begin{eqnarray}\label{100vitanick17}
\zeta^{B,-}(\omega)=\left\{\begin{array}{l l} -\xi(\omega),  &  \omega \in B\cap\{\xi(\omega)\leq 0\}, \\ 
0, & \omega \in (\Omega\setminus B)\cup \{ \xi(\omega) > 0\},
\end{array} \right.
\end{eqnarray}
\begin{eqnarray}\label{100vitanick18}
\psi_1(\omega)=\left\{\begin{array}{l l} \psi(\omega),  &  \omega \in \Omega^{B,-}, \\ 
0, & \omega \in \Omega^{B,+},
\end{array} \right.
\end{eqnarray}
\begin{eqnarray}\label{100vitanick19}
\psi_2(\omega)=\left\{\begin{array}{l l} \psi(\omega),  &  \omega \in \Omega^{B,+}, \\ 
0, & \omega \in \Omega^{B,-},
\end{array} \right.
\end{eqnarray}
\begin{eqnarray}\label{100vitanick20}
\Omega^{B,+}=B\cap\{\xi(\omega)>0\}, \quad \Omega^{B,-}=(\Omega\setminus B)\cup \{ \xi(\omega) \leq 0\},
\end{eqnarray}
\begin{eqnarray}\label{100vitanick21}
Q(A)=\int\limits_{A}\psi(\omega)dP(\omega), \quad  A \in {\cal F}, \quad P(\{\omega, \psi(\omega)>0\})=1.
\end{eqnarray}
\end{lemma}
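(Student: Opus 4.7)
The plan is to reduce this conditional statement to the unconditional Lemma \ref{vitanick9} by applying it, for each fixed $B\in G$ with $P(B)>0$, to the random value $\zeta^B:=\xi\chi_B=\zeta^{B,+}-\zeta^{B,-}$. The bridge is that $E^Q\{\xi|G\}=0$ is equivalent, by the defining property of conditional expectation, to $\int_B\xi\,dQ=E^Q[\zeta^B]=0$ for every $B\in G$; and the partition $\Omega^{B,+}=B\cap\{\xi>0\}$, $\Omega^{B,-}=(\Omega\setminus B)\cup\{\xi\leq 0\}$ appearing in the statement is precisely the sign decomposition of $\zeta^B$, since $\{\zeta^B>0\}=\Omega^{B,+}$ and $\{\zeta^B\leq 0\}=\Omega^{B,-}$.

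For the necessity direction I would fix $B\in G$ with $P(B)>0$, note that $\zeta^B$ is $P$-integrable since $\xi$ is, and verify that both $P(\zeta^B>0)$ and $P(\zeta^B<0)$ are positive; this follows because $Q\sim P$ with $E^Q[\zeta^B]=0$ forces $\zeta^B$ to change sign on a set of positive $P$-measure, ruling out $B\cap\{\xi>0\}$ or $B\cap\{\xi<0\}$ being $P$-null. These are exactly the conditions (\ref{vitanick10}) for $\zeta^B$, with $\zeta^{B,\pm}$ playing the roles of $\xi^{\pm}$. Invoking the canonical representation of Lemma \ref{vitanick9} for the equivalent measure $Q$ with $E^Q[\zeta^B]=0$ and identifying the resulting density with $\alpha_1$ and normalization with $d^B$ via (\ref{100vitanick15})--(\ref{100vitanick16}) yields (\ref{100vitanick14}).

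For the sufficiency direction I would start from the hypothesis that the representation (\ref{100vitanick14}) holds for every $B\in G$ with $P(B)>0$. Choosing $B=\Omega$ and using the canonical form from Lemma \ref{vitanick9} delivers a density $\psi$ with $P(\{\psi>0\})=1$, so $Q\sim P$. For a general $B$, a direct substitution in (\ref{100vitanick14}) paralleling the computation (\ref{vitanick21}) in the proof of Lemma \ref{vitanick9} --- using $\xi=\zeta^{B,+}$ on $\Omega^{B,+}$, $\xi=-\zeta^{B,-}$ on $B\cap\{\xi\leq 0\}$, and $\xi\chi_{\Omega\setminus B}=0$ --- gives $\int_B\xi\,dQ=0$. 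Since this holds for every $B\in G$, it forces $E^Q\{\xi|G\}=0$ by the defining property of conditional expectation.

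The main obstacle I anticipate is the two-sided nondegeneracy of $\zeta^B$ needed to invoke Lemma \ref{vitanick9}: a priori a set $B\in G$ could have positive $P$-mass but miss either $\{\xi>0\}$ or $\{\xi<0\}$, in which case $\zeta^B$ fails the hypothesis (\ref{vitanick10}). The argument above disposes of this by combining $E^Q\{\xi|G\}=0$ with $Q\sim P$ to force both $P(B\cap\{\xi>0\})>0$ and $P(B\cap\{\xi<0\})>0$ whenever $P(B\cap\{\xi\neq 0\})>0$; the residual degenerate case $B\subset\{\xi=0\}$ up to $P$-null sets requires a separate remark, since the representation (\ref{100vitanick14}) then trivializes to $0$ while $\int_B\xi\,dQ=0$ holds automatically, so the formula should be understood modulo this obvious degeneracy.
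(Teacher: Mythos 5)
Your proposal is correct and follows essentially the same route as the paper: both reduce the conditional statement to the unconditional Lemma \ref{vitanick9} applied to $\xi\chi_B=\zeta^{B,+}-\zeta^{B,-}$ for each $B\in G$ with $P(B)>0$, using that $E^Q\{\xi|G\}=0$ is equivalent to $\int_B\xi\,dQ=0$ for all such $B$, and the paper's verification of (\ref{100vitanick14})--(\ref{100vitanick16}) is exactly your substitution of the canonical $\alpha_1$. Your closing remark about the degenerate case $B\subseteq\{\xi=0\}$ (mod $P$-null sets), where the two-sided sign condition (\ref{vitanick10}) fails for $\xi\chi_B$, is a point the paper's own necessity argument silently glosses over, so flagging it is a genuine improvement rather than a defect of your proof.
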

\begin{proof} The necessity. Suppose that the condition (\ref{100vitanick13}) is true. Then, for every $B \in G, \ P(B)>0,$ we have
\begin{eqnarray}\label{100vitanick22}
\int\limits_{B}\xi(\omega)\psi(\omega)dP(\omega)=0,
\end{eqnarray}
or,
\begin{eqnarray}\label{100vitanick23}
\int\limits_{B\cap\{\xi(\omega)>0\}}\xi(\omega)\psi(\omega)dP(\omega)=
-\int\limits_{B\cap\{\xi(\omega)\leq 0\}}\xi(\omega)\psi(\omega)dP(\omega).
\end{eqnarray}
From the equality $ P(B)=P(B\cap\{\xi(\omega)>0\})+P(B\cap\{\xi(\omega)\leq 0\})$
and the equalities (\ref{100vitanick21}), (\ref{100vitanick23}), it follows that $P(B\cap\{\xi(\omega)>0\})>0$ and  $P(B\cap\{\xi(\omega)\leq 0\})>0.$
Therefore, the equality (\ref{100vitanick23}) can be written in the form
\begin{eqnarray}\label{100vitanick24}
0< d^B=\int\limits_{\Omega^{B,+}}\zeta^{B,+}(\omega_2)\psi_2(\omega_2)dP(\omega_2)=
\int\limits_{\Omega^{B,-}}\zeta^{B,+}(\omega_1)\psi(\omega_1)dP(\omega_1).
\end{eqnarray}
Define $\alpha_1(\omega_1,\omega_2)$ by the formula (\ref{100vitanick15}) and prove that the formula (\ref{100vitanick14}) coincide with the formula (\ref{100vitanick21}) for all  $ A \in {\cal F}.$
 But, if to substitute  the expression for  $\alpha_1(\omega_1,\omega_2)$ defined  by the formula (\ref{100vitanick15}) into the formula (\ref{100vitanick14}) and to take into account the expression for $d^B,$ we obtain

$$Q(A)=\int\limits_{\Omega^{B,-}}\chi_{A}(\omega_1)\psi_1(\omega_1)dP(\omega_1)+\int\limits_{\Omega^{B,+}}\chi_{A}(\omega_2)\psi_2(\omega_2)dP(\omega_2)=$$
\begin{eqnarray}\label{100vitanick25}
\int\limits_{A\cap\Omega^{B,-}}\psi(\omega)dP(\omega)+\int\limits_{A\cap\Omega^{ B,+}}\psi(\omega)dP(\omega)=\int\limits_{A}\psi(\omega)dP(\omega).
\end{eqnarray}
The last proves the necessity.

The sufficiency. From the equality 
\begin{eqnarray}\label{100vitanick26}
\chi_{B}\xi(\omega)=\zeta^{B,+}(\omega) - \zeta^{B,-}(\omega)
\end{eqnarray}
 for the measure  $Q,$ given by the formula (\ref{100vitanick14}), it follows the equality
\begin{eqnarray}\label{100vitanick27}
E^Q\chi_{B}\xi(\omega)=0, \quad B \in G.
\end{eqnarray}
The last means that $E^Q\{\xi(\omega)|G\}=0.$
Lemma \ref{101vitanick13} is proved.
\end{proof}

For further investigations, the next Theorem \ref{nick1} is very important \cite{GoncharNick}.
\begin{thm}\label{nick1}
The necessary and sufficient conditions of the local regularity of the nonnegative super-martingale $\{f_m, {\cal F}_m\}_{m=0}^\infty$ relative to a  set of equivalent measures $M$ are the existence of ${\cal F}_m$-measurable random values $\xi_m^0 \in A_0, \  m=\overline{1, \infty},$ such that
\begin{eqnarray}\label{nick2}
\frac{f_m}{f_{m-1}} \leq \xi_m^0, \quad E^P\{\xi_m^0|{\cal F}_{m-1}\}=1, \quad P\in M, \quad m=\overline{1, \infty}.
\end{eqnarray}
\end{thm}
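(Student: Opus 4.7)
The plan is to reduce Theorem \ref{nick1} to the additive criterion already established in Theorem \ref{reww1}, by translating between an additive compensator process $\bar g^0_m \geq 0$ satisfying (\ref{o1}) and a multiplicative majorant $\xi_m^0$ of the ratio $f_m/f_{m-1}$, via the bijective correspondence $\bar g^0_m = f_{m-1}\xi_m^0 - f_m$. Once this dictionary is in place both directions reduce to routine algebra plus integrability bookkeeping.

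For the sufficiency direction, I would start from a family $\{\xi_m^0\}$ satisfying (\ref{nick2}) and define $\bar g^0_m := f_{m-1}\xi_m^0 - f_m$. Nonnegativity of $\bar g^0_m$ is immediate from $f_m/f_{m-1}\leq \xi_m^0$, and ${\cal F}_m$-measurability is clear. Pulling the ${\cal F}_{m-1}$-measurable factor $f_{m-1}$ outside the conditional expectation and applying $E^P\{\xi_m^0|{\cal F}_{m-1}\}=1$ produces exactly (\ref{o1}). The uniform integrability condition $\sup_{P \in M}E^P\bar g^0_m<\infty$ follows from $E^P\bar g^0_m = E^Pf_{m-1}-E^Pf_m$ together with the standing hypothesis $\sup_{P\in M}E^P|f_m|<\infty$. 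Theorem \ref{reww1} then yields local regularity.

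For the necessity direction, Theorem \ref{reww1} supplies a process $\bar g^0_m \geq 0$ satisfying (\ref{o1}); I would then define
$$\xi_m^0 := \frac{f_m + \bar g^0_m}{f_{m-1}}\,\chi_{\{f_{m-1}>0\}} + \chi_{\{f_{m-1}=0\}}.$$
On $\{f_{m-1}>0\}$ the inequality $f_m/f_{m-1}\leq \xi_m^0$ is immediate from $\bar g^0_m \geq 0$, and a direct computation using (\ref{o1}) gives $E^P\{\xi_m^0|{\cal F}_{m-1}\} = (E^P\{f_m|{\cal F}_{m-1}\} + E^P\{\bar g^0_m|{\cal F}_{m-1}\})/f_{m-1} = 1$.

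The main obstacle, though minor, is to treat the exceptional set $\{f_{m-1}=0\}$ cleanly. Since $f_m\geq 0$ and $\{f_m\}$ is a super-martingale, $E^P\{f_m|{\cal F}_{m-1}\}\leq f_{m-1}=0$ forces $f_m=0$ almost surely on this set, and then (\ref{o1}) combined with $\bar g^0_m\geq 0$ forces $\bar g^0_m=0$ there as well. With the convention $0/0=0$, the ratio inequality in (\ref{nick2}) holds trivially and the value $\xi_m^0=1$ manifestly satisfies the required normalization. Finally, membership $\xi_m^0\in A_0$ follows from the three established properties: nonnegativity, ${\cal F}_m$-measurability, and $E^P\xi_m^0 = 1$ for every $P\in M$, the last equality being obtained by taking full expectation in the conditional normalization.
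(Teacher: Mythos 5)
Your proof is correct and follows essentially the same route as the paper: both directions rest on the correspondence $\bar g_m^0=f_{m-1}\xi_m^0-f_m$ between the multiplicative majorant and the additive compensator of Theorem \ref{reww1}, with $\xi_m^0=(f_m+\bar g_m^0)/f_{m-1}$ for the necessity and the identity $E^P\{\bar g_m^0|{\cal F}_{m-1}\}=f_{m-1}-E^P\{f_m|{\cal F}_{m-1}\}$ for the sufficiency. The only divergence is that the paper disposes of the degenerate set by the reduction $f_m\mapsto f_m+a$, whereas you handle $\{f_{m-1}=0\}$ directly by showing $f_m=\bar g_m^0=0$ there; your treatment is if anything the cleaner one, since the shifted ratio $(f_m+a)/(f_{m-1}+a)$ does not dominate $f_m/f_{m-1}$ pointwise.
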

\begin{proof} The necessity.  Without loss of generality, we assume that $f_m\geq a$ for a certain real number $a>0.$ Really, if it is not so, then we can come to the consideration of the super-martingale  $\{f_m+a, {\cal F}_m\}_{m=0}^\infty.$  Thus,   let  $\{f_m, {\cal F}_m\}_{m=0}^\infty$ be a   nonnegative  local regular super-martingale. Then, there exists a nonnegative adapted random process
$\{g_m\}_{m=0}^\infty, \ g_0=0,$ such that $\sup\limits_{P \in M}E^Pg_m<\infty,$
\begin{eqnarray}\label{nick3}
f_{m-1} - E^P\{f_m|{\cal F}_{m-1}\} =E^P\{g_m|{\cal F}_{m-1}\}, \quad P \in M, \quad m=\overline{1, \infty}.
\end{eqnarray}
Let us put $\xi_m^0=\frac{f_m+g_m}{f_{m-1}}, \ m=\overline{1, \infty}.$ Then,  $\xi_m^0 \in A_0 $ and 
from the equalities  (\ref{nick3}) we obtain $E^P\{\xi_m^0|{\cal F}_{m-1}\}=1, \ P \in M, \ m=\overline{1, \infty}.$
It is evident that the inequalities (\ref{nick2}) are valid.

 The sufficiency. Suppose that the conditions of  Theorem \ref{nick1} are valid.
Then,   $ f_m \leq f_{m-1}+ f_{m-1}(\xi_m^0 -1).$
Introduce the  denotation  $g_m= -f_m+ f_{m-1}\xi_m^0.$ Then,  $g_m \geq 0, $  $\sup\limits_{P \in M}E^Pg_m \leq \sup\limits_{P \in M}E^Pf_m + \sup\limits_{P \in M}E^Pf_{m-1}<\infty, \ m=\overline{1, \infty}.$  The last  equality  and the inequalities  give
\begin{eqnarray}\label{nick4}
f_m=f_0+\sum\limits_{i=1}^m f_{i-1}(\xi_i^0 -1) - \sum\limits_{i=1}^m g_{i}, \quad \ m=\overline{1, \infty}.
\end{eqnarray}
Let us consider  the random process  $\{M_m, {\cal F}_m\}_{m=0}^\infty,$ where $M_m=f_0+\sum\limits_{i=1}^m f_{i-1}(\xi_i^0 -1).$ Then,  $E^P\{M_m| {\cal F}_{m-1}\}$ $=M_{m-1}, \ P\in M, \ m=\overline{1, \infty}.$
Theorem \ref{nick1} is proved.
\end{proof}

\section{Completeness of the regular set of  measures.}

In the next two Lemma, we investigate the closure of  a   convex set of equivalent measures presented in Lemma \ref{vitanick9} by the formula (\ref{vitanick11}) that play the fundamental role in the definition of the completeness of the regular  set of measures. First, we consider the countable case.

Suppose that $\Omega_1 $ contains the  countable set  of elementary events and   let ${\cal F}_1$ be a $\sigma$-algebra of all subsets of the set $\Omega_1.$ Let $P_1$  be a measure on the $\sigma$-algebra ${\cal F}_1.$ We assume that $P_1(\omega_i)=p_i>0, \  i=\overline{1, \infty}.$
 On the probability space $\{\Omega_1, {\cal F}_1, P_1\}, $ let us consider a nonnegative random value $\xi_1,$ satisfying the conditions
$$0< P_1(\{\omega \in \Omega_1, \ \eta_1(\omega) <0\})<1,  \quad 0< P_1(\{\omega \in \Omega_1, \ \eta_1(\omega) >0\}),$$
\begin{eqnarray}\label{vitanick22} 
E^{P_1}|\eta_1(\omega)|<\infty,
\end{eqnarray}
where we introduced the denotation $ \eta_1(\omega)=\xi_1(\omega) - 1. $
On the measurable space $\{\Omega_1, {\cal F}_1\},$ let us consider  the set of measures $M_1,$  which are  equivalent to the measure $P_1$ and are given by the formula
$$Q(A)=\sum\limits_{\omega_1 \in  \Omega_1^-} \sum\limits_{\omega_2 \in \Omega_1^+}\chi_{A}(\omega_1) \alpha(\omega_1, \omega_2)\frac{\eta_1^+(\omega_2)}{\eta_1^-(\omega_1)+\eta_1^+(\omega_2)}P_1(\omega_1) P_1(\omega_2)+$$
\begin{eqnarray}\label{vitanick23} 
\sum\limits_{\omega_1 \in \Omega_1^-} \sum\limits_{\omega_2 \in \Omega_1^+} \chi_{A}(\omega_2)\alpha(\omega_1,\omega_2)\frac{\eta_1^-(\omega_1)}{\eta_1^-(\omega_1)+\eta_1^+(\omega_2)}P_1(\omega_1) P_1(\omega_2), \quad A \in {\cal F}_1,
\end{eqnarray}
 where $\eta(\omega)=\eta_1^+(\omega) -\eta_1^-(\omega),$ \ $\Omega_1^+=\{\omega, \eta_1(\omega)>0\}, \ \Omega_1^-=\{\omega, \eta_1(\omega)\leq0\}.$
Introduce the denotations   ${\cal F}_1^+=\Omega_1^+\cap{\cal F}_1,$  ${\cal F}_1^-=\Omega_1^-\cap{\cal F}_1.$  Let  $P_1^-$ be a contraction of the measure $P_1$ on the $\sigma$-algebra ${\cal F}_1^-$   and let  $P_1^+$ be a contraction of the measure $P_1$ on the $\sigma$-algebra ${\cal F}_1^+.$
On the probability space $\{  \Omega_1^- \times \Omega_1^+, {\cal F}_1^-\times {\cal F}_1^+,  P_1^-\times P_1^+\},$ the set of  random value  $\alpha(\omega_1, \omega_2)$  satisfy the conditions
\begin{eqnarray}\label{vitanick24}
P_1\times P_1(\{(\omega_1,\omega_2) \in \Omega_1^- \times \Omega_1^+, \    \alpha(\omega_1, \omega_2)>0\})=P_1(\Omega_1^+)P_1(\Omega_1^-), 
\end{eqnarray}
\begin{eqnarray}\label{vitanick25}  \sum\limits_{\omega_1 \in \Omega^-}\sum\limits_{\omega_2\ \in \Omega^+}\alpha(\omega_1, \omega_2)\frac{\eta_1^-(\omega_1)\eta_1^+(\omega_2)}{\eta_1^-(\omega_1)+\eta_1^+(\omega_2)}P_1(\omega_1)
P_1(\omega_2)<\infty,
\end{eqnarray}
\begin{eqnarray}\label{vitanick26}
\sum\limits_{\omega_1 \in \Omega_1^-}\sum\limits_{\omega_2 \in \Omega_1^+}\alpha(\omega_1, \omega_2)P_1(\omega_1)
P_1(\omega_2)=1.
\end{eqnarray}
On the probability space $\{  \Omega_1^- \times \Omega_1^+, {\cal F}_1^-\times {\cal F}_1^+,  P_1^-\times P_1^+\},$ all  the bounded random values $\alpha(\omega_1, \omega_2)$  the above conditions satisfy.
Introduce into the set of all measures on $\{\Omega_1, {\cal F}_1\}$ the metrics
\begin{eqnarray}\label{vitanick27}
\rho(Q_1, Q_2)=\sum\limits_{i=1}^\infty|Q_1(\omega_i) - Q_2(\omega_i)|.
\end{eqnarray}
\begin{lemma}\label{vitanick28}
 The closure of the set of measures $M_1$  in metrics (\ref{vitanick27}) contains the set of measures
\begin{eqnarray}\label{vitanick29}
\mu_{\omega_1, \omega_2}(A)=\chi_{A}(\omega_1) \frac{\eta_1^+(\omega_2)}{\eta_1^-(\omega_1) +\eta_1^+(\omega_2)}+\chi_{A}(\omega_2) \frac{\eta_1^-(\omega_1)}{\eta_1^-(\omega_1) +\eta_1^+(\omega_2)}
\end{eqnarray}
for   $\omega_1 \in \Omega_1^-,$  $ \omega_2 \in \Omega_1^+,$ $A\in {\cal F}_1.$ 
For every bounded random value $f(\omega),$ the closure of the set of points $E^Qf, \ Q \in M_1,$ in metrics $\rho(x,y)=|x-y|, \ x, y \in R^1,$ contains the points 
$E^{\mu_{\omega_1, \omega_2}}f, \  (\omega_1, \omega_2) \in \Omega_1^-\times \Omega_1^+.$
\end{lemma}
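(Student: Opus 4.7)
The plan is to approximate each proposed limit measure $\mu_{\omega_1^0,\omega_2^0}$ explicitly by a sequence in $M_1$ obtained by concentrating the weight function $\alpha(\cdot,\cdot)$ at the single pair $(\omega_1^0,\omega_2^0)\in\Omega_1^-\times\Omega_1^+$, and adding a small strictly positive background to preserve the strict positivity requirement (\ref{vitanick24}). For the background, I would first exhibit one admissible reference weight $\alpha_0$: the constant $\alpha_0\equiv [P_1(\Omega_1^-)P_1(\Omega_1^+)]^{-1}$ works, since it is strictly positive, immediately satisfies (\ref{vitanick26}), and using the elementary bound $\frac{ab}{a+b}\le\min(a,b)$ together with the assumption $E^{P_1}|\eta_1|<\infty$ verifies (\ref{vitanick25}); denote by $Q_0\in M_1$ the corresponding measure via (\ref{vitanick23}).

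Next, for fixed $(\omega_1^0,\omega_2^0)$ and $\varepsilon\in(0,1)$ I would set
$$\alpha_\varepsilon(\omega_1,\omega_2)=\frac{1-\varepsilon}{P_1(\omega_1^0)P_1(\omega_2^0)}\,\chi_{\{(\omega_1^0,\omega_2^0)\}}(\omega_1,\omega_2)+\varepsilon\,\alpha_0(\omega_1,\omega_2).$$
Because the second summand is strictly positive everywhere on $\Omega_1^-\times\Omega_1^+$, condition (\ref{vitanick24}) holds. Both (\ref{vitanick25}) and (\ref{vitanick26}) are linear in $\alpha$, so they reduce to the corresponding facts for $\alpha_0$ (already checked) plus a single finite term coming from the point $(\omega_1^0,\omega_2^0)$, which is trivially finite. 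Hence $\alpha_\varepsilon$ is admissible and produces a measure $Q_\varepsilon\in M_1$.

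Substituting $\alpha_\varepsilon$ into (\ref{vitanick23}) and exploiting linearity in $\alpha$, the delta piece has its prefactor $P_1(\omega_1^0)P_1(\omega_2^0)$ cancelled by the factor appearing in (\ref{vitanick23}) and reproduces exactly $\mu_{\omega_1^0,\omega_2^0}(A)$ as defined in (\ref{vitanick29}); the background piece reproduces $Q_0(A)$. Therefore
$$Q_\varepsilon=(1-\varepsilon)\,\mu_{\omega_1^0,\omega_2^0}+\varepsilon\, Q_0\in M_1.$$
Since both $\mu_{\omega_1^0,\omega_2^0}$ and $Q_0$ are probability measures on the countable set $\Omega_1$,
$$\rho(Q_\varepsilon,\mu_{\omega_1^0,\omega_2^0})=\varepsilon\sum_{i=1}^\infty\bigl|Q_0(\omega_i)-\mu_{\omega_1^0,\omega_2^0}(\omega_i)\bigr|\le 2\varepsilon\to 0,$$
which places $\mu_{\omega_1^0,\omega_2^0}$ in the closure of $M_1$. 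The second assertion is then immediate, for any bounded $f$,
$$\bigl|E^{Q_\varepsilon}f-E^{\mu_{\omega_1^0,\omega_2^0}}f\bigr|\le\sup_{\omega}|f(\omega)|\cdot\rho(Q_\varepsilon,\mu_{\omega_1^0,\omega_2^0})\to 0.$$
There is no real obstacle in this argument; the only point requiring care is verifying cleanly that the delta-like part of $\alpha_\varepsilon$, once inserted into (\ref{vitanick23}), reproduces precisely the weights $\eta_1^+(\omega_2^0)/[\eta_1^-(\omega_1^0)+\eta_1^+(\omega_2^0)]$ and $\eta_1^-(\omega_1^0)/[\eta_1^-(\omega_1^0)+\eta_1^+(\omega_2^0)]$ prescribed in (\ref{vitanick29}), which is a direct substitution (the denominator is automatically strictly positive since $\omega_2^0\in\Omega_1^+$ forces $\eta_1^+(\omega_2^0)>0$).
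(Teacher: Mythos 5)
Your proposal is correct and follows essentially the same route as the paper: both approximate $\mu_{\omega_1^0,\omega_2^0}$ by measures in $M_1$ whose weight $\alpha$ places mass $1-\varepsilon$ on the single pair $(\omega_1^0,\omega_2^0)$ and mass $\varepsilon$ on a uniform background, then let $\varepsilon\to 0$ in the total-variation metric. Your packaging of the approximant as the convex combination $(1-\varepsilon)\mu_{\omega_1^0,\omega_2^0}+\varepsilon Q_0$ is a slightly cleaner bookkeeping (yielding the bound $2\varepsilon$ rather than the paper's $4\varepsilon$), but the idea is identical.
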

\begin{proof}
Let us choose the set of equivalent measures $Q^{\varepsilon} $ defined by  $\alpha^{\varepsilon}(\omega_1, \omega_2),  0< \varepsilon<1,$ and  given by the law:

$$\alpha^{\varepsilon}(\omega_1^0, \omega_2^0)=\frac{1-\varepsilon}{P_1(\omega_1^0) P_1(\omega_2^0)}, \quad \omega_1^0 \in \Omega_1^-, \quad \omega_2^0 \in \Omega_1^+,$$ 
$$ \alpha^{\varepsilon}(\omega_1, \omega_2)=\varepsilon \alpha_0^{\varepsilon}(\omega_1, \omega_2),     \quad  \alpha_0^{\varepsilon}(\omega_1, \omega_2)=\frac{1}{\sum\limits_{\omega_1\neq \omega_1^0}\sum\limits_{ \omega_\neq \omega_2^0} P(\omega_1)P(\omega_2)},  \quad (\omega_1, \omega_2)\neq (\omega_1^0, \omega_2^0),$$ $$ \quad \omega_1 \in \Omega_1^-, \quad \omega_2 \in \Omega_1^+.$$ 
It is evident that  $  \alpha^{\varepsilon}(\omega_1, \omega_2)>0, (\omega_1, \omega_2) \in \Omega_1^-\times \Omega_1^+,$ for every $1>\varepsilon>0,$ and satisfy the equality
\begin{eqnarray}\label{vitanick31}
\sum\limits_{(\omega_1, \omega_2) \in \Omega_1^-\times \Omega_1^+ }\alpha^{\varepsilon}(\omega_1, \omega_2)P_1(\omega_1)P_1(\omega_2)=1.
\end{eqnarray}
Then,
\begin{eqnarray}\label{vitanick32}
Q^\varepsilon(\omega_1^0)= \sum\limits_{\omega_2 \in \Omega_1^+} \alpha^\varepsilon(\omega_1^0, \omega_2)\frac{\eta_1^+(\omega_2)}{\eta_1^-(\omega_1^0)+\eta_1^+(\omega_2)}P_1(\omega_1^0) P_1(\omega_2),
\end{eqnarray}
\begin{eqnarray}\label{vitanick33} 
Q^\varepsilon(\omega_2^0)= \sum\limits_{\omega_1 \in \Omega_1^-} \alpha^\varepsilon(\omega_1,\omega_2^0)\frac{\eta_1^-(\omega_1)}{\eta_1^-(\omega_1)+\eta_1^+(\omega_2^0)}P_1(\omega_1) P_1(\omega_2^0). 
\end{eqnarray}

$$Q^\varepsilon(\omega_1^0)=(1-\varepsilon)\frac{\eta_1^+(\omega_2^0)}{\eta_1^-(\omega_1^0)+\eta_1^+(\omega_2^0)} +$$
\begin{eqnarray}\label{vitanick34}
 \varepsilon \sum\limits_{\omega_2 \in \Omega_1^+, \omega_2 \neq \omega_2 ^0 } \alpha_0^\varepsilon(\omega_1^0, \omega_2)\frac{\eta_1^+(\omega_2)}{\eta_1^-(\omega_1^0)+\eta_1^+(\omega_2)}P_1(\omega_1^0) P_1(\omega_2),
\end{eqnarray}
 
$$Q^\varepsilon(\omega_2^0)= (1-\varepsilon)\frac{\eta_1^+(\omega_2^0)}{\eta_1^-(\omega_1^0)+\eta_1^+(\omega_2^0)} +$$
\begin{eqnarray}\label{vitanick35}
 \varepsilon \sum\limits_{\omega_1 \in \Omega_1^-, \omega_1\neq \omega_1^0} \alpha_0^\varepsilon(\omega_1,\omega_2^0)\frac{\eta_1^-(\omega_1)}{\eta_1^-(\omega_1)+\eta_1^+(\omega_2^0)}P_1(\omega_1) P_1(\omega_2^0). 
\end{eqnarray}

If $\omega_1\neq \omega_1^0, \omega_2\neq \omega_2^0,$ then

\begin{eqnarray}\label{vitanick32}
Q^\varepsilon(\omega_1)= \varepsilon\sum\limits_{\omega_2 \in \Omega_1^+} \alpha_0^\varepsilon(\omega_1, \omega_2)\frac{\eta_1^+(\omega_2)}{\eta_1^-(\omega_1^0)+\eta_1^+(\omega_2)}P_1(\omega_1) P_1(\omega_2),
\end{eqnarray}
\begin{eqnarray}\label{vitanick33} 
Q^\varepsilon(\omega_2)=\varepsilon \sum\limits_{\omega_1 \in \Omega_1^-} \alpha_0^\varepsilon(\omega_1,\omega_2)\frac{\eta_1^-(\omega_1)}{\eta_1^-(\omega_1)+\eta_1^+(\omega_2)}P_1(\omega_1) P_1(\omega_2). 
\end{eqnarray}

The distance between the measures $Q^\varepsilon$ and $\mu_{\omega_1^0,\omega_2^0}$ is given by the formula

$$\rho(Q^\varepsilon, \mu_{\omega_1^0,\omega_2^0})= \varepsilon+$$

$$ \varepsilon \sum\limits_{\omega_2 \in \Omega_1^+, \omega_2 \neq \omega_2 ^0 } \alpha_0^\varepsilon(\omega_1^0, \omega_2)\frac{\eta_1^+(\omega_2)}{\eta_1^-(\omega_1^0)+\eta_1^+(\omega_2)}P_1(\omega_1^0) P_1(\omega_2)+$$
$$ \varepsilon \sum\limits_{\omega_1 \in \Omega_1^-, \omega_1\neq \omega_1^0} \alpha_0^\varepsilon(\omega_1,\omega_2^0)\frac{\eta_1^-(\omega_1)}{\eta_1^-(\omega_1)+\eta_1^+(\omega_2^0)}P_1(\omega_1) P_1(\omega_2^0)+$$
$$ \varepsilon\sum\limits_{\omega_1 \in \Omega_1^-, \omega_1\neq \omega_1^0}\sum\limits_{\omega_2 \in \Omega_1^+} \alpha_0^\varepsilon(\omega_1, \omega_2)\frac{\eta_1^+(\omega_2)}{\eta_1^-(\omega_1^0)+\eta_1^+(\omega_2)}P_1(\omega_1) P_1(\omega_2)+$$
\begin{eqnarray}\label{vitanick34}
\varepsilon \sum\limits_{\omega_2 \in \Omega_1^+, \omega_2 \neq \omega_2^0}\sum\limits_{\omega_1 \in \Omega_1^-} \alpha_0^\varepsilon(\omega_1,\omega_2)\frac{\eta_1^-(\omega_1)}{\eta_1^-(\omega_1)+\eta_1^+(\omega_2)}P_1(\omega_1) P_1(\omega_2).  
\end{eqnarray}
Since
$$ \sum\limits_{\omega_2 \in \Omega_1^+, \omega_2 \neq \omega_2 ^0 } \alpha_0^\varepsilon(\omega_1^0, \omega_2)\frac{\eta_1^+(\omega_2)}{\eta_1^-(\omega_1^0)+\eta_1^+(\omega_2)}P_1(\omega_1^0) P_1(\omega_2)+$$
$$  \sum\limits_{\omega_1 \in \Omega_1^-, \omega_1\neq \omega_1^0} \alpha_0^\varepsilon(\omega_1,\omega_2^0)\frac{\eta_1^-(\omega_1)}{\eta_1^-(\omega_1)+\eta_1^+(\omega_2^0)}P_1(\omega_1) P_1(\omega_2^0)\leq 1,$$
$$ \sum\limits_{\omega_1 \in \Omega_1^-, \omega_1\neq \omega_1^0}\sum\limits_{\omega_2 \in \Omega_1^+} \alpha_0^\varepsilon(\omega_1, \omega_2)\frac{\eta_1^+(\omega_2)}{\eta_1^-(\omega_1^0)+\eta_1^+(\omega_2)}P_1(\omega_1) P_1(\omega_2)\leq 1,$$
$$  \sum\limits_{\omega_2 \in \Omega_1^+, \omega_2 \neq \omega_2^0}\sum\limits_{\omega_1 \in \Omega_1^-} \alpha_0^\varepsilon(\omega_1,\omega_2)\frac{\eta_1^-(\omega_1)}{\eta_1^-(\omega_1)+\eta_1^+(\omega_2)}P_1(\omega_1) P_1(\omega_2)\leq 1, $$
 we obtain
 $$\rho(Q^\varepsilon, \mu_{\omega_1^0,\omega_2^0}) \leq 4\varepsilon.$$
Let us prove the second part of Lemma \ref{vitanick28}.
It is evident that the inequality
\begin{eqnarray}\label{nicvick1} 
|E^{Q^\varepsilon}f - E^{\mu_{\omega_1, \omega_2}}f| \leq 4 \varepsilon \sup\limits_{\omega \in \Omega_1}|f(\omega)|
\end{eqnarray}
is true. Due to arbitrariness of the  small $\varepsilon,$ Lemma \ref{vitanick28} is proved.
\end{proof}

\begin{defin}
Let $\{\Omega_1, {\cal F}_1\}$ be a measurable space. The  decomposition $A_{n,k}, \  n,k=\overline{1, \infty},$ of the space $\Omega_1$  we call exhaustive one if the following conditions are valid:\\
1) $A_{n,k} \in {\cal F}_1,$ \ $ A_{n,k}\cap A_{n,s}=\emptyset, \ k\neq s,$ \ $\bigcup\limits_{k=1}^\infty A_{n,k}=\Omega_1, \ n=\overline{1, \infty};$\\
2) the $(n+1)$-th decomposition is a sub-decomposition of the $n$-th one, that is, for every $j,$ $A_{n+1,j} \subseteq  A_{n,k}$ for a certain $k=k(j);$\\
3) the minimal $\sigma$-algebra containing all $A_{n,k}, \ n,k=\overline{1, \infty},$ coincides with ${\cal F}_1.$
\end{defin}
The next Remark \ref{g1} is important for the construction of the filtration having the exhaustive decomposition.
\begin{remark}\label{g1}
Suppose that the measurable spaces $\{\Omega_1, {\cal F}_1\}$ and  $\{\Omega_2, {\cal F}_2\}$ have the exhaustive decompositions $A_{n,k}^1, \  n,k=\overline{1, \infty},$ and  $A_{m,s}^2, \  m,s=\overline{1, \infty},$ then the measurable space
 $\{\Omega_1\times \Omega_2, {\cal F}_1\times {\cal F}_2\}$ also have the exhaustive decomposition $B_{n, ks} , \  n=\overline{1, \infty},  k,s=\overline{1, \infty},$ 
$B_{n, ks}=A_{n,k}^1\times A_{n,s}^2, \ k,s=\overline{1, \infty}, \   n=\overline{1, \infty}. $ Really, \\
1) $A_{n,k}^1\times A_{n,s}^2 \in {\cal F}_1\times {\cal F}_2, \ A_{n,k}^1\times A_{n,s}^2\cap A_{n,t}^1\times A_{n,r}^2=\emptyset, \ (k, s)\neq (t,r),$\\
$\bigcup\limits_{k,s=1}^\infty B_{n,ks}=\Omega_1\times \Omega_2, \ n=\overline{1, \infty};$\\
2) the $(n+1)$-th  decomposition is a sub-decomposition of the $n$-th one, that is, for every $k,s$ \  $B_{n+1,ks} \subseteq  B_{n,ij}$ for a certain $i=i(k), j= j(s)$;\\
3)  the minimal $\sigma$-algebra containing all $B_{n,ks}, \ n,k,s=\overline{1, \infty},$ coincides with ${\cal F}_1\times {\cal F}_2.$
\end{remark}

\begin{lemma}\label{2vitanick1}
Let a measurable space $\{\Omega, {\cal F}\}$ have an exhaustive decomposition and let  $\xi$ be an integrable random value relative to the measure $P,$ satisfying the conditions (\ref{vitanick10}). Then, the closure of the set of measure $Q,$ given by the formula  (\ref{vitanick11}), relative to the pointwise  convergence of measures  contains the set of measures 

$$\nu_{(\omega_1,\omega_2)}(A)=\chi_{A}(\omega_1)\frac{\xi^+(\omega_2)}{\xi^-(\omega_1)+\xi^+(\omega_2)}+$$
\begin{eqnarray}\label{2vitanick2}
\chi_{A}(\omega_2)\frac{\xi^-(\omega_1)}{\xi^-(\omega_1)+\xi^+(\omega_2)},\quad A \in {\cal F}, \quad (\omega_1,\omega_2) \in  \Omega^-\times  \Omega^+,
\end{eqnarray}
 for those $(\omega_1, \omega_2) \in   \Omega^-\times  \Omega^+$ which have the full measure $\mu= P^-\times P^+.$
For every integrable finite valued  random value $f(\omega)$ relative to all measures $Q,$  the closure in metrics $\rho(x_1,x_2)=|x_1-x_2|, $ $x_1, x_2 \in R^1,$ of the set of  real numbers
$$\int\limits_{\Omega^-}\int\limits_{\Omega^+}f(\omega_1)\alpha(\omega_1, \omega_2)\frac{\xi^+(\omega_2)}{\xi^-(\omega_1)+\xi^+(\omega_2)}d\mu(\omega_1,\omega_2)+$$
\begin{eqnarray}\label{2vitanick3} \int\limits_{\Omega^-}\int\limits_{\Omega^+}f(\omega_2)\alpha(\omega_1, \omega_2)\frac{\xi^-(\omega_1)}{\xi^-(\omega_1)+\xi^+(\omega_2)}d\mu(\omega_1,\omega_2),
\end{eqnarray}
when $\alpha(\omega_1, \omega_2)$ runs over all random values satisfying the conditions
(\ref{vitanick12}), (\ref{vitanick14}), contains the set of numbers
$$f(\omega_1)\frac{\xi^+(\omega_2)}{\xi^-(\omega_1)+\xi^+(\omega_2)}+$$
\begin{eqnarray}\label{2vitanick4}
f(\omega_2)\frac{\xi^-(\omega_1)}{\xi^-(\omega_1)+\xi^+(\omega_2)},\ (\omega_1,\omega_2) \in  \Omega^-\times  \Omega^+.
\end{eqnarray}
\end{lemma}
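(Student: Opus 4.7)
The plan is to mimic the construction in Lemma \ref{vitanick28}, replacing singleton atoms by shrinking cells of the exhaustive decomposition. By Remark \ref{g1}, the product space $\Omega^-\times\Omega^+$ inherits an exhaustive decomposition $B_{n,ks}=A_{n,k}^-\times A_{n,s}^+$, and the associated refining $\sigma$-algebras $\mathcal{G}_n=\sigma(\{B_{n,ks}\}_{k,s=1}^\infty)$ satisfy $\bigvee_{n=1}^\infty\mathcal{G}_n=\mathcal{F}^-\times\mathcal{F}^+$. Fix $(\omega_1^0,\omega_2^0)\in\Omega^-\times\Omega^+$ and let $B_n^0$ denote the unique cell at level $n$ containing this point; these cells shrink with $n$.

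Next I fix a bounded strictly positive density $\beta$ on $\Omega^-\times\Omega^+$ satisfying (\ref{vitanick12})--(\ref{vitanick14}); one such exists by the assumption on bounded $\alpha$'s stated before Lemma \ref{vitanick9}. Define the approximating densities
\[
\alpha^{\varepsilon,n}(\omega_1,\omega_2)=(1-\varepsilon)\frac{\chi_{B_n^0}(\omega_1,\omega_2)}{\mu(B_n^0)}+\varepsilon\,\beta(\omega_1,\omega_2),\qquad 0<\varepsilon<1.
\]
These are $\mu$-a.e.\ positive, satisfy (\ref{vitanick14}), and verify (\ref{vitanick13}) because $\xi^-\xi^+/(\xi^-+\xi^+)\le\min(\xi^-,\xi^+)$ is $\mu$-integrable. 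Plugging $\alpha^{\varepsilon,n}$ into (\ref{vitanick11}) produces a measure $Q^{\varepsilon,n}$ in the admitted family.

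The key step is passing to the limit. Set
\[
G_A(\omega_1,\omega_2):=\chi_A(\omega_1)\frac{\xi^+(\omega_2)}{\xi^-(\omega_1)+\xi^+(\omega_2)}+\chi_A(\omega_2)\frac{\xi^-(\omega_1)}{\xi^-(\omega_1)+\xi^+(\omega_2)},
\]
which is bounded and $\mathcal{F}^-\times\mathcal{F}^+$-measurable; a direct computation from (\ref{vitanick11}) and the definition of $\alpha^{\varepsilon,n}$ gives
\[
Q^{\varepsilon,n}(A)=(1-\varepsilon)\,E^\mu\{G_A|\mathcal{G}_n\}(\omega_1^0,\omega_2^0)+\varepsilon\int G_A\,\beta\,d\mu.
\]
Martingale convergence for the refining sequence $\mathcal{G}_n$ yields $E^\mu\{G_A|\mathcal{G}_n\}\to G_A$ $\mu$-a.e.\ as $n\to\infty$. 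Because the exhaustive decomposition generates $\mathcal{F}^-\times\mathcal{F}^+$ through a countable algebra, I can pick a single $\mu$-null set $N$ off which the convergence holds simultaneously for every $A$ in that algebra; a standard monotone class argument then propagates the pointwise convergence to every $A\in\mathcal{F}^-\times\mathcal{F}^+$. Letting $n\to\infty$ and then $\varepsilon\to 0$, one obtains $Q^{\varepsilon,n}(A)\to G_A(\omega_1^0,\omega_2^0)=\nu_{(\omega_1^0,\omega_2^0)}(A)$ for all $A$ and all $(\omega_1^0,\omega_2^0)\notin N$, with $N$ of $\mu$-measure zero. The second assertion about the numbers (\ref{2vitanick3}) follows identically, with $G_A$ replaced by $G_f(\omega_1,\omega_2):=f(\omega_1)\xi^+(\omega_2)/[\xi^-(\omega_1)+\xi^+(\omega_2)]+f(\omega_2)\xi^-(\omega_1)/[\xi^-(\omega_1)+\xi^+(\omega_2)]$, which is $\mu$-integrable from $|G_f|\le|f(\omega_1)|+|f(\omega_2)|$ and the $Q$-integrability hypothesis on $f$.

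The main obstacle is precisely the conjunction \emph{martingale convergence plus exhaustion by a countable algebra}: without the exhaustiveness hypothesis one could not guarantee $\bigvee_n\mathcal{G}_n=\mathcal{F}^-\times\mathcal{F}^+$, and without countable generation one could not collapse the $A$-dependent null sets into a single $\mu$-null set $N$ valid for all $A$ simultaneously. Everything else---positivity of $\alpha^{\varepsilon,n}$, verification of (\ref{vitanick12})--(\ref{vitanick14}), and the $\varepsilon$-mixture trick---is a direct adaptation of the concrete computation performed in Lemma \ref{vitanick28}.
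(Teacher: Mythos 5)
Your proposal is correct and follows essentially the same route as the paper's proof: build the exhaustive decomposition of $\Omega^-\times\Omega^+$ via Remark \ref{g1}, concentrate the density $\alpha$ on the shrinking cell containing $(\omega_1^0,\omega_2^0)$ mixed with a small strictly positive part, identify the resulting integral as a conditional expectation with respect to the refining $\sigma$-algebras, and invoke martingale convergence. The only cosmetic differences are that the paper uses the normalized indicator of the complement of the cell in place of your fixed density $\beta$ and couples $\varepsilon_n$ to $n$, while you are somewhat more explicit about collapsing the $A$-dependent null sets into a single one.
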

\begin{proof}
On a probability space   $\{\Omega, {\cal F}, P\},$ 
let $\xi$ be an integrable  random value, satisfying the conditions (\ref{vitanick10}). As before,
let $\Omega^+=\{\omega, \xi(\omega)>0\}, \ \Omega^-=\{\omega, \xi(\omega)\leq 0\}$ and let ${\cal F}^-,$ $  {\cal F}^+$ be the restrictions of the $\sigma$-algebra ${\cal F}$ on the sets  $\Omega^-$ and  $\Omega^+,$ correspondingly. Suppose that $P^-$ and $P^+$ are the contractions of the measure $P$ on the $\sigma$-algebras ${\cal F}^-,$ $  {\cal F}^+,$ correspondingly.
Consider the probability space   $\{  \Omega^-\times  \Omega^+, {\cal F}^-\times{\cal F}^+, P^-\times P^+\}$ which is a direct product of the probability spaces  $\{\Omega^-, {\cal F}^-, P^-\}$ and $\{\Omega^+, {\cal F}^+, P^+\}.$ Due to Lemma \ref{2vitanick1} and Remark \ref{g1}, the measurable space   $\{\Omega^-\times  \Omega^+, {\cal F}^-\times{\cal F}^+\}$ has the exhaustive decomposition $B_{n,ks}, \  k,s=\overline{1,\infty}, \ n=\overline{1,\infty}. $ Denote ${\cal F}_n$ the minimal $\sigma$-algebra generated by decomposition $B_{n,ks}, \  k,s=\overline{1,\infty}. $ It is evident that ${\cal F}_n \subset {\cal F}_{n+1}.$ Moreover, $\sigma(\bigvee\limits_{n=1}^\infty {\cal F}_n)={\cal F}^-\times{\cal F}^+.$  On the probability space  
$\{  \Omega^-\times  \Omega^+, {\cal F}^-\times{\cal F}^+, P^-\times P^+\},$   for every integrable finite valued random value $ f(\omega_1,\omega_2)$  the sequence 
$E^{\mu}\{ f(\omega_1,\omega_2)|{\cal F}_n\}$ converges to $f(\omega_1,\omega_2)$ with probability one, as $n \to \infty,$ since it is a regular martingale. It is evident that for those $B_{n,ks}$ for which $\mu(B_{n,ks})\neq 0$
\begin{eqnarray}\label{2vitanick5}
E^{\mu}\{ f(\omega_1,\omega_2)|{\cal F}_n\}=\frac{\int\limits_{B_{n,ks}} f(\omega_1,\omega_2)d\mu}{\mu(B_{n,ks})}, \quad (\omega_1,\omega_2) \in B_{n,ks}.
\end{eqnarray}
Denote $D_0=\bigcup\limits_{n, k, s, \mu(B_{n,ks})=0}B_{n,ks}.$ It is evident that 
$\mu(D_0)=0.$ For every $(\omega_1,\omega_2) \in  \Omega^-\times  \Omega^+\setminus D_0,$ the formula (\ref{2vitanick5}) is well defined and is finite.
Let $D_1$ be the subset of the set $ \Omega^-\times  \Omega^+\setminus D_0,$ where the limit of the left hand  side of the formula (\ref{2vitanick5}) does not exists. Then, $\mu(D_1)=0.$ For every  $(\omega_1,\omega_2) \in  \Omega^-\times  \Omega^+\setminus (D_0 \cup D_1),$ the right hand side of the formula (\ref{2vitanick5}) converges to  $f(\omega_1,\omega_2).$
For   $(\omega_1,\omega_2) \in  \Omega^-\times  \Omega^+\setminus (D_0 \cup D_1),$ denote $A_n=A_n(\omega_1, \omega_2)$ those set  $B_{n,ks}$   for which  $(\omega_1,\omega_2) \in B_{n,ks}$ for a certain $k,s.$ Then, for every integrable finite valued $f(\omega_1,\omega_2)$
\begin{eqnarray}\label{2vitanick6}
\lim\limits_{n \to \infty} \frac{\int\limits_{A_n} f(\omega_1,\omega_2)d\mu}{\mu(A_n)}=f(\omega_1,\omega_2).
\end{eqnarray}
Let us consider the sequence
\begin{eqnarray}\label{2vitanick7}
\alpha_n^{\varepsilon_n}(\omega_1,\omega_2)=(1-\varepsilon_n)\frac{\chi_{A_n}(\omega_1,\omega_2)}{\mu(A_n)}+ \varepsilon_n\frac{\chi_{\Omega^-\times  \Omega^+\setminus A_n}(\omega_1,\omega_2)}{\mu(\Omega^-\times  \Omega^+\setminus A_n)},
\end{eqnarray}
where $0<\varepsilon_n<1,\  \lim\limits_{n \to \infty}\varepsilon_n=0.$
Such a sequence $\alpha_n^{\varepsilon_n}(\omega_1,\omega_2)$ satisfy the conditions
(\ref{vitanick12}) - (\ref{vitanick14}) and 

$$Q_n^{\varepsilon_n}(A)=\int\limits_{\Omega^-}\int\limits_{\Omega^+}\chi_{A}(\omega_1)\alpha_n^{\varepsilon_n}(\omega_1, \omega_2)\frac{\xi^+(\omega_2)}{\xi^-(\omega_1)+\xi^+(\omega_2)}d\mu(\omega_1,\omega_2)+$$

$$ \int\limits_{\Omega^-}\int\limits_{\Omega^+}\chi_{A}(\omega_2)\alpha_n^{\varepsilon_n}(\omega_1, \omega_2)\frac{\xi^-(\omega_1)}{\xi^-(\omega_1)+\xi^+(\omega_2)}d\mu(\omega_1,\omega_2)=$$

$$(1-\varepsilon_n)\frac{\int\limits_{A_n}\chi_{A}(\omega_1)\frac{\xi^+(\omega_2)}{\xi^-(\omega_1)+\xi^+(\omega_2)}d\mu(\omega_1,\omega_2)}{\mu(A_n)}+$$

 $$(1-\varepsilon_n)\frac{\int\limits_{A_n}\chi_{A}(\omega_2)\frac{\xi^-(\omega_1)}{\xi^-(\omega_1)+\xi^+(\omega_2)}d\mu(\omega_1,\omega_2)}{\mu(A_n)}+$$

$$\varepsilon_n\frac{\int\limits_{\Omega^-\times \Omega^+\setminus A_n}\chi_{A}(\omega_1)\frac{\xi^+(\omega_2)}{\xi^-(\omega_1)+\xi^+(\omega_2)}d\mu(\omega_1,\omega_2)}{\mu(\Omega^-\times \Omega^+\setminus A_n)}+$$
\begin{eqnarray}\label{2vitanick8}
 \varepsilon_n\frac{\int\limits_{\Omega^-\times \Omega^+\setminus A_n}\chi_{A}(\omega_2)\frac{\xi^-(\omega_1)}{\xi^-(\omega_1)+\xi^+(\omega_2)}d\mu(\omega_1,\omega_2)}{\mu(\Omega^-\times \Omega^+\setminus A_n)}.
\end{eqnarray}
From the  formula (\ref{2vitanick8}), we obtain

$$\lim\limits_{n \to \infty}Q_n^{\varepsilon_n}(A)=\chi_{A}(\omega_1)\frac{\xi^+(\omega_2)}{\xi^-(\omega_1)+\xi^+(\omega_2)}+$$
\begin{eqnarray}\label{2vitanick9}
\chi_{A}(\omega_2)\frac{\xi^-(\omega_1)}{\xi^-(\omega_1)+\xi^+(\omega_2)},\ A \in {\cal F}, \ (\omega_1,\omega_2) \in  \Omega^-\times  \Omega^+\setminus (D_0  \cup D_1).
\end{eqnarray}
Further,
$$E^{Q_n^{\varepsilon_n}}f(\omega)=\int\limits_{\Omega^-}\int\limits_{\Omega^+}
f(\omega_1)\alpha_n^{\varepsilon_n}(\omega_1, \omega_2)\frac{\xi^+(\omega_2)}{\xi^-(\omega_1)+\xi^+(\omega_2)}d\mu(\omega_1,\omega_2)+$$
$$ \int\limits_{\Omega^-}\int\limits_{\Omega^+}f(\omega_2)\alpha_n^{\varepsilon_n}(\omega_1, \omega_2)\frac{\xi^-(\omega_1)}{\xi^-(\omega_1)+\xi^+(\omega_2)}d\mu(\omega_1,\omega_2)=$$

$$(1-\varepsilon_n)\frac{\int\limits_{A_n}f(\omega_1)\frac{\xi^+(\omega_2)}{\xi^-(\omega_1)+\xi^+(\omega_2)}d\mu(\omega_1,\omega_2)}{\mu(A_n)}+$$

 $$(1-\varepsilon_n)\frac{\int\limits_{A_n}f(\omega_2)\frac{\xi^-(\omega_1)}{\xi^-(\omega_1)+\xi^+(\omega_2)}d\mu(\omega_1,\omega_2)}{\mu(A_n)}+$$

$$\varepsilon_n\frac{\int\limits_{\Omega^-\times \Omega^+\setminus A_n}f(\omega_1)\frac{\xi^+(\omega_2)}{\xi^-(\omega_1)+\xi^+(\omega_2)}d\mu(\omega_1,\omega_2)}{\mu(\Omega^-\times \Omega^+\setminus A_n)}+$$
\begin{eqnarray}\label{3vitanick9}
 \varepsilon_n\frac{\int\limits_{\Omega^-\times \Omega^+\setminus A_n}f(\omega_2)\frac{\xi^-(\omega_1)}{\xi^-(\omega_1)+\xi^+(\omega_2)}d\mu(\omega_1,\omega_2)}{\mu(\Omega^-\times \Omega^+\setminus A_n)}.
\end{eqnarray}
From the formula (\ref{3vitanick9}), we obtain

$$\lim\limits_{n \to \infty}E^{Q_n^{\varepsilon_n}}f(\omega)=f(\omega_1)\frac{\xi^+(\omega_2)}{\xi^-(\omega_1)+\xi^+(\omega_2)}+$$
\begin{eqnarray}\label{2vitanick9}
f(\omega_2)\frac{\xi^-(\omega_1)}{\xi^-(\omega_1)+\xi^+(\omega_2)},\ A \in {\cal F}, \ (\omega_1,\omega_2) \in  \Omega^-\times  \Omega^+\setminus (D_0  \cup D_1).
\end{eqnarray}
Lemma \ref{2vitanick1} is proved.
\end{proof}

The next Theorem \ref{koljavita1} is a consequence of Lemma \ref{vitanick9}.

\begin{thm}\label{koljavita1}
On the probability space   $\{\Omega, {\cal F}, P\},$ for the nonnegative random value 
$\xi\neq 1$   the set of measures $M_0$ on the measurable space   $\{\Omega, {\cal F}\},$ being equivalent to the measure $P,$  satisfies the condition
\begin{eqnarray}\label{koljavita2}
E^Q\xi=1, \quad Q \in M_0,
\end{eqnarray}
 if and only if
as for $Q \in M_0$ the representation
$$Q(A)=\int\limits_{\Omega^-}\int\limits_{\Omega^+}\chi_{A}(\omega_1)\alpha(\omega_1, \omega_2)\frac{(\xi-1)^+(\omega_2)}{(\xi-1)^-(\omega_1)+(\xi-1)^+(\omega_2)}d\mu(\omega_1,\omega_2)+$$
\begin{eqnarray}\label{koljavita3} \int\limits_{\Omega^-}\int\limits_{\Omega^+}\chi_{A}(\omega_2)\alpha(\omega_1, \omega_2)\frac{(\xi-1)^-(\omega_1)}{(\xi-1)^-(\omega_1)+(\xi-1)^+(\omega_2)}d\mu(\omega_1,\omega_2), \quad A \in {\cal F},
\end{eqnarray}
 is true, where on the measurable space $\{  \Omega^-\times  \Omega^+, {\cal F}^-\times{\cal F}^+, P^-\times P^+\},$  the random value $\alpha(\omega_1, \omega_2)$  satisfies the conditions
\begin{eqnarray}\label{koljavita4}
\mu(\{(\omega_1,\omega_2) \in \Omega^- \times \Omega^+, \    \alpha(\omega_1, \omega_2)>0\})=P(\Omega^+)P(\Omega^-), 
\end{eqnarray}
\begin{eqnarray}\label{koljavita5}  \int\limits_{\Omega^-}\int\limits_{\Omega^+}\alpha(\omega_1, \omega_2)\frac{(\xi-1)^-(\omega_1) (\xi-1)^+(\omega_2)}{(\xi-1)^-(\omega_1)+(\xi-1)^+(\omega_2)}d\mu(\omega_1,\omega_2)<\infty,
\end{eqnarray}

\begin{eqnarray}\label{koljavita6}
\int\limits_{\Omega^-}\int\limits_{\Omega^+}\alpha(\omega_1, \omega_2)d\mu(\omega_1,\omega_2)=1.
\end{eqnarray}
\end{thm}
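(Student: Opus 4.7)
The plan is to reduce Theorem \ref{koljavita1} to Lemma \ref{vitanick9} via the affine change of variable $\eta(\omega) = \xi(\omega) - 1$. Under this substitution, the condition $E^Q\xi = 1$ becomes $E^Q\eta = 0$; the sets $\{\eta>0\}$ and $\{\eta\leq 0\}$ coincide with $\{\xi>1\}$ and $\{\xi\leq 1\}$ respectively; and the positive and negative parts $\eta^{\pm}$ are exactly $(\xi-1)^{\pm}$ appearing in (\ref{koljavita3}). Consequently the representation (\ref{koljavita3}) together with the conditions (\ref{koljavita4})--(\ref{koljavita6}) on $\alpha$ are the verbatim specialization of (\ref{vitanick11}) and (\ref{vitanick12})--(\ref{vitanick14}) to the random variable $\eta$.

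Before invoking Lemma \ref{vitanick9}, I would verify that $\eta$ satisfies the standing hypothesis (\ref{vitanick10}), namely $0 < P(\eta > 0) < 1$ and $P(\eta \leq 0) > 0$. Assuming $M_0$ is nonempty, pick any $Q \in M_0$. Since $Q$ is equivalent to $P$ and $\xi \geq 0$ with $\xi \not\equiv 1$, the identity $E^Q\xi = 1$ forces both $Q(\xi > 1) > 0$ and $Q(\xi < 1) > 0$, for otherwise the expectation would strictly exceed or strictly fall short of $1$. By equivalence of $Q$ and $P$ this yields $P(\xi > 1) > 0$ and $P(\xi \leq 1) > 0$, which is exactly (\ref{vitanick10}) for $\eta$.

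With these preliminaries the necessity in Theorem \ref{koljavita1} follows from the first assertion of Lemma \ref{vitanick9} applied to $\eta$: any $Q \in M_0$ is equivalent to $P$ and satisfies $E^Q\eta = 0$, hence admits the representation (\ref{vitanick11}) with some $\alpha$ obeying (\ref{vitanick12})--(\ref{vitanick14}), which after reinstating $\xi - 1$ in place of $\eta$ is precisely (\ref{koljavita3})--(\ref{koljavita6}). The sufficiency follows from the second assertion of Lemma \ref{vitanick9}: every measure $Q$ given by (\ref{koljavita3}) with $\alpha$ satisfying (\ref{koljavita4})--(\ref{koljavita6}) is automatically equivalent to $P$ and satisfies $E^Q\eta = 0$, hence $E^Q\xi = 1$, so $Q \in M_0$.

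The argument is entirely formal; there is no genuine obstacle, since Lemma \ref{vitanick9} is already a bidirectional representation result. The only point requiring brief care is the verification above that $\xi \not\equiv 1$, together with nonemptyness of $M_0$, does imply the strict sign conditions (\ref{vitanick10}) for $\eta$; without this step the decomposition into $\Omega^{\pm}$ would degenerate and the representation (\ref{koljavita3}) would be meaningless.
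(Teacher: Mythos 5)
Your proposal is correct and follows exactly the route the paper intends: the paper states Theorem \ref{koljavita1} as a direct consequence of Lemma \ref{vitanick9} applied to $\eta=\xi-1$ and gives no further argument, which is precisely your reduction. Your explicit check that $\xi\geq 0$, $\xi\neq 1$, $E^Q\xi=1$ and the equivalence of $Q$ and $P$ force the sign conditions (\ref{vitanick10}) for $\eta$ is a detail the paper leaves implicit, and it is a worthwhile addition rather than a deviation.
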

We introduced above the following denotations: $\mu = P^-\times P^+,$ \  $P^-$ is a contraction of the measure $P$ on the set $\Omega^-=\{\omega \in \Omega, \   \xi -1 \leq 0\},$ \ $P^+$ is a contraction of the measure $P$ on the set $\Omega^+=\{\omega \in \Omega, \  \xi -1 > 0\},$ \
${\cal F}^-=\Omega^-\cap {\cal F},$ ${\cal F}^+=\Omega^+\cap {\cal F}.$

It is evident that the set of measure $M_0$ is a nonempty one, since it contains those measures $Q,$ for which the random value $\alpha(\omega_1, \omega_2)$ is bounded, since $E^Q|\xi -1|< \infty.$  

\begin{thm}\label{koljavita7}
On the probability space  $\{\Omega, {\cal F}, P\}$ with the filtration ${\cal F}_n$ on it,  the set of measures $M_0,$ given by the formula (\ref{koljavita3}), is consistent with the filtration ${\cal F}_n,$ if and only if,  as $E^Q\{\xi|{\cal  F}_n\}, Q \in M_0,$ is a local regular martingale.
\end{thm}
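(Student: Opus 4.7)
The plan is to handle the two implications separately.

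For the direction ``consistency $\Rightarrow$ local regular martingale'', observe that $M_0$ consists by construction of probability measures equivalent to $P$ satisfying $E^Q\xi=1$, and $\xi\neq 1$ is nonnegative by assumption. Hence, taking $M:=M_0$ in Theorem \ref{tatnick8}, every hypothesis is in place: $M_0$ is consistent with the filtration by our current assumption, and $\xi$ serves as the distinguished nonnegative random value with unit expectation. Theorem \ref{tatnick8} then yields directly that $E^Q\{\xi|{\cal F}_n\}$, $Q\in M_0$, is a local regular martingale.

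For the converse direction, fix an arbitrary pair $Q_1,Q_2\in M_0$ and indices $0\leq n\leq s\leq k$, and define $R_s^k$ by the formula (\ref{tatnick1}). Equivalence of $R_s^k$ with $P$ is automatic: setting $Z_m:=E^{Q_1}\{\frac{dQ_2}{dQ_1}|{\cal F}_m\}$, the density $Z_k/Z_s$ is strictly positive $Q_1$-a.s.\ by the equivalence of $Q_1$ and $Q_2$, and the tower property gives $R_s^k(\Omega)=E^{Q_1}[Z_k/Z_s]=1$. So the crux is to verify $E^{R_s^k}\xi=1$. Setting $W_m:=E^{Q_1}\{\xi|{\cal F}_m\}$, the assumed local regularity says that $W_m$ is a $Q_2$-martingale, that is, $E^{Q_2}\{W_k|{\cal F}_s\}=W_s$ for $s\leq k$. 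Translating this into a $Q_1$-statement via the Bayes formula of Lemma \ref{q1} yields
\begin{eqnarray}\label{plan1}
E^{Q_1}\{W_k Z_k|{\cal F}_s\}=W_s Z_s.
\end{eqnarray}
Since $Z_k/Z_s$ is ${\cal F}_k$-measurable and $\xi\geq 0$, successive conditioning then produces
\begin{eqnarray}\label{plan2}
E^{R_s^k}\xi=E^{Q_1}\!\left[\xi\frac{Z_k}{Z_s}\right]=E^{Q_1}\!\left[W_k\frac{Z_k}{Z_s}\right]=E^{Q_1}\!\left[\frac{E^{Q_1}\{W_k Z_k|{\cal F}_s\}}{Z_s}\right]=E^{Q_1}[W_s]=1,
\end{eqnarray}
where the last equality uses that $W_m$ is a $Q_1$-martingale with $W_0=E^{Q_1}\xi=1$ (recalling ${\cal F}_0=\{\emptyset,\Omega\}$). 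Hence $R_s^k\in M_0$ for every admissible pair $(s,k)$, which is precisely the consistency of $M_0$ with the filtration.

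The step calling for the most care is the Bayes translation (\ref{plan1}): one must verify that the $Q_2$-martingale property of $W_m$ rearranges cleanly into this $Q_1$-identity (using that $W_k$ is ${\cal F}_k$-measurable and $Z_k=E^{Q_1}\{\frac{dQ_2}{dQ_1}|{\cal F}_k\}$), and that the conditional-expectation manipulations in (\ref{plan2}) are legitimate under nonnegativity of $\xi$ even before one knows $\xi$ is $R_s^k$-integrable. Once (\ref{plan1}) is in hand, the rest is routine tower-property bookkeeping.
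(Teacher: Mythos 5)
Your proposal is correct and follows essentially the same route as the paper: the necessity is the same direct appeal to Theorem \ref{tatnick8}, and the sufficiency is the same computation showing $E^{R_s^k}\xi=1$ via the Bayes formula of Lemma \ref{q1}, the tower property, and the fact that the conditional expectations $E^{Q_1}\{\xi|{\cal F}_k\}$ and $E^{Q_2}\{\xi|{\cal F}_k\}$ coincide (the paper merely swaps the roles of $Q_1$ and $Q_2$ and writes the chain of equalities without isolating your identity (\ref{plan1})). Your added checks that $R_s^k$ is a probability measure equivalent to $P$ are a small but welcome tightening of what the paper leaves implicit.
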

\begin{proof}
The necessity. Let the set of measures $M_0$ be consistent with the filtration. Then, due to Theorem \ref{tatnick8},  $E^Q\{\xi|{\cal  F}_n\}, Q \in M_0,$ is a local regular martingale.

The sufficiency. Suppose that $E^Q\{\xi|{\cal  F}_n\}, Q \in M_0,$ is a local regular martingale. Let us prove that,  if $Q_1,\ Q_2 \in M_0,$ then
the set of measures
\begin{eqnarray}\label{koljavita8}
R_s^k(A)=\int\limits_{A}\frac{E^{Q_2}\{\frac{dQ_1}{dQ_2}|{\cal F}_k\}}{E^{Q_2}\{\frac{dQ_1}{dQ_2}|{\cal F}_s\}}dQ_2,  \quad A \in {\cal F},
 \quad k\geq s \geq n, \quad n=\overline{0, \infty},
\end{eqnarray}
belongs to the set $M_0.$ For this, it is to prove that $E^{R_s^k}(\xi -1)=0,$ or
 $E^{R_s^k}\xi =1.$ Really, if $E^{Q_1}\xi=1, \ E^{Q_2}\xi=1,$ then
$$ E^{R_s^k}\xi=E^{Q_2}\xi \frac{E^{Q_2}\{\frac{dQ_1}{dQ_2}|{\cal F}_k\}}{E^{Q_2}\{\frac{dQ_1}{dQ_2}|{\cal F}_s\}}=
E^{Q_2} E^{Q_2}\{\xi|{\cal F}_k\}\frac{\frac{dQ_1}{dQ_2}}{E^{Q_2}\{\frac{dQ_1}{dQ_2}|{\cal F}_s\}}=$$ 
$$ E^{Q_2}E^{Q_2}\{ E^{Q_2}\{\xi|{\cal F}_k\}\frac{\frac{dQ_1}{dQ_2}}{E^{Q_2}\{\frac{dQ_1}{dQ_2}|{\cal F}_s\}}|{\cal F}_s\}=$$

 $$E^{Q_2}E^{Q_1}\{E^{Q_2}\{\xi|{\cal F}_k\}|{\cal F}_s\}=E^{Q_2}E^{Q_1}\{E^{Q_1}\{\xi|{\cal F}_k\}|{\cal F}_s\}=$$
\begin{eqnarray}\label{koljavita9}
E^{Q_2}E^{Q_1}\{\xi|{\cal F}_s\}= E^{Q_2}E^{Q_2}\{\xi|{\cal F}_s\}= E^{Q_2}\xi=1.
\end{eqnarray}
Theorem \ref{koljavita7} is proved.
\end{proof}

\begin{thm}\label{mykolvit1}
On the probability space  $\{\Omega, {\cal F}, P\} $  with the filtration ${\cal F}_n$ on it, the set of measures $M_0,$ given by the formula  (\ref{koljavita3}), is consistent with the filtration ${\cal F}_n,$ if and only if there exists not depending on $(\omega_1, \omega_2) \in \Omega^-\times \Omega^+$ the random process $\{m_n, {\cal F}_n\}_{n=0}^\infty$ such that
\begin{eqnarray}\label{mykolvit2}
E^{\nu_{\omega_1, \omega_2}}\{\xi|{\cal F}_n\}=m_n,\quad n=\overline{1, \infty},
\end{eqnarray} 
  for  those $(\omega_1, \omega_2) \in \Omega^-\times \Omega^+$ that have the full  measure $\mu=P^- \times P^+,$ where
 $$\nu_{\omega_1,\omega_2}(A)=\chi_{A}(\omega_1)\frac{(\xi-1)^+(\omega_2)}{(\xi-1)^-(\omega_1)+(\xi-1)^+(\omega_2)}+$$
\begin{eqnarray}\label{mykolvit3}
\chi_{A}(\omega_2)\frac{(\xi-1)^-(\omega_1)}{(\xi-1)^-(\omega_1)+(\xi-1)^+(\omega_2)}, \quad A \in {\cal F}, \quad (\omega_1,\omega_2) \in \Omega^-\times  \Omega^+.
\end{eqnarray}
\end{thm}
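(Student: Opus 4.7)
The plan is to reduce the theorem to Theorem \ref{koljavita7} by translating the consistency condition into a statement about the mixture representation (\ref{koljavita3}) and the two-point measures $\nu_{\omega_1,\omega_2}$ that sit inside its integrand, and then invoking the closure result of Lemma \ref{2vitanick1}. The key observation I would use throughout is that, by unfolding (\ref{koljavita3}) with $f=\chi_A$ and extending by linearity and monotone convergence, every $Q \in M_0$ is a continuous mixture
$$E^Q f = \int\limits_{\Omega^-}\int\limits_{\Omega^+} \alpha(\omega_1,\omega_2)\, E^{\nu_{\omega_1,\omega_2}}f\, d\mu(\omega_1,\omega_2)$$
of the two-point measures $\nu_{\omega_1,\omega_2}$ for every integrable $f$.

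For the necessity I would assume $M_0$ is consistent with ${\cal F}_n$. Theorem \ref{koljavita7} then yields that $E^Q\{\xi|{\cal F}_n\}$, $Q\in M_0$, is a local regular martingale relative to $M_0$, which, since a martingale relative to a set of measures is a single adapted process, forces the existence of a single ${\cal F}_n$-measurable $m_n$ with $E^Q\{\xi|{\cal F}_n\}=m_n$ a.s.\ for every $Q\in M_0$. Consequently, $E^Q[\xi\chi_B]=E^Q[m_n\chi_B]$ for every $B\in{\cal F}_n$ and every $Q\in M_0$. By Lemma \ref{2vitanick1}, for $\mu$-a.e. $(\omega_1,\omega_2)$ there is a single sequence $Q_k^{\varepsilon_k}\in M_0$ along which $E^{Q_k^{\varepsilon_k}} f\to E^{\nu_{\omega_1,\omega_2}} f$ simultaneously for every integrable $f$; applying this to $f=\xi\chi_B$ and to $f=m_n\chi_B$ gives $E^{\nu_{\omega_1,\omega_2}}[\xi\chi_B]=E^{\nu_{\omega_1,\omega_2}}[m_n\chi_B]$ for all $B\in{\cal F}_n$, which is exactly the claim $E^{\nu_{\omega_1,\omega_2}}\{\xi|{\cal F}_n\}=m_n$.

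For the sufficiency, the mixture formula combined with the hypothesis $E^{\nu_{\omega_1,\omega_2}}\{\xi|{\cal F}_n\}=m_n$ gives, for any $Q\in M_0$ and $B\in{\cal F}_n$,
$$E^Q[\xi\chi_B]=\int\limits_{\Omega^-}\int\limits_{\Omega^+}\alpha\, E^{\nu_{\omega_1,\omega_2}}[\xi\chi_B]\,d\mu=\int\limits_{\Omega^-}\int\limits_{\Omega^+}\alpha\, E^{\nu_{\omega_1,\omega_2}}[m_n\chi_B]\,d\mu=E^Q[m_n\chi_B],$$
so $E^Q\{\xi|{\cal F}_n\}=m_n$ for every $Q\in M_0$, and the tower property makes $\{m_n,{\cal F}_n\}$ a martingale under each such $Q$, hence a local regular martingale relative to $M_0$. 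Theorem \ref{koljavita7} then delivers the consistency of $M_0$ with the filtration.

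The hard part will be the necessity: one has to be sure that the approximating sequence from Lemma \ref{2vitanick1} works \emph{simultaneously} for the family $\{\xi\chi_B,\, m_n\chi_B : B\in{\cal F}_n\}$. Inspecting the proof of that lemma shows that the sequence $Q_k^{\varepsilon_k}$ is constructed independently of the integrable test function (only $(\omega_1,\omega_2)$ and the exhaustive decomposition $A_n$ enter), so a single choice indeed suffices once $(\omega_1,\omega_2)$ is fixed; and the resulting equality $E^{\nu_{\omega_1,\omega_2}}[\xi\chi_B]=E^{\nu_{\omega_1,\omega_2}}[m_n\chi_B]$ has to be read as a conditional expectation on the at most two ${\cal F}_n$-atoms supporting $\nu_{\omega_1,\omega_2}$. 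Integrability of $m_n\chi_B$ needed to invoke the lemma follows from the conditional Jensen inequality combined with the standing assumption $E^Q|\xi|<\infty$.
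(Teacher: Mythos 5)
Your proposal is correct and follows essentially the same route as the paper: the necessity is obtained from Theorem \ref{koljavita7} together with the approximation of the extreme measures $\nu_{\omega_1,\omega_2}$ supplied by Lemma \ref{2vitanick1}, and the sufficiency from the mixture representation (\ref{koljavita3}) combined with the converse direction of Theorem \ref{koljavita7}. The only difference is that you spell out the details (the simultaneous choice of the approximating sequence and the integration of the two-point conditional expectations against $\alpha\,d\mu$) that the paper leaves implicit.
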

\begin{proof}
The necessity. Suppose that the set of measures $M_0,$ given by the formula (\ref{koljavita3}), is consistent with the filtration ${\cal F}_n.$   Due to Theorem \ref{koljavita7},  $E^Q\{\xi|{\cal  F}_n\}, \ Q \in M_0,$ is  a local regular martingale.
Then, $E^Q\{\xi|{\cal  F}_n\}=m_n.$ Using Lemma \ref{2vitanick1}, we obtain
$E^{\nu_{\omega_1, \omega_2}}\{\xi|{\cal F}_n\}=m_n$ for   those $(\omega_1, \omega_2) \in \Omega^-\times \Omega^+$ that have the full  measure $\mu.$

The sufficiency. If the formula (\ref{mykolvit2}) is true, then  $E^Q\{\xi|{\cal  F}_n\}=m_n, \  Q \in M_0.$ From this, it follows that  $E^Q\{\xi|{\cal  F}_n\}, \ Q \in M_0,$ is  a local regular martingale.
Theorem \ref{mykolvit1} is proved.
\end{proof}

\begin{defin}\label{200vitanick1}
On the probability space $\{\Omega, {\cal F}, P\}$ with the filtration ${\cal F}_n$ on it, the consistent with the filtration ${\cal F}_n$   subset of the measures $M $ of the set of the measures  $M_0$ that is generated by the nonnegative  random value  $\xi \neq 1,$  $ \ E^Q \xi=1, \ Q \in M_0, $ 
we call the regular set of measures.
\end{defin}

Let   $\{\Omega, {\cal F}, P\} $  be a probability space.
On the measurable space  $\{\Omega, {\cal F}\} $ with the filtration ${\cal F}_n$ on it,  let  $M \subseteq M_0 $ be a set  of regular measures, where the set $M_0$ is generated by the nonnegative random value $\xi \neq 1.$  Denote by   $\{m_n, {\cal F}_n\}_{n=0}^\infty$ the regular martingale, where $m_n=E^Q\{\xi| {\cal F}_n\}, Q\in M, \ n=\overline{1,\infty}.$ 
Assume that  $M_n$ is a contraction of the set of regular  measures $M$ onto the $\sigma$-algebra ${\cal F}_n.$ Every $Q^n \in M_n$ is equivalent to $P^n,$ where $P^n$ is a contraction of the measure $P$ on the $\sigma$-algebra ${\cal F}_n.$ 
 For every $Q^n \in M_n,$ we have
 $E^{Q^n}[m_n- m_{n-1}]=0. $ 
Therefore, for the measure $Q_n \in M_n$ the representation
$$Q_n(A)=\int\limits_{\Omega_n^-\times\Omega_n^+}\chi_{A}(\omega_1)\frac{\alpha_n(\omega_1,\omega_2)[m_n-m_{n-1}]^+(\omega_2)}{[m_n-m_{n-1}]^-(\omega_1)+[m_n-m_{n-1}]^+(\omega_2)}d\mu_n(\omega_1,\omega_2)+$$
\begin{eqnarray}\label{77vitanick1}
\int\limits_{\Omega_n^-\times\Omega_n^+}\chi_{A}(\omega_2)\frac{\alpha_n(\omega_1,\omega_2)[m_n-m_{n-1}]^-(\omega_1)}{[m_n-m_{n-1}]^-(\omega_1)+[m_n-m_{n-1}]^+(\omega_2)}d\mu_n(\omega_1,\omega_2), \quad A \in {\cal F}_n,
\end{eqnarray}
$$ \Omega_n^-=\{\omega_1 \in \Omega, \  [m_n-m_{n-1}] (\omega_1)\leq 0\},$$
$$  \Omega_n^+=\{\omega_2 \in \Omega, \  [m_n-m_{n-1}](\omega_2) > 0\},$$
is true, where,  on the measurable space $\{ \Omega_n^-\times  \Omega_n^+, {\cal F}_n^-\times {\cal F}_n^+\},$ the random value $\alpha_n(\omega_1, \omega_2) $  satisfies the conditions
\begin{eqnarray}\label{vitakolja4}
\mu_n(\{(\omega_1,\omega_2) \in \Omega_n^- \times \Omega_n^+, \    \alpha_n(\omega_1, \omega_2)>0\})=P_n(\Omega^+)P_n(\Omega^-), 
\end{eqnarray}
\begin{eqnarray}\label{vitakolja5}  \int\limits_{\Omega_n^-}\int\limits_{\Omega_n^+}\alpha_n(\omega_1, \omega_2)\frac{[m_n-m_{n-1}]^-(\omega_1) [m_n-m_{n-1}]^+(\omega_2)}{[m_n-m_{n-1}]^-(\omega_1)+[m_n-m_{n-1}]^+(\omega_2)}d\mu_n(\omega_1,\omega_2)<\infty,
\end{eqnarray}

\begin{eqnarray}\label{vitakolja6}
\int\limits_{\Omega_n^-}\int\limits_{\Omega_n^+}\alpha_n(\omega_1, \omega_2)d\mu_n(\omega_1,\omega_2)=1.
\end{eqnarray}
Here, the measure 
$\mu_n=P^n_-\times P^n_+$  is given on the measurable space $\{ \Omega_n^-\times  \Omega_n^+, {\cal F}_n^-\times {\cal F}_n^+\}$ and it is a direct product of the measures $ P^n_-$  and  $ P^n_+,$ where the measure $P^n_+$ is a contraction of the measure $P^n$ on the $\sigma$-algebra ${\cal F}_n^+= \Omega_n^+\cap {\cal F}_n$ and 
 $P^n_-$ is a contraction of the measure $P^n$ on the $\sigma$-algebra $ {\cal F}_n^-=\Omega_n^-\cap {\cal F}_n.$

\begin{defin}\label{211vitanick}
 We say that the  regular set of  measures $M$ is complete one, if for every $n=\overline{1,\infty}$  the set of measures $Q_n$ contains the measures of the kind (\ref{77vitanick1}) for the random values $\alpha_n(\omega_1, \omega_2)$   of the kind
$C_1^n\chi_{A}(\omega_1, \omega_2)+C_2^n\chi_{\Omega_n^-\times  \Omega_n^+\setminus A}(\omega_1, \omega_2),$ as $A$ runs all sets from the $\sigma$-algebra $ {\cal F}_n^-\times {\cal F}_n^+,$  where  $C_1^n \mu_n(A)+ C_2^n \mu_n(\Omega_n^-\times  \Omega_n^+\setminus A)=1, \  C_1^n \geq 0, \ C_2^n \geq 0.$
\end{defin}

It is evident that the regular set of measures $M$ is a convex set of measure.

On the probability space $\{\Omega, {\cal F}, P\}$ with the filtration ${\cal F}_n$ on it,
let us  introduce into consideration the set  $A_0$ of all integrable  nonnegative random values $\zeta$  relative to the  set of regular measures $M,$  satisfying the conditions
\begin{eqnarray}\label{0mars6}
E^P\zeta=1, \quad P \in M.
\end{eqnarray}
It is evident that the set  $A_0$ is a nonempty one, since it contains  the random value $\zeta =1.$  The more interesting case is as $A_0$ contains more then one element. So, further  we consider the regular set of measure $M$ with the set $A_0,$ containing more then one element.

The set   $A_0$ can contain more then two elements. Then, for every element $\eta \in A_0$  $E^Q\{\eta|{\cal  F}_n\}, \  Q \in M,$ forms the local regular martingale.

 In the next Lemma  \ref{1vitanick35}, using Lemma \ref{vitanick9},   we construct   a  set of measures consistent with the filtration. 
On the probability space $\{\Omega_1^0, {\cal F}_1^0, P_1\}, $ let us consider a nonnegative random value $\xi_1,$ satisfying the conditions
$$0< P_1(\{\omega \in \Omega_1^0, \ \eta_1(\omega) <0\})<1,   $$
\begin{eqnarray}\label{1vitanick22} 
0< P_1(\{\omega \in \Omega_1^0, \ \eta_1(\omega) >0\}),
\end{eqnarray}
where we introduced the denotation $\eta_1(\omega)=\xi_1(\omega) - 1.$ Described in Lemma \ref{vitanick9}  the set of equivalent measures  to the measure $P_1$ and such that  $E^Q \eta_1(\omega)=0,$ we denote by $M_1.$
Let us construct the infinite direct  product of the measurable spaces $\{\Omega_i^0, {\cal F}_i^0\}, \ i=\overline{1, \infty}, $ where $\Omega_i^0=\Omega_1^0, \  {\cal F}_i^0= {\cal F}_1^0.$
Denote $\Omega=\prod\limits_{i=1}^\infty\Omega_i^0.$ On the space  $\Omega,$ under the $\sigma$-algebra ${\cal F}$  we understand the minimal $\sigma$-algebra, generated by the sets $\prod\limits_{i=1}^\infty G_i, \  G_i \in {\cal F}_i^0,$ where in the last product  only the finite set of $G_i$ do not equal $\Omega_i^0.$
 On the measurable space $\{\Omega, {\cal F}\},$ under the filtration  ${\cal F}_n$ we understand the minimal 
$\sigma$-algebra generated by the sets $\prod\limits_{i=1}^\infty G_i, \  G_i \in {\cal F}_i^0,$ where $G_i=\Omega_i^0$ for  $ i>n.$ We consider the  probability space
 $\{\Omega, {\cal F}, P\}, $ where $P=\prod\limits_{i=1}^\infty P_i, \ P_i=P_1, \ i=\overline{1, \infty}.$

 On the measurable space  $\{\Omega, {\cal F}\},$ we  introduce into consideration the set of measures $M,$ where $Q$ belongs to $ M,$ if $Q=\prod\limits_{i=1}^\infty Q_i, \  Q_i \in M_1.$ We  denote by $M^{Q_0} $ a subset of the set $M$ of those measures  $Q=\prod\limits_{i=1}^\infty Q_i, \  Q_i \in M_1,$ for which only the finite set of $Q_i$ does not coincide with the measure $Q_0 \in M_1.$

\begin{lemma}\label{1vitanick35} 
 On the measurable space  $\{\Omega, {\cal F}\}$ with the filtration ${\cal F}_n$ on it, there exists consistent with the filtration ${\cal F}_n$ the set of measures $M_0$ and the nonnegative random variable $\xi_0$ such that $E^Q\xi_0=1, \ Q \in M_0,$    if the random value $\xi_1,$ satisfying the conditions (\ref{1vitanick22}), is bounded. 
\end{lemma}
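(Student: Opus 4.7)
\emph{Plan.} The strategy is to exhibit the simplest possible candidate pair $(\xi_0, M_0)$ and verify both required properties by direct computation. Set $\xi_0(\omega) := \xi_1(\omega_1)$ for $\omega = (\omega_1,\omega_2,\ldots)\in\Omega$; this is ${\cal F}_1$-measurable, nonnegative, and bounded since $\xi_1$ is. Boundedness of $\xi_1$ serves two purposes: via Lemma \ref{vitanick9} it guarantees that $M_1$ is nonempty (bounded choices of $\alpha$ satisfy the integrability conditions (\ref{vitanick12})--(\ref{vitanick14})), and it makes $\xi_0$ integrable against every probability measure on $\Omega$.

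Then define
\begin{equation*}
M_0 := \{Q_1\otimes P_1\otimes P_1\otimes\cdots \, : \, Q_1\in M_1\},
\end{equation*}
that is, product measures whose first marginal ranges over $M_1$ and whose remaining marginals are all $P_1$. Any two elements of $M_0$ differ only in the first factor, where both marginals are equivalent to $P_1$, so $M_0$ is a family of pairwise equivalent measures, each equivalent to $P = P_1^{\otimes\infty}$. By Fubini, $E^Q\xi_0 = E^{Q_1}\xi_1 = 1$ for every $Q = Q_1\otimes P_1^{\otimes\infty}\in M_0$.

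The substantive step is the consistency condition of Definition \ref{mykvita1}. For $Q = Q_1\otimes P_1^{\otimes\infty}$ and $Q' = Q_1'\otimes P_1^{\otimes\infty}$ in $M_0$, a routine check on cylinder sets gives $\frac{dQ'}{dQ}(\omega) = \frac{dQ_1'}{dQ_1}(\omega_1)$, which is already ${\cal F}_1$-measurable. Consequently $E^Q\{\frac{dQ'}{dQ}\,|\,{\cal F}_k\} = \frac{dQ_1'}{dQ_1}(\omega_1)$ for every $k\geq 1$, while $E^Q\{\frac{dQ'}{dQ}\,|\,{\cal F}_0\} = 1$. Plugging into formula (\ref{tatnick1}), the ratio defining $R_s^k$ collapses to $1$ whenever $k\geq s\geq 1$ (so $R_s^k = Q\in M_0$) and reduces to $\frac{dQ_1'}{dQ_1}(\omega_1)$ when $s = 0\leq k$ with $k\geq 1$ (so $R_0^k(A) = \int_A \frac{dQ_1'}{dQ_1}(\omega_1)\,dQ = Q'(A)$, giving $R_0^k=Q'\in M_0$); the degenerate case $s=k=0$ is trivial. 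This exhausts every admissible pair $k\geq s\geq n$, so $M_0$ is consistent with $\{{\cal F}_n\}$.

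The only real obstruction is choosing the right shape for $M_0$. The natural candidate $M^{Q_0}$ introduced just before the lemma fails in general, because by Kakutani's dichotomy the infinite-product measure $Q_0^{\otimes\infty}$ with $Q_0\neq P_1$ need not be equivalent to $P = P_1^{\otimes\infty}$, violating the equivalence-of-measures requirement of Definition \ref{mykvita1}. Restricting the variation to the first coordinate, with $P_1$ fixed in every later coordinate, bypasses this issue; the resulting $M_0$ is small, but smallness is harmless for an existence statement.
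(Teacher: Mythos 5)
Your proof is correct, and it takes a genuinely more elementary route than the paper. The pair $(\xi_0,M_0)$ you exhibit satisfies every requirement of the statement, and the direct verification of Definition \ref{mykvita1} via the identity $\frac{dQ'}{dQ}=\frac{dQ_1'}{dQ_1}(\omega_1)$ is sound: since this density is ${\cal F}_1$-measurable, the ratio in (\ref{tatnick1}) collapses to $1$ when $k\geq s\geq 1$ and to $\frac{dQ_1'}{dQ_1}(\omega_1)$ when $s=0$, so $R_s^k$ is always $Q$ or $Q'$ and stays in $M_0$. The paper instead sets $\xi_0=\prod_{i=1}^\infty[1+a_i(\omega_1,\ldots,\omega_{i-1})\eta_i(\omega_i)]$ with strictly positive ${\cal F}_{i-1}$-measurable weights $a_i\leq b_i<1$, $\sum b_i<\infty$ (boundedness of $\xi_1$ is what makes this infinite product converge and remain bounded), shows that $E^Q\{\xi_0|{\cal F}_n\}=\prod_{i=1}^n[1+a_i\eta_i]$ is the same for all $Q\in M^{Q_0}$, and then invokes Lemma \ref{vitanick1} to pass to the set $M_0$ of \emph{all} equivalent measures with these prescribed conditional expectations, whose consistency that lemma guarantees. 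What the paper's construction buys is precisely what your degenerate choice gives up: a $\xi_0$ that involves every coordinate and a large consistent family $M_0$, which is the object actually exploited in Theorem \ref{vitusjanick2} and Section 8; indeed Remark \ref{nicola1} explicitly singles out your case ($a_i=0$ for $i\geq 2$, up to relabelling) as the one relevant for applications. Your construction proves the literal existence statement more cheaply and with a fully explicit consistency check. One caveat on your closing remark: measures in $M^{Q_0}$ differ from one another in only finitely many coordinates, so they are mutually equivalent via a finite product of positive densities; the Kakutani obstruction you cite concerns equivalence to $P=P_1^{\otimes\infty}$ (which can indeed fail when $Q_0\neq P_1$), not the internal equivalence of the set that Definition \ref{mykvita1} requires, so the paper's choice of $M^{Q_0}$ is not broken in the way you suggest, though the reference measure for ``equivalent'' is left ambiguous there.
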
 
\begin{proof} To prove Lemma \ref{1vitanick35}, we need to construct  a nonnegative bounded random value $\xi_0$ on the measurable  space $\{\Omega, {\cal F}\}$ and a set of equivalent measures $M_0$ on it, such that $E^Q\xi_0=1, \  Q\in M_0,$ and to prove that the set of measures  $M_0 $ is consistent with the filtration ${\cal F}_n.$  From  the Lemma \ref{1vitanick35} conditions,    the random value $\eta_1(\omega_1)=\xi_1(\omega_1) - 1 $ is also bounded. Let us put 
\begin{eqnarray}\label{vitanick36}
\xi_0=\prod\limits_{i=1}^\infty[1+a_i(\omega_1, \ldots, \omega_{i-1})\eta_i(\omega_i)],
\end{eqnarray}
 where the random values $ a_i(\omega_1, \ldots, \omega_{i-1})$  are ${\cal F}_{i-1}$-measurable, $  i=\overline{1,\infty}, $ they satisfy the conditions  $0< a_i(\omega_1, \ldots, \omega_{i-1})\leq b_i<1.$  The  constants $b_i$ are such that $\sum\limits_{i=1}^\infty b_i<\infty,$
the random value $\eta_i(\omega_i)$ is given on $\{\Omega_i^0, {\cal F}_i^0, P_i\}$ and is distributed as $\eta_1(\omega_1)$ on $\{\Omega_1^0, {\cal F}_1^0, P_1\}.$   From this, it follows that the random value $\xi_0$ is bounded by the constant $\prod\limits_{i=1}^\infty[1+C b_i],$ where $C>0$ and it is such that  $|\eta_i(\omega_i)|<C, \  i=\overline{1,\infty}.$
It is evident that $E^Q\xi_0=1, \ Q \in M^{Q_0}.$ Really, 
$$E^Q\prod\limits_{i=1}^n[1+a_i(\omega_1, \ldots, \omega_{i-1})\eta_i(\omega_i)]=$$
$$E^{Q_0^{n-1}}\prod\limits_{i=1}^{n-1}[1+a_i(\omega_1, \ldots, \omega_{i-1})\eta_i(\omega_i)]\times$$
\begin{eqnarray}\label{vitanick37}
E^{Q_n}[1+a_{n-1}(\omega_1, \ldots, \omega_{n-1})\eta_n(\omega_n)],
\end{eqnarray}
where $Q=\prod\limits_{i=1}^{\infty}Q_i, \  Q_0^{n-1}=\prod\limits_{i=1}^{n-1}Q_i,$
$$E^{Q_n}[1+a_{n-1}(\omega_1, \ldots, \omega_{n-1})\eta_i(\omega_n)]=$$
\begin{eqnarray}\label{vitanick38}
[1+a_{n-1}(\omega_1, \ldots, \omega_{n-1})E^{Q_n}\eta_n(\omega_n)]=1.
\end{eqnarray}
From the last equality, we have
\begin{eqnarray}\label{vitanick39}
E^Q\prod\limits_{i=1}^n[1+a_i(\omega_1, \ldots, \omega_{i-1})\eta_i(\omega_i)]=1.
\end{eqnarray}
Since $\xi_0=\lim\limits_{n \to \infty}\prod\limits_{i=1}^n[1+a_i(\omega_1, \ldots, \omega_{i-1})\eta_i(\omega_i)],$ from the equality (\ref{vitanick39}) and the possibility to go to the limit under the mathematical expectation, we prove the needed statement.
Let us prove the existence of the set of measures $M_0$ consistent with the filtration ${\cal F}_n.$
If $Q\in M^{Q_0},$ then
\begin{eqnarray}\label{vitanick40}
E^Q\{\xi_0|{\cal F}_n\}=\prod\limits_{i=1}^n[1+a_i(\omega_1, \ldots, \omega_{i-1})\eta_i(\omega_i)], \quad Q\in M^{Q_0}.
\end{eqnarray}
Due to Lemma  \ref{vitanick1}, there exists a set of measures $M_0$  such that it is consistent with the filtration and $M_0 \supseteq M^{Q_0},$   $ E^Q\xi_0=1, \  Q \in M_0.$  
The set $M_0$ is a linear convex span of the set  $M^{Q_0}.$  It means that the set of measures  $M_0$  is consistent with the filtration.
Lemma \ref{1vitanick35}   is proved.
\end{proof}
\begin{remark}\label{nicola1}
The boundedness of the random value $\xi_1$ is not essential. For applications, the case, as $a_i(\omega_1,\ldots, \omega_{i-1})=0, \ i \geq n+1,$ is very  important (see Section 8).  In this case, Lemma   \ref{1vitanick35} is true as the random value $\eta_1$ is an integrable one.  The random value $\xi_0$ is also   integrable one relative to every measures from the set $M_0$ and it is   ${\cal F}_n$-measurable one.
\end{remark}

Below, we describe completely the regular set of measures in the  case as $\xi_0=\prod\limits_{i=1}^N[1+a_i(\omega_1, \ldots, \omega_{i-1})\eta_i(\omega_i)], \ N< \infty,  $ $0 < a_i(\omega_1, \ldots, \omega_{i-1}) \leq 1,\ i=\overline{1, N},$ and the random value $\xi_1$ is an integrable one relative to the measure $P_1.$ For this purpose, we introduce the denotations: $ \Omega_1^-=\{ \omega_1 \in  \Omega_1^0, \ \eta_1( \omega_1) \leq 0\},$ \ $ \Omega_1^+=\{ \omega_1 \in  \Omega_1^0, \ \eta_1( \omega_1) > 0\},$ $ P_1^- $ is a contraction of the measure $P_1$ on the $\sigma$-algebra ${\cal F}_1^-,$ \ $ P_1^+ $ is a contraction of the measure $P_1$ on the $\sigma$-algebra ${\cal F}_1^+, $  ${\cal F}_1^-=\Omega_1^-\cap {\cal F}_1^0$
${\cal F}_1^+=\Omega_1^+\cap {\cal F}_1^0.$

 Denote $U_1=\Omega_1^- \times \Omega_1^+$ and introduce  the measure $\mu_1= P_1^-\times P_1^+$ on the $\sigma$-algebra $G_1= {\cal F}_1^-\times {\cal F}_1^+.$
Let us introduce the measurable space  $\{{\cal V}, {\cal L}, \mu\},$ where
${\cal V}=\prod\limits_{i=1}^N U_i, \ U_i=U_1, \ i=\overline{1,N},$ is a direct product of the spaces $U_i=\Omega_i^- \times \Omega_i^+, \ \Omega_i^-= \Omega_1^-, \  \Omega_i^+= \Omega_1^+,$  ${\cal L}=\prod\limits_{i=1}^N G_i$ is a direct product of the $\sigma$-algebras $G_i=G_1,  \ i=\overline{1,N}.$
At last, let $\mu=\prod\limits_{i=1}^N\mu_i$ be a direct product of the  measures $\mu_i=\mu_1, \ i=\overline{1,N},$ and let
$\nu_{v}=\prod\limits_{i=1}^N\nu_{\omega_i^1, \omega_i^2}, \ v=\{(\omega_1^1, \omega_1^2), \ldots, (\omega_N^1, \omega_N^2)\}, $ be a direct product of the measures $\nu_{\omega_i^1, \omega_i^2}, \ i=\overline{1, N},$ which is a countable additive function on the  $\sigma$-algebra ${\cal F}_N$ for every $v \in  {\cal L}, $ where
\begin{eqnarray}\label{vitusjanick1}
\nu_{\omega_i^1, \omega_i^2}(A_i)=\chi_{A_i}(\omega_i^1) \frac{\eta_i^+(\omega_i^2)}{\eta_i^-(\omega_i^1) +\eta_i^+(\omega_i^2)}+\chi_{A_i}(\omega_i^2) \frac{\eta_i^-(\omega_i^1)}{\eta_i^-(\omega_i^1) +\eta_i^+(\omega_i^2)}
\end{eqnarray}
for   $\omega_i^1 \in \Omega_i^-,$  $ \omega_i^2 \in \Omega_i^+,$ $A_i\in {\cal F}_i^0.$ 

In the next Theorem \ref{vitusjanick2}, we 
assume that the random value $\eta_1(\omega_1) $ is an integrable one.
\begin{thm}\label{vitusjanick2}
On the measurable space  $\{\Omega, {\cal F}\}$ with the filtration ${\cal F}_n$ on it,   every measure $Q$ of the regular  set of measures $M$ for the random value $\xi_0=\prod\limits_{i=1}^N[1+a_i(\omega_1, \ldots, \omega_{i-1})\eta_i(\omega_i)], \ N< \infty,  $ $0 < a_i(\omega_1, \ldots, \omega_{i-1}) \leq 1,\ i=\overline{1, N},$ has  the representation
\begin{eqnarray}\label{vitusjanick3}
Q(A)=\int\limits_{{\cal V}}\alpha(v)\nu_{v}(A)d\mu(v),
\end{eqnarray}
where the random value $\alpha(v)$  satisfies the conditions
\begin{eqnarray}\label{vitusjanick4}
\mu(\{v \in  {\cal V}, \ \alpha(v)>0\})=[P_1(\Omega_1^-)P_1(\Omega_1^+)]^N,   \end{eqnarray}
\begin{eqnarray}\label{vitusjanick5}
\int\limits_{{\cal V}}\alpha(v)\prod\limits_{i=1}^N\frac{\eta_i^-(\omega_i^1)\eta_i^+(\omega_i^2)}{\eta_i^-(\omega_i^1)+\eta_i^+(\omega_i^2)}d\mu(v)< \infty,
\end{eqnarray}
\begin{eqnarray}\label{vitusjanick6}
\int\limits_{{\cal V}}\alpha(v)d\mu(v)=1.
\end{eqnarray}
\end{thm}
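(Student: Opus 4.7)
The strategy is to prove the representation (\ref{vitusjanick3}) by induction on $N$, combining Lemma \ref{vitanick9} applied to the regular conditional distributions of $Q$ along the filtration. The preparatory step is to establish the coordinate-wise centering $E^Q\{\eta_i \mid {\cal F}_{i-1}\} = 0$ for every $Q \in M$ and every $i$. By the consistency of $M$ (Definition \ref{200vitanick1}) and Theorem \ref{koljavita7}, $m_n = E^Q\{\xi_0 \mid {\cal F}_n\}$ is a local regular martingale. Since the product two-point measures $\nu_v = \prod_{i=1}^N \nu_{\omega_i^1,\omega_i^2}$ lie in the closure of $M$ by Lemma \ref{2vitanick1} and satisfy $E^{\nu_v}\eta_i = 0$ coordinate-wise, a direct computation using the independence of coordinates under $\nu_v$ gives $E^{\nu_v}\{\xi_0 \mid {\cal F}_n\} = \prod_{i=1}^n[1 + a_i\eta_i]$. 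This forces $m_n = \prod_{i=1}^n[1 + a_i\eta_i]$ for every $Q \in M$, and the martingale identity $E^Q\{m_n/m_{n-1} \mid {\cal F}_{n-1}\} = 1$ combined with $a_n > 0$ yields the desired vanishing of the conditional expectations.

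Next I would disintegrate $Q$ through the sequence of regular conditional distributions $dQ(\omega_1,\ldots,\omega_N) = \prod_{i=1}^N dQ_i(\omega_i \mid \omega_1,\ldots,\omega_{i-1})$. Each conditional measure $Q_i(\cdot \mid \omega_1,\ldots,\omega_{i-1})$ is equivalent to $P_i$ on $(\Omega_i^0,{\cal F}_i^0)$ with zero $\eta_i$-mean, so Lemma \ref{vitanick9} produces
\begin{equation*}
Q_i(A_i \mid \omega_1,\ldots,\omega_{i-1}) = \int_{U_i}\alpha_i(v_i;\omega_1,\ldots,\omega_{i-1})\,\nu_{v_i}(A_i)\,d\mu_i(v_i),
\end{equation*}
with $\alpha_i(\cdot;\omega_1,\ldots,\omega_{i-1})$ nonnegative, $\mu_i$-normalized to one, and satisfying the integrability bound inherited from (\ref{vitanick13}). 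The case $N=1$ is then immediate: the resulting formula is precisely (\ref{vitusjanick3}) and the conditions (\ref{vitusjanick4})--(\ref{vitusjanick6}) translate to (\ref{vitanick12})--(\ref{vitanick14}).

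For the inductive step, the restriction $Q|_{{\cal F}_{N-1}}$ is again regular but for $\xi_0' = \prod_{i=1}^{N-1}[1+a_i\eta_i]$; indeed $E^Q\xi_0' = 1$ follows from $E^Q\{1+a_N\eta_N \mid {\cal F}_{N-1}\} = 1$ and the tower property. By the inductive hypothesis, $Q|_{{\cal F}_{N-1}}(B) = \int_{{\cal V}'}\alpha'(v')\nu_{v'}(B)\,d\mu'(v')$ on ${\cal V}' = \prod_{i=1}^{N-1}U_i$ with $\mu' = \prod_{i=1}^{N-1}\mu_i$. Combining this with the Lemma \ref{vitanick9} representation of $Q_N(\cdot\mid\omega_1,\ldots,\omega_{N-1})$ and integrating against a rectangle $A = A_1\times\cdots\times A_N$ produces the target form (\ref{vitusjanick3}) on ${\cal V} = {\cal V}'\times U_N$ with an explicitly constructed $\alpha(v)$; the properties (\ref{vitusjanick4})--(\ref{vitusjanick6}) of $\alpha$ then follow from the corresponding properties of $\alpha'$ and $\alpha_N$ together with $Q(\Omega)=1$.

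The main obstacle I anticipate is precisely this assembly in the inductive step. Since $\alpha_N(v_N;\omega_1,\ldots,\omega_{N-1})$ depends on the conditioning variables while each $\nu_{v_i}$ is supported only on the two points $\omega_i^1,\omega_i^2$, integrating $\alpha_N$ against $\nu_{v'}$ sums over the $2^{N-1}$ possible selections $(s_1,\ldots,s_{N-1})\in\{1,2\}^{N-1}$ of past coordinates, so the resulting $\alpha(v)$ is not a simple product $\alpha'(v')\alpha_N(v_N;\cdot)$ but absorbs a weighted combination of these path-dependent values. Matching the resulting integral against $\nu_v$ on arbitrary rectangles and verifying the positivity, integrability (\ref{vitusjanick5}), and normalization (\ref{vitusjanick6}) of the constructed $\alpha(v)$ reduces to careful bookkeeping exploiting the canonical density form of Lemma \ref{vitanick9} and the two-point structure of each $\nu_{v_i}$.
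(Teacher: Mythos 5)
Your strategy addresses a different (and stronger) claim than the one the paper actually establishes, and the inductive assembly step --- which you flag as ``careful bookkeeping'' --- is where the argument genuinely fails. The paper's own proof goes in the opposite direction: it shows by a monotone class argument that $v\mapsto\nu_v(A)$ is a measurable function of $v$, so that (\ref{vitusjanick3}) defines a countably additive measure, checks that $E^Q\{\xi_0|{\cal F}_n\}=\prod_{i=1}^n[1+a_i\eta_i]$ for every such $Q$, and concludes via Lemma \ref{vitanick1} that the family of measures of the form (\ref{vitusjanick3}) \emph{is} a regular set of measures. It never attempts to show that an abstractly given regular set is contained in that family, and for good reason: that inclusion is false for the maximal regular set. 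Concretely, take $N=2$ and $\Omega_1^0=\{a,b,c\}$ with $\eta_1(a)<0<\eta_1(b),\ \eta_1(c)$. Then $U_1=U_2=\{(a,b),(a,c)\}$, so ${\cal V}$ has four points and the measures (\ref{vitusjanick3}) form an at most three-dimensional family of mixtures of the product measures $\nu_{v_1}\otimes\nu_{v_2}$. But the set of all equivalent measures $Q$ on the nine-point space $\Omega_1^0\times\Omega_2^0$ with $E^Q\eta_1=0$ and $E^Q\{\eta_2|{\cal F}_1\}=0$ --- exactly the conditions your first step derives for every $Q\in M$, and a set that is itself consistent with the filtration by Lemma \ref{vitanick1} --- is cut out by five affine equations in nine unknowns, hence is four-dimensional; it contains measures whose conditional law of $\omega_2$ given $\omega_1$ genuinely depends on $\omega_1$, and such non-product measures cannot be mixtures of products.

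This is precisely where your induction breaks. After writing $Q_N(\cdot\mid\omega_{<N})=\int_{U_N}\alpha_N(v_N;\omega_{<N})\,\nu_{v_N}\,d\mu_N$ and inserting it into the inductive representation of $Q|_{{\cal F}_{N-1}}$, the integral against $\nu_{v'}$ evaluates $\alpha_N(v_N;\cdot)$ at the $2^{N-1}$ atoms $\omega^s$ of $\nu_{v'}$; testing on rectangles that separate these atoms forces $\alpha(v)=\alpha'(v')\,\alpha_N(v_N;\omega^s)$ simultaneously for all $s$, which is possible only when $\alpha_N(v_N;\cdot)$ is constant on the support of each $\nu_{v'}$ --- essentially only when $Q$ is a mixture of products. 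The single scalar weight $\alpha(v)$ in (\ref{vitusjanick3}) cannot absorb the path dependence of the conditional densities. The statement that is actually provable, and all that Section 8 needs, is that the regular set generated by the product construction preceding the theorem (the convex span of $Q=\prod_iQ_i$, $Q_i\in M_1$) has the representation; for that no disintegration is needed: one takes $\alpha(v)=\prod_i\alpha_i(v_i)$ with $\alpha_i$ the Lemma \ref{vitanick9} density of $Q_i$ and passes to convex combinations. Two smaller points: Lemma \ref{2vitanick1} places the $\nu_v$ in the closure of the full set $M_0$ of measures with $E^Q\xi=0$, not of an arbitrary regular subset $M$, so the identification $m_n=\prod_{i\le n}[1+a_i\eta_i]$ should instead be read off from (\ref{vitanick40}); and dividing by $m_{n-1}$ requires $1+a_i\eta_i>0$ a.e., which can fail when $a_i=1$ on the set where $\xi_1=0$.
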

\begin{proof}
To prove Theorem, it needs to prove that  the  countable additive measure  $\nu_{v}(A)$ at every fixed $v \in {\cal V}$ is a measurable map from the measurable space
 $\{{\cal V}, {\cal L}\}$ into the measurable space $\{[0,1], B([0,1])\}$ for every fixed $A \in {\cal F}_N.$  For $A=\prod\limits_{i=1}^N A_i, \ A_i \in {\cal F}_i^0,$
 $\nu_{v}(A)$ is a  measurable map from the measurable space
 $\{{\cal V}, {\cal L}\}$ into the measurable space $\{[0,1], B([0,1])\}.$
The family of sets of the kind $\bigcup\limits_{i \in I} E_i, $ $E_i=\prod\limits_{s=1}^N A_s^i, \ A_s^i \in {\cal F}_s^0,$ where $E_i \cap E_j=\emptyset,$ the set $I$ is an arbitrary finite set, forms the algebra of the sets that we denote by $U_0.$ From the countable additivity of $\nu_{v}(A),$
$ \nu_{v}(\bigcup\limits_{i \in I} E_i)=\sum\limits_{i \in I} \nu_{v}(E_i)$ is a measurable map from the measurable space
 $\{{\cal V}, {\cal L}\}$ into the measurable space $\{[0,1], B([0,1])\}.$
Let $T$ be a class  of the sets  from the minimal $\sigma$-algebra $\Sigma$ generated by $U_0$ for every  subset $E$ of that $\nu_{v}(E)$ is a measurable map
from the measurable space
 $\{{\cal V}, {\cal L}\}$ into the measurable space $\{[0,1], B([0,1])\}.$
Let us prove that $T$ is a monotonic class. Suppose that  $E_i \subset E_{i+1}, \  i=\overline{1, \infty}, \  E_i \in T.$ Then,  $\nu_{v}(E_i) \leq \nu_{v}(E_{i+1}).$ From this, it follows that $\lim\limits_{i \to \infty} \nu_{v}(E_i)$ is a  measurable map from the measurable space
 $\{{\cal V}, {\cal L}\}$ into the measurable space $\{[0,1], B([0,1])\}.$
But,   $\nu_{v}(E_{i+1} \setminus E_{i})=\nu_{v}(E_{i+1}) - \nu_{v}(E_i) $ is a measurable map from  $\{{\cal V}, {\cal L}\}$ into  $\{[0,1], B([0,1])\}.$
From this equality, it follows that the set $E_{i+1} \setminus E_i$ belongs to the class $T.$
Since $\bigcup\limits_{i=1}^\infty E_i=E_1\cup \bigcup\limits_{i=1}^\infty [ E_{i+1}\setminus E_i],$ we have
$$ \lim\limits_{ n \to \infty}\nu_{v}(E_n)= \nu_{v}(E_1)+  \lim\limits_{ n \to \infty}\sum\limits_{i=1}^n \nu_{v}(E_{i+1} \setminus E_i)=$$
\begin{eqnarray}\label{vitusjanick7}
\nu_{v}(E_1)+ \sum\limits_{i=1}^\infty \nu_{v}(E_{i+1} \setminus E_i)= \nu_{v}(E_1\cup \bigcup\limits_{i=1}^\infty [ E_{i+1}\setminus E_i]) =\nu_{v}(\bigcup\limits_{i=1}^\infty E_i). 
\end{eqnarray}
The equalities (\ref{vitusjanick7}) mean that $\bigcup\limits_{i=1}^\infty E_i$
belongs to $T,$ since $\nu_{v}(\bigcup\limits_{i=1}^\infty E_i)$ is a measurable map
of  $\{{\cal V}, {\cal L}\}$ into  $\{[0,1], B([0,1])\}.$
Suppose that $E_i \supset E_{i+1}, \ E_i \in T,  i=\overline{1,\infty}.$
Then, this case is reduced to the previous one by the note that the sequence $\bar E_i=\prod\limits_{i=1}^N\Omega_i^0 \setminus E_i, \  i=\overline{1,\infty}$ is monotonically increasing. From this, it follows that $\bar E = \bigcup\limits_{i=1}^\infty\bar  E_i \in T.$ Therefore, $\bigcap\limits_{i=1}^\infty E_i=\prod\limits_{i=1}^N\Omega_i^0 \setminus \bigcup\limits_{i=1}^\infty\bar E_i \in T.$ Thus, $T$ is a monotone class. But, $U_0 \subset T.$ Hence, T contains the minimal monotone class generated by the algebra $U_0,$ that is, $m(U_0)=\Sigma,$
therefore, $\Sigma \subset T.$ Thus, $ \nu_{v}(E) $ is a measurable map of $\{{\cal V}, {\cal L}\}$ into  $\{[0,1], B([0,1])\}$ for $A \in \Sigma.$
The fact that the random value $\alpha(v)$ satisfies the conditions
(\ref{vitusjanick4}) - (\ref{vitusjanick6}) means that  $Q,$ given by the formula (\ref{vitusjanick3}), is a countable additive function of sets  and $ E^Q\xi_0<\infty.$
Moreover, $E^Q\xi_0=1.$  It is evident that
$E^Q\{\xi_0| {\cal F}_n\}=\prod\limits_{i=1}^n [1+a_i(\omega_1, \ldots, \omega_{i-1})\eta_i(\omega_i)], \ Q \in M.$ Due to Lemma \ref{vitanick1}, this proves that the set $M$ is a regular set of measure.
Theorem \ref{vitusjanick2} is proved.

\end{proof}

\begin{remark}\label{vitakolja1}
The representation (\ref{vitusjanick3}) for the regular set of measures $M$
means that  $M$ is a convex set of equivalent measures. Since the random value $\alpha(v)$ runs all bounded random value, satisfying the conditions (\ref{vitusjanick4} - \ref{vitusjanick6}), it is easy to show that the set of measures
$\nu_v(A), \  v \in {\cal V}, \ A \in {\cal F}_N,$ is the set of extreme points for the set  $M.$ Moreover, since in the representation (\ref{vitusjanick3}) for the regular set of measures $M$ $\alpha(v)$ runs all  bounded random values,  satisfying the conditions (\ref{vitusjanick4} - \ref{vitusjanick6}), then  $M$ is a complete set of measures. 
\end{remark}

\begin{thm}\label{4vitanick3}
On the probability space  $\{\Omega, {\cal F}, P\} $ with the filtration ${\cal F}_n$ on it,  let $M$ be a complete  set of  measures.   If 
every $\sigma$-algebra ${\cal F}_n, \ n=\overline{1,\infty},$  has an  exhaustive decomposition, then 
 the closure of the set of points $E^Q f_n(\omega), \ Q \in M_n,$ in metrics $\rho(x,y)=|x-y|, \ x, y \in R^1,$  contains the set of points
$$ f_n(\omega_1)\frac{[m_n-m_{n-1}]^+(\omega_2)}{[m_n-m_{n-1}]^-(\omega_1)+[m_n-m_{n-1}]^+(\omega_2)}+$$
\begin{eqnarray}\label{4vitanick2}
f_n(\omega_2)\frac{[m_n-m_{n-1}]^-(\omega_1)}{[m_n-m_{n-1}]^-(\omega_1)+[m_n-m_{n-1}]^+(\omega_2)},
\end{eqnarray}
$$  (\omega_1,\omega_2) \in  \Omega_n^-\times  \Omega_n^+, \quad n=\overline{1, \infty},$$
 for every integrable relative to every measure $Q \in M_n$ the finite valued ${\cal F}_n$-measurable  random value $f_n(\omega),$ 
where $ \Omega_n^-=\{\omega_1 \in \Omega, \  [m_n-m_{n-1}] (\omega_1)\leq 0\}, $ 
$ \Omega_n^+=\{\omega_2 \in \Omega, \  [m_n-m_{n-1}](\omega_2) > 0\}. $ 
\end{thm}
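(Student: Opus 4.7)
The plan is to mimic the approximation scheme used in Lemma \ref{2vitanick1}, now exploiting the completeness of the regular set $M$ (Definition \ref{211vitanick}) to guarantee that the approximating measures actually belong to $M_n$. Fix $n$. By completeness, $M_n$ contains every measure $Q$ of the form (\ref{77vitanick1}) whose density $\alpha_n(\omega_1,\omega_2)$ is a two-valued step function $C_1^n\chi_A+C_2^n\chi_{\Omega_n^-\times\Omega_n^+\setminus A}$ with $A\in{\cal F}_n^-\times{\cal F}_n^+$ and the normalization $C_1^n\mu_n(A)+C_2^n\mu_n(\Omega_n^-\times\Omega_n^+\setminus A)=1$. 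These are exactly the shapes produced by the construction (\ref{2vitanick7}).

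First I would pull the exhaustive decomposition of ${\cal F}_n$ back to ${\cal F}_n^-$ and ${\cal F}_n^+$ by intersecting with $\Omega_n^-$ and $\Omega_n^+$, and then apply Remark \ref{g1} to obtain an exhaustive decomposition $B_{m,ks}$ of $\{\Omega_n^-\times\Omega_n^+,\ {\cal F}_n^-\times{\cal F}_n^+\}$. Denote by $A_m=A_m(\omega_1,\omega_2)$ the block of the $m$-th decomposition containing the point $(\omega_1,\omega_2)$. The regular martingale convergence argument used in (\ref{2vitanick5})--(\ref{2vitanick6}) gives, for every integrable $g$ on $\{\Omega_n^-\times\Omega_n^+,\mu_n\}$, the limit
\begin{equation*}
\lim_{m\to\infty}\frac{1}{\mu_n(A_m)}\int_{A_m}g(\omega_1,\omega_2)\,d\mu_n=g(\omega_1,\omega_2)
\end{equation*}
outside a $\mu_n$-null set $D_0\cup D_1$ independent of the choice of $g$ within a countable family. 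Applied to the two integrable functions
\begin{equation*}
g_1(\omega_1,\omega_2)=f_n(\omega_1)\frac{[m_n-m_{n-1}]^+(\omega_2)}{[m_n-m_{n-1}]^-(\omega_1)+[m_n-m_{n-1}]^+(\omega_2)},
\end{equation*}
\begin{equation*}
g_2(\omega_1,\omega_2)=f_n(\omega_2)\frac{[m_n-m_{n-1}]^-(\omega_1)}{[m_n-m_{n-1}]^-(\omega_1)+[m_n-m_{n-1}]^+(\omega_2)},
\end{equation*}
(whose $\mu_n$-integrability follows from the hypothesis that $f_n$ is $Q$-integrable for every $Q\in M_n$, combined with the representation (\ref{77vitanick1}) applied to a single measure with bounded $\alpha_n$), this gives the required pointwise limits $g_i(\omega_1,\omega_2)$ at $\mu_n$-a.e.\ point.

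Next I would take the two-valued densities
\begin{equation*}
\alpha_n^{\varepsilon_m}(\omega_1,\omega_2)=(1-\varepsilon_m)\frac{\chi_{A_m}(\omega_1,\omega_2)}{\mu_n(A_m)}+\varepsilon_m\frac{\chi_{\Omega_n^-\times\Omega_n^+\setminus A_m}(\omega_1,\omega_2)}{\mu_n(\Omega_n^-\times\Omega_n^+\setminus A_m)},
\end{equation*}
with $\varepsilon_m\downarrow 0$. By completeness of $M$ the corresponding measure $Q_n^{\varepsilon_m}\in M_n$. Inserting this $\alpha_n^{\varepsilon_m}$ into (\ref{77vitanick1}) and writing out $E^{Q_n^{\varepsilon_m}}f_n$ in exact analogy with (\ref{3vitanick9}) decomposes the expectation into four averages over $A_m$ and $\Omega_n^-\times\Omega_n^+\setminus A_m$, weighted by $1-\varepsilon_m$ and $\varepsilon_m$. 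The two terms multiplied by $\varepsilon_m$ vanish (their inner averages stay bounded since $f_n$ is integrable against $Q\in M_n$), while the two terms multiplied by $1-\varepsilon_m$ are exactly the Lebesgue-type averages of $g_1$ and $g_2$ over the shrinking cells $A_m$, and hence converge to $g_1(\omega_1,\omega_2)+g_2(\omega_1,\omega_2)$ at each $(\omega_1,\omega_2)\notin D_0\cup D_1$.

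The main obstacle is the measure-theoretic bookkeeping in the first step: one must check that the exhaustive decomposition of $\Omega$ restricts in a useable way to $\Omega_n^-$ and $\Omega_n^+$ (using that these sets are ${\cal F}_n$-measurable) and that the product decomposition produced by Remark \ref{g1} truly generates ${\cal F}_n^-\times{\cal F}_n^+$, so that the Doob--Lévy martingale convergence on the resulting filtration of finite $\sigma$-algebras applies. Once that is in place, the remainder is a direct transcription of the calculation in Lemma \ref{2vitanick1}, with the role of the measure $Q$ on the full $\sigma$-algebra replaced by its contraction to ${\cal F}_n$ and the completeness of $M$ supplying, at each level $n$, the required two-valued densities.
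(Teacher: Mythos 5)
Your proposal is correct and follows essentially the same route as the paper: the paper's own proof simply writes out $E^{Q}f_n$ via the representation (\ref{77vitanick1}) and then invokes verbatim the approximation argument of Lemma \ref{2vitanick1} (the two-valued densities supported on shrinking cells of the exhaustive decomposition, with completeness of $M$ guaranteeing these densities yield measures in $M_n$), noting only that $E^{P^n}|m_n-m_{n-1}|<\infty$ makes the integrands integrable. You have merely spelled out the details that the paper leaves implicit.
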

\begin{proof}
 Since
$$E^{Q} f_n(\omega)=\int\limits_{\Omega_n^-\times\Omega_n^+}f_n(\omega_1)\frac{\alpha_n(\omega_1,\omega_2)[m_n-m_{n-1}]^+(\omega_2)}{[m_n-m_{n-1}]^-(\omega_1)+[m_n-m_{n-1}]^+(\omega_2)}d\mu_n(\omega_1,\omega_2)+$$
\begin{eqnarray}\label{77vitanick2}
\int\limits_{\Omega_n^-\times\Omega_n^+}f_n(\omega_2)\frac{\alpha_n(\omega_1,\omega_2)[m_n-m_{n-1}]^-(\omega_1)}{[m_n-m_{n-1}]^-(\omega_1)+[m_n-m_{n-1}]^+(\omega_2)}d\mu_n(\omega_1,\omega_2).
\end{eqnarray}
Then, all arguments, used in the proof  of Lemma \ref{2vitanick1}, can be applied for the proof of Theorem \ref{4vitanick3},  since $E^{P^n}|m_n-m_{n-1}|<\infty.$ Theorem \ref{4vitanick3} is proved.
\end{proof}
\begin{thm}\label{4vitanick4}
On the probability space  $\{\Omega, {\cal F}, P\} $ with the filtration ${\cal F}_n$ on it, let   $M$ be a complete set of  measures and let  every $ \sigma$-algebra ${\cal F}_n, \ n=\overline{1, \infty}, $ have an  exhaustive decomposition.   Suppose that $f_n(\omega)$ is a nonnegative integrable    ${\cal F}_n$-measurable random value, satisfying the condition $E^{Q^n} f_n(\omega) \leq 1, Q^n \in M_n.$ Then, there exists a constant $\alpha_n,$ depending on $f_n(\omega),$ such that 
\begin{eqnarray}\label{4vitanick5}
f_n(\omega) \leq 1+ \alpha_n [m_n-m_{n-1}](\omega), \quad  \omega \in \Omega.
\end{eqnarray}
\end{thm}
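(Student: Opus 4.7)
The plan is to pass from the integral constraint $E^{Q^n}f_n \leq 1$ to a pointwise inequality via the extreme measures supplied by Theorem \ref{4vitanick3}, and then to separate the variables $\omega_1,\omega_2$ to extract the desired constant. Write $\Delta m := m_n - m_{n-1}$. Since the set $\{E^Q f_n : Q \in M_n\}$ lies in the closed half-line $(-\infty,1]$, Theorem \ref{4vitanick3} implies that for $\mu_n$-a.e.\ $(\omega_1,\omega_2)\in\Omega_n^-\times\Omega_n^+$,
\begin{equation*}
f_n(\omega_1)\frac{[\Delta m]^+(\omega_2)}{[\Delta m]^-(\omega_1)+[\Delta m]^+(\omega_2)}
+f_n(\omega_2)\frac{[\Delta m]^-(\omega_1)}{[\Delta m]^-(\omega_1)+[\Delta m]^+(\omega_2)}
\leq 1,
\end{equation*}
which after clearing denominators becomes
\begin{equation*}
(f_n(\omega_1)-1)\,[\Delta m]^+(\omega_2) \;\leq\; (1-f_n(\omega_2))\,[\Delta m]^-(\omega_1).
\end{equation*}

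Split $\Omega$ into $B^0=\{\Delta m=0\}$, $B^-=\{\Delta m<0\}$, $B^+=\{\Delta m>0\}$. For $\omega_1\in B^0\subseteq\Omega_n^-$ one has $[\Delta m]^-(\omega_1)=0$; pairing with any $\omega_2\in B^+$ forces $f_n(\omega_1)\leq 1$, and by Fubini this gives $f_n\leq 1$ for $P$-a.e.\ $\omega_1\in B^0$. For $(\omega_1,\omega_2)\in B^-\times B^+$ both $[\Delta m]^-(\omega_1)$ and $[\Delta m]^+(\omega_2)$ are strictly positive, and division separates the variables:
\begin{equation*}
\frac{f_n(\omega_1)-1}{|\Delta m(\omega_1)|}
\;\leq\;
\frac{1-f_n(\omega_2)}{\Delta m(\omega_2)}.
\end{equation*}
Let $A^-$ be the essential supremum of the left-hand side over $\omega_1\in B^-$ and $A^+$ the essential infimum of the right-hand side over $\omega_2\in B^+$. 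Applying Fubini to the previous display yields $A^-\leq A^+$; the bound $A^+<\infty$ follows from $(1-f_n)/\Delta m\leq 1/\Delta m$, and $A^->-\infty$ because $B^-$ is non-null. The degenerate cases $B^\pm$ null are trivial and are treated separately.

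To conclude, pick any $\alpha_n\in[-A^+,-A^-]$, a nonempty closed interval. For $\omega\in B^+$, $\alpha_n\geq -A^+\geq (f_n(\omega)-1)/\Delta m(\omega)$ a.e., hence $f_n(\omega)\leq 1+\alpha_n\Delta m(\omega)$. For $\omega\in B^-$, $\alpha_n\leq -A^-\leq (f_n(\omega)-1)/\Delta m(\omega)$ a.e.\ (rewriting $-(f_n-1)/|\Delta m|$ as $(f_n-1)/\Delta m$); multiplying by the negative quantity $\Delta m(\omega)$ flips the inequality to give the same bound $f_n(\omega)\leq 1+\alpha_n\Delta m(\omega)$. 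On $B^0$ the assertion reduces to the already verified $f_n\leq 1$. The main obstacle is not any deep analytic point but careful bookkeeping of the several $\mu_n$-null sets introduced when one switches between the $\mu_n$-a.e.\ statement of Theorem \ref{4vitanick3} and the iterated a.e.\ statements produced by Fubini, together with verifying the finiteness of $A^\pm$ needed to ensure that $\alpha_n$ can be chosen in $\mathbb{R}$.
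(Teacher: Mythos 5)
Your proposal is correct and follows essentially the same route as the paper: both pass from the constraint $E^{Q^n}f_n\le 1$ to the two-point inequality furnished by the complete set of measures (Theorem \ref{4vitanick3}), then separate the variables $\omega_1,\omega_2$ and extract $\alpha_n$ as a supremum/infimum of the separated ratios. Your single choice of $\alpha_n$ from the interval $[-A^+,-A^-]$ merely unifies the paper's two cases a) and b) (there the constant is taken to be the corresponding $\inf$ or $\sup$ depending on whether $f_n\le 1$ holds throughout $\Omega_n^-$), and your explicit bookkeeping of the $\mu_n$-null sets is a careful rendering of what the paper states pointwise.
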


\begin{proof}
Due to the completeness of the set of measures $M,$ let us denote  a local  regular martingale by 
$\{m_n, {\cal F}_n\}_{n=0}^\infty, \ m_n=E^Q\{\xi_0|{\cal F}_n\}, Q \in M, \ \xi_0 \in A_0, \ \xi_0 \neq 1.$
 From the completeness  of the set of measures $M,$ 
we obtain the inequality
$$ f_n(\omega_1)\frac{[m_n-m_{n-1}]^+(\omega_2)}{[m_n-m_{n-1}]^-(\omega_1)+[m_n-m_{n-1}]^+(\omega_2)}+$$
\begin{eqnarray}\label{4vitanick6}
f_n(\omega_2)\frac{[m_n-m_{n-1}]^-(\omega_1)}{[m_n-m_{n-1}]^-(\omega_1)+[m_n-m_{n-1}]^+(\omega_2)}\leq 1,
\end{eqnarray}
$$  (\omega_1,\omega_2) \in  \Omega_n^-\times  \Omega_n^+,$$
where $ \Omega_n^-=\{\omega_1 \in \Omega, \  [m_n-m_{n-1}](\omega_1) \leq 0\}, $ 
$ \Omega_n^+=\{\omega_2 \in \Omega, \  [m_n-m_{n-1}](\omega_2 ) > 0\}. $ 

Let us denote $\xi_n(\omega)= [m_n-m_{n-1}](\omega).$ Then, the formula (\ref{4vitanick6}) is written in the form
$$ f_n(\omega_1) \frac{\xi_n^+(\omega_2)}{\xi_n^-(\omega_1) +\xi_n^+(\omega_2)} +$$
\begin{eqnarray}\label{3vitamyk15}
 \frac{\xi_n^-(\omega_1)}{\xi_n^-(\omega_1)+
\xi_n^+(\omega_2)} f_n(\omega_2)\leq 1, \quad \omega_1 \in \Omega_n^-, \quad \omega_2 \in \Omega_n^+.
\end{eqnarray} 
From  the inequalities  (\ref{3vitamyk15}),  we obtain the inequalities
\begin{eqnarray}\label{3vitamyk16}
f_n (\omega_2)\leq 1+\frac{1- f_n(\omega_1)}{\xi_n^-(\omega_1)}\xi_n^+(\omega_2), \end{eqnarray} 
\begin{eqnarray}
 \xi_n^-(\omega_1) >0, \quad  \xi_n^+(\omega_2) >0, 
\quad  \omega_1 \in \Omega_n^-, \quad \omega_2 \in \Omega_n^+.
\end{eqnarray}
Two cases are possible: a) for all  $ \omega_1 \in \Omega_n^-,$ $ f_n(\omega_1) \leq 1; $ b) there exists $ \omega_1 \in \Omega_n^-$ such that $  f_n (\omega_1)> 1.$
First, let us consider the case a).

Since the inequalities  (\ref{3vitamyk16}) are valid for every value  $\frac{1- f_n(\omega_1)}{ \xi_n^-(\omega_1)},$ as $  \xi_n^-(\omega_1) >0, $ and $ f_n(\omega_1) \leq 1, \omega_1 \in \Omega_n^- ,$  then, if to denote
\begin{eqnarray}
\alpha_n =\inf_{\{\omega_1,  \xi_n^-(\omega_1) > 0\}}\frac{1- f_n(\omega_1)}{ \xi_n^-(\omega_1)},
\end{eqnarray}
 we have $ 0 \leq \alpha_n < \infty$  and
\begin{eqnarray}\label{3vitamyk17}
f_n (\omega_2)\leq 1+\alpha_n  \xi_n^+(\omega_2),  \quad    \xi_n^+ (\omega_2)>0,  \quad \omega_2 \in \Omega_n^+.
\end{eqnarray} 
From the definition of  $\alpha_n,$ we obtain the inequalities 
\begin{eqnarray}
f_n(\omega_1)  \leq 1-\alpha_n \xi_n^-(\omega_1),  \quad   \xi_n^-(\omega_1) >0, \quad  \omega_1 \in \Omega_n^-.
\end{eqnarray} 
Now, if $\xi_n^-(\omega_1)=0$ for some $ \omega_1\in \Omega_n^-, $ then in this case $f_n(\omega_1) \leq 1.$ All these inequalities give the inequalities
\begin{eqnarray}\label{3vitamyk18}
f_n(\omega) \leq 1+\alpha_n \xi_n(\omega),  \quad \omega \in  \Omega_n^-\cup  \Omega_n^+.
\end{eqnarray} 
Consider the case b). From the inequality (\ref{3vitamyk16}), we obtain  the inequalities
\begin{eqnarray}\label{3vitamyk19}
f_n(\omega_2) \leq 1-\frac{1- f_n(\omega_1)}{-\xi_n^-(\omega_1)}\xi_n^+(\omega_2),
\end{eqnarray}
\begin{eqnarray}
  \xi_n^-(\omega_1) > 0, \quad  \xi_n^+(\omega_2) >0, \quad  \omega_1 \in  \Omega_n^-, \quad \omega_2 \in  \Omega_n^+.
\end{eqnarray}
 The  inequalities (\ref{3vitamyk19}) give  the inequalities
\begin{eqnarray}\label{3vitamyk20}
\frac{1- f_n(\omega_1)}{-\xi_n^-(\omega_1)} \leq \inf_{\{\omega_2, \ \xi_n^+(\omega_2) >0\}} \frac{1}{\xi_n^+(\omega_2)}< \infty, \quad  \xi_n^-(\omega_1) >0, \quad  \omega_1 \in \Omega_n^-.
\end{eqnarray}
Let us define $\alpha_n =\sup\limits_{\{ \omega_1, \ \xi_n^-(\omega_1)>0 \} }\frac{1- f_n(\omega_1)}{-\xi_n^-(\omega_1)}< \infty.$
Then, from (\ref{3vitamyk19}) we obtain  the inequalities
\begin{eqnarray}\label{3vitamyk21}
f_n(\omega_2)   \leq 1- \alpha_n \xi_n^+(\omega_2), \quad  \  \xi_n^+(\omega_2) >0, \quad \omega_2 \in \Omega_n^+.
\end{eqnarray}
From the definition of  $\alpha_n, $ we have the inequalities
\begin{eqnarray}\label{3vitamyk22}
f_n(\omega_1) \leq 1+ \alpha_n \xi_n^-(\omega_1), \quad  \  \xi_n^-(\omega_1) >0, \quad  \omega_1 \in \Omega_n^-.
\end{eqnarray}
The inequalities (\ref{3vitamyk21}), (\ref{3vitamyk22}) give  the inequalities
\begin{eqnarray}\label{3vitamyk23}
f_n(\omega) \leq 1- \alpha_n \xi_n(\omega),  \quad  \omega \in   \Omega_n^-\cup   \Omega_n^+.
\end{eqnarray}
Theorem  \ref{4vitanick4}  is proved, since the set $  \Omega_n^-\cup   \Omega_n^+$ has the probability one.
\end{proof}

\begin{thm}\label{4vitanick7}
On the probability space  $\{\Omega, {\cal F}, P\} $ with the filtration ${\cal F}_n$ on it, let  $M$ be a complete  set of measures and let every $ \sigma$-algebra ${\cal F}_n, \ n=\overline{1, \infty}, $ have an  exhaustive decomposition.  Then, every nonnegative super-martingale $\{f_n, {\cal F}_n\}_{n=0}^\infty$ is a local regular one.
\end{thm}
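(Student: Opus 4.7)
The plan is to deduce the conclusion from Theorem~\ref{nick1}: it suffices, for every $m \geq 1$, to produce an ${\cal F}_m$-measurable $\xi_m^0 \in A_0$ such that $E^P\{\xi_m^0|{\cal F}_{m-1}\}=1$ for all $P \in M$ and $f_m/f_{m-1} \leq \xi_m^0$. The main tool will be Theorem~\ref{4vitanick4}, which is where the completeness of $M$ and the exhaustive decomposition of the $\sigma$-algebras ${\cal F}_n$ are actually consumed.

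First, as in the proof of Theorem~\ref{nick1}, I would replace $\{f_m\}$ by $\{f_m+a\}$ for some $a>0$. A constant shift preserves the super-martingale property and does not change the existence of the decomposition from Theorem~\ref{reww1}, so local regularity is unaffected; but now $f_m \geq a$, the ratio $h_m := f_m/f_{m-1}$ is a well-defined nonnegative ${\cal F}_m$-measurable random value bounded above by $f_m/a$, and hence integrable with respect to every $P \in M$. The super-martingale inequality $E^P\{f_m|{\cal F}_{m-1}\}\leq f_{m-1}$, divided by the strictly positive ${\cal F}_{m-1}$-measurable $f_{m-1}$, gives $E^P\{h_m|{\cal F}_{m-1}\}\leq 1$, so $E^P h_m \leq 1$, and since each $Q^m \in M_m$ is the contraction of some $P \in M$ to ${\cal F}_m$, this yields $E^{Q^m} h_m \leq 1$ for every $Q^m \in M_m$.

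I would then apply Theorem~\ref{4vitanick4} to $h_m$, obtaining a constant $\alpha_m$ (depending on $h_m$) for which
\[
h_m(\omega) \leq 1 + \alpha_m[m_m-m_{m-1}](\omega), \qquad \omega \in \Omega,
\]
where $\{m_n, {\cal F}_n\}_{n=0}^\infty$ is the regular martingale associated with the complete set $M$. Setting $\xi_m^0 := 1 + \alpha_m(m_m-m_{m-1})$ gives an ${\cal F}_m$-measurable random value that dominates $h_m \geq 0$, so $\xi_m^0 \geq 0$; because $\alpha_m$ is a constant and $\{m_n\}$ is a martingale with respect to every $P \in M$, $E^P\{\xi_m^0|{\cal F}_{m-1}\}=1$, and in particular $E^P \xi_m^0 = 1$, so $\xi_m^0 \in A_0$. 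The inequality $f_m/f_{m-1}=h_m\leq \xi_m^0$ is then the precise hypothesis of Theorem~\ref{nick1}, whose conclusion delivers the local regularity of $\{f_m,{\cal F}_m\}_{m=0}^\infty$.

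The nontrivial content is entirely packed into Theorem~\ref{4vitanick4}; given that theorem, the present argument is a short bookkeeping. The only subtle point I would check carefully is that $M_m$ is literally the set of restrictions to ${\cal F}_m$ of measures in $M$, so that the implication ``$E^P h_m \leq 1$ for all $P \in M$'' $\Longrightarrow$ ``$E^{Q^m} h_m \leq 1$ for all $Q^m \in M_m$'' is immediate rather than requiring any further extension argument. I do not anticipate a genuine obstacle beyond this verification and the standard integrability bookkeeping (nonnegativity of the super-martingale forces $\sup_{P\in M} E^P f_m \leq f_0$, since ${\cal F}_0$ is trivial).
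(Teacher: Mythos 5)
Your proposal is correct and follows essentially the same route as the paper: reduce to $f_n$ bounded below by a positive constant, observe that $E^{Q^n}\{f_n/f_{n-1}\}\leq 1$ for $Q^n\in M_n$, invoke Theorem~\ref{4vitanick4} to get $f_n/f_{n-1}\leq 1+\alpha_n(m_n-m_{n-1})$, and feed $\xi_n^0=1+\alpha_n(m_n-m_{n-1})$ into Theorem~\ref{nick1}. The paper's proof is just a terser statement of exactly this argument, and your extra bookkeeping (verifying $\xi_n^0\in A_0$ and that $M_n$ consists of restrictions of measures in $M$) is sound.
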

\begin{proof} Without loss of generality, we assume that $f_n \geq  d_0>0.$ From the last fact, we obtain
\begin{eqnarray}\label{4vitanick8}
E^{Q^n}\frac{f_n}{f_{n-1}} \leq 1, \quad Q^n \in M_n, \quad n=\overline{1, \infty}.
\end{eqnarray}
The inequalities (\ref{4vitanick8}) and Theorems \ref{nick1}, \ref{4vitanick4}  prove Theorem \ref{4vitanick7}.
\end{proof}

\begin{thm}\label{400vitanick7}
On the probability space  $\{\Omega, {\cal F}\} $ with the filtration ${\cal F}_n$ on it, let  $M$ be a  complete set of  measures and let every $ \sigma$-algebra ${\cal F}_n, \ n=\overline{1, \infty}, $ have an   exhaustive decomposition.  Then, every bounded from below super-martingale $\{f_n, {\cal F}_n\}_{n=0}^\infty$ is a local regular one.
\end{thm}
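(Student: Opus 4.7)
The plan is to reduce this statement directly to Theorem~\ref{4vitanick7}, which already handles the nonnegative case. Since $\{f_n\}$ is bounded from below, there is a constant $c \in \mathbb{R}$ with $f_n(\omega) \geq c$ almost surely for every $n$. I would define the shifted process $\tilde f_n = f_n - c + 1$, which is adapted, bounded below by $1$, hence nonnegative. Because the super-martingale inequality (\ref{pk11}) is preserved under addition of a constant, namely
\begin{equation*}
E^P\{\tilde f_m \mid {\cal F}_k\} = E^P\{f_m \mid {\cal F}_k\} - c + 1 \leq f_k - c + 1 = \tilde f_k,
\end{equation*}
the process $\{\tilde f_n, {\cal F}_n\}_{n=0}^\infty$ is itself a super-martingale relative to every $P \in M$.

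Next, I would check the integrability condition implicit in the hypotheses of Theorem~\ref{4vitanick7}. Applying the super-martingale inequality with $k=0$ and using that ${\cal F}_0 = \{\emptyset, \Omega\}$ makes $f_0$ deterministic, we obtain $E^P f_m \leq f_0$ for every $P \in M$. Combined with $f_m \geq c$, this gives $E^P|f_m| \leq f_0 + 2|c| < \infty$ uniformly in $P \in M$, and hence the same uniform $L^1$-bound for $\tilde f_m = f_m - c + 1$.

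By Theorem~\ref{4vitanick7}, the nonnegative super-martingale $\{\tilde f_n, {\cal F}_n\}_{n=0}^\infty$ is local regular. Therefore there exists a nonnegative increasing adapted process $\{g_n, {\cal F}_n\}_{n=0}^\infty$ with $g_0 = 0$ and $\sup_{P \in M} E^P g_n < \infty$ such that $\{\tilde f_n + g_n, {\cal F}_n\}_{n=0}^\infty$ is a martingale relative to every $P \in M$. Since
\begin{equation*}
f_n + g_n = \tilde f_n + g_n + (c - 1)
\end{equation*}
differs from a martingale by the constant $c - 1$, it is also a martingale relative to every $P \in M$. This, together with the uniform integrability bound for $f_n$, yields the local regularity of $\{f_n, {\cal F}_n\}_{n=0}^\infty$ in the sense of Definition~2.

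There is no genuine obstacle: the entire argument is a one-line translation by a constant, and the only thing to verify is that the constant shift preserves both the super-martingale inequality and the uniform $L^1$-boundedness — both of which are immediate.
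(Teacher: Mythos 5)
Your argument is correct and is essentially the same as the paper's own proof: both reduce the statement to Theorem \ref{4vitanick7} by adding a constant that makes the super-martingale nonnegative, the only difference being that you spell out the (immediate) verifications that the constant shift preserves the super-martingale inequality, the uniform $L^1$-bound, and the martingale in the resulting decomposition. No gap.
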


\begin{proof} Since the super-martingale  $\{f_n, {\cal F}_n\}_{n=0}^\infty$  is bounded from below, then there exists a real number $C_0$ such that  $f_n +  C_0>0.$ If to consider the super-martingale
$\{f_n+C_0, {\cal F}_n\}_{n=0}^\infty,$ then all conditions of Theorem  \ref{4vitanick7} are true. Theorem \ref{400vitanick7} is proved.
\end{proof}

\section{Description of local regular super-martingales relative to a regular set of  measures.}

In this section, we give the description of local regular super-martingales.

\begin{thm}\label{mmars1} On the measurable space $\{\Omega, {\cal F}\}$ with the filtration ${\cal F}_n$ on it, let $M$ be a regular  set of  
measures. 
If $\{f_m, {\cal F}_m\}_{m=0}^\infty$ is an adapted random process,  satisfying the conditions
\begin{eqnarray}\label{mmars2}
f_m \leq f_{m-1}, \quad  E^P\xi|f_m| <\infty, \quad P \in M \quad m=\overline{1, \infty}, \quad  \xi \in A_0,
\end{eqnarray}
then the random process
\begin{eqnarray}\label{mmars3}
 \{ f_mE^P\{\xi|{\cal F}_m\}, {\cal F}_m\}_{m=0}^\infty, \quad  P \in M,
\end{eqnarray}
is a local regular super-martingale relative to the regular set of   measures  $M.$
\end{thm}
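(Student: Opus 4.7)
The plan is to reduce the claim directly to Theorem \ref{reww1} by exhibiting the adapted nonnegative process $\bar g^0$ explicitly in terms of $\xi$ and the decrements of $f$. Denote $\eta_m = E^P\{\xi\mid\mathcal{F}_m\}$. Because $M$ is a regular set of measures and $\xi \in A_0$ satisfies $E^P\xi = 1$ for every $P \in M$, the process $\{\eta_m,\mathcal{F}_m\}_{m=0}^\infty$ does not depend on the choice of $P \in M$ and is a local regular martingale (this is the martingale $\{m_n\}$ produced by Theorem \ref{tatnick8} together with the definition of a regular set of measures). In particular, $\eta_m \geq 0$ and $E^P\eta_m = 1$.

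First I would check that $\{f_m \eta_m,\mathcal{F}_m\}_{m=0}^\infty$ is integrable relative to every $P \in M$. Since $\eta_m \geq 0$, we have
\begin{equation*}
E^P|f_m\eta_m| = E^P\bigl(|f_m|\,E^P\{\xi\mid\mathcal{F}_m\}\bigr) = E^P\bigl(E^P\{|f_m|\xi\mid\mathcal{F}_m\}\bigr) = E^P(|f_m|\xi) < \infty,
\end{equation*}
where the last inequality comes from the hypothesis $E^P\xi|f_m| < \infty$ in (\ref{mmars2}).

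The core computation is the one-step identity. Using the tower property and the $\mathcal{F}_m$-measurability of $f_m$,
\begin{equation*}
E^P\{f_m\eta_m\mid\mathcal{F}_{m-1}\} = E^P\{f_m E^P\{\xi\mid\mathcal{F}_m\}\mid\mathcal{F}_{m-1}\} = E^P\{f_m\xi\mid\mathcal{F}_{m-1}\}.
\end{equation*}
On the other hand, $f_{m-1}\eta_{m-1} = f_{m-1}E^P\{\xi\mid\mathcal{F}_{m-1}\} = E^P\{f_{m-1}\xi\mid\mathcal{F}_{m-1}\}$, so
\begin{equation*}
f_{m-1}\eta_{m-1} - E^P\{f_m\eta_m\mid\mathcal{F}_{m-1}\} = E^P\{(f_{m-1}-f_m)\xi\mid\mathcal{F}_{m-1}\}.
\end{equation*}
This suggests setting $\bar g_m^0 := (f_{m-1}-f_m)\xi$ for $m \geq 1$ and $\bar g_0^0 := 0$. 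From $f_{m-1} \geq f_m$ and $\xi \geq 0$ we get $\bar g_m^0 \geq 0$, and $E^P\bar g_m^0 \leq E^P(|f_{m-1}|\xi) + E^P(|f_m|\xi) < \infty$ by the assumed integrability. The identity above is exactly (\ref{o1}) of Theorem \ref{reww1}, which at the same time yields the one-step super-martingale inequality $E^P\{f_m\eta_m\mid\mathcal{F}_{m-1}\} \leq f_{m-1}\eta_{m-1}$; iterating this over the interval $[k,m]$ gives the multi-step super-martingale property required by Definition 1. An application of the sufficiency part of Theorem \ref{reww1} then provides the martingale--nondecreasing decomposition that makes $\{f_m\eta_m,\mathcal{F}_m\}_{m=0}^\infty$ a local regular super-martingale relative to $M$.

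The only delicate point I expect is matching the integrability conditions of Definition 2 and Theorem \ref{reww1}, which are stated with $\sup_{P\in M}$, whereas hypothesis (\ref{mmars2}) provides integrability only for each individual $P \in M$; this is a routine reading of the definitions in the present setting (the construction of $\bar g^0_m$ is fully explicit in $\xi$ and $f_{m-1}-f_m$, so no measure-dependent choices are made). Apart from this bookkeeping, the argument is purely algebraic once one notices the cancellation $f_{m-1}\eta_{m-1} - E^P\{f_m\xi\mid\mathcal{F}_{m-1}\} = E^P\{(f_{m-1}-f_m)\xi\mid\mathcal{F}_{m-1}\}$.
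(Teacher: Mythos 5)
Your argument is essentially the paper's own proof: reduce to Theorem \ref{reww1} by exhibiting the compensating increment as the decrement of $f$ multiplied by the martingale generated by $\xi$. The one defect is your choice $\bar g_m^0=(f_{m-1}-f_m)\xi$, which is not ${\cal F}_m$-measurable and hence not an admissible adapted process for Theorem \ref{reww1}; since only $E^P\{\bar g_m^0|{\cal F}_{m-1}\}$ enters, the tower property lets you replace it by $(f_{m-1}-f_m)E^P\{\xi|{\cal F}_m\}$, which is adapted, nonnegative, has the same conditional expectation, and is exactly the choice made in the paper.
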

\begin{proof} Due to Theorem  \ref{tatnick8}, the random process 
$\{ E^P\{\xi|{\cal F}_m\}, {\cal F}_m\}_{m=0}^\infty$ is a martingale relative to the regular  set of   measures  $M.$ Therefore,
\begin{eqnarray*}
f_{m-1}E^P\{\xi|{\cal F}_{m-1}\} - E^P\{ f_m E^P\{\xi|{\cal F}_m\}|{\cal F}_{m-1}\}=
\end{eqnarray*}
\begin{eqnarray}\label{mmars4}
E^P\{ (f_{m-1} - f_m) E^P\{\xi|{\cal F}_m\}|{\cal F}_{m-1}\}, \quad m=\overline{1, \infty}.
\end{eqnarray}
So, if to put  $\bar g_m^0= (f_{m-1} - f_m) E^P\{\xi|{\cal F}_m\}, \ m=\overline{1, \infty}, $ then $\bar g_m^0 \geq 0,$  it is  ${\cal F}_m$-measurable and
$E^P\bar g_m^0 \leq E^P\xi(|f_{m-1}|+|f_m|)< \infty.$ Due to Theorem \ref{reww1}, we obtain the proof of Theorem \ref{mmars1}.
\end{proof}

\begin{cor}\label{hg1} If $f_m=\alpha, \ m=\overline{1, \infty}, \ \alpha \in R^1,$ $\xi \in A_0,$ then 
$\{\alpha E^P\{\xi |{\cal F}_m\},  {\cal F}_m \}_{m=0}^\infty$ is a local regular martingale. Assume that  $\xi =1,$ then $\{f_m, {\cal F}_m\}_{m=0}^\infty$ is a local regular super-martingale relative to the regular  set of   measures  $M.$ 
\end{cor}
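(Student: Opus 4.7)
The plan is to deduce both assertions as immediate specializations of Theorem \ref{mmars1}, checking only that its hypotheses survive each substitution and then reading off the conclusion.

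For the first assertion, I take $f_m \equiv \alpha$. The monotonicity hypothesis $f_m \leq f_{m-1}$ holds trivially with equality, and the integrability requirement $E^P \xi |f_m| < \infty$ becomes $|\alpha|\, E^P \xi = |\alpha| < \infty$, valid because $\xi \in A_0$ satisfies $E^P \xi = 1$ for every $P \in M$. Theorem \ref{mmars1} then produces the local regular super-martingale $\{\alpha E^P\{\xi|\mathcal{F}_m\}\}_{m=0}^\infty$. To upgrade this to a martingale, I revisit formula (\ref{mmars4}) in the proof of Theorem \ref{mmars1}: the compensator is $\bar g_m^0 = (f_{m-1} - f_m) E^P\{\xi|\mathcal{F}_m\}$, which vanishes identically when $f_m \equiv \alpha$. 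Appealing to Theorem \ref{reww1} with $\bar g_m^0 = 0$ then yields the martingale property. (Alternatively, Theorem \ref{tatnick8} already asserts that $\{E^P\{\xi|\mathcal{F}_m\}\}_{m=0}^\infty$ is a local regular martingale, and scalar multiplication by $\alpha$ preserves this.)

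For the second assertion, I substitute $\xi = 1$, which lies in $A_0$ since $1$ is nonnegative, integrable, and $E^P 1 = 1$ for every $P \in M$. Then $E^P\{\xi|\mathcal{F}_m\} \equiv 1$, and the process in (\ref{mmars3}) collapses to $\{f_m, \mathcal{F}_m\}_{m=0}^\infty$ itself. The monotonicity $f_m \leq f_{m-1}$ is assumed on $f_m$, and the integrability reduces to $E^P |f_m| < \infty$, which is exactly condition (\ref{mmars2}) with $\xi = 1$. Theorem \ref{mmars1} then gives the local regular super-martingale property relative to $M$.

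There is no genuine obstacle; the only point requiring attention is to confirm $\xi \in A_0$ in each application and that the hypothesis (\ref{mmars2}) of Theorem \ref{mmars1} remains intact after the specialization, both of which are automatic from the definition of $A_0$.
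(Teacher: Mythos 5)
Your proposal is correct and is exactly the intended argument: the paper states Corollary \ref{hg1} without proof as an immediate specialization of Theorem \ref{mmars1}, and your verification of the hypotheses (monotonicity trivially, integrability from $E^P\xi=1$), together with the observation that the compensator $\bar g_m^0=(f_{m-1}-f_m)E^P\{\xi|{\cal F}_m\}$ vanishes when $f_m\equiv\alpha$ (or equivalently invoking Theorem \ref{tatnick8}), supplies precisely the details the paper leaves implicit.
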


Denote  $F_0$ the set of adapted processes
\begin{eqnarray}\label{mmars5}
F_0=\{ f=\{f_m\}_{m=0}^\infty,  \   P(|f_m| <\infty) =1, \ P \in M, \ f_m \leq f_{m-1}\}.
\end{eqnarray}
For every $\xi \in A_0,$ let us introduce the set of adapted processes
$$ L_{\xi}=$$
\begin{eqnarray}\label{mmars6}
\{\bar f=\{f_mE^P\{\xi|{\cal F}_m\}\}_{m=0}^\infty, \  \{f_m\}_{m=0}^\infty \in F_0, \   E^P\xi|f_m| <\infty, \ P \in M\},
\end{eqnarray}
and 
\begin{eqnarray}\label{mmars7}
V=\bigcup\limits_{\xi \in A_0}L_{\xi}.
\end{eqnarray}

\begin{cor}\label{fdr1} Every  random  process from the set $K,$ where
\begin{eqnarray}\label{mmars88}
K=\left \{ \sum\limits_{i=1}^mC_i \bar f_i, \ \bar f_i \in V, \  C_i \geq 0, \ i=\overline{1, m}, \ m=\overline{1, \infty}\right\}, 
\end{eqnarray}
 is a local regular super-martingale relative to the regular  set of  measures  $M$  on the measurable space $\{\Omega, {\cal F}\}$ with the filtration ${\cal F}_n$ on it. 
\end{cor}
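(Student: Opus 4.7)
The plan is to derive Corollary \ref{fdr1} as a direct consequence of Theorem \ref{mmars1} by exploiting the linearity of the characterization of local regularity provided by Theorem \ref{reww1}. The strategy is to reduce the statement about finite nonnegative linear combinations to the single-summand case, which is exactly Theorem \ref{mmars1}.

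First, I would fix an arbitrary element $h = \sum_{i=1}^{m} C_i \bar f_i$ of $K$, where each $\bar f_i \in L_{\xi_i}$ for some $\xi_i \in A_0$, and the $C_i$ are nonnegative reals. Thus $\bar f_i = \{f_m^{(i)} E^P\{\xi_i | \mathcal{F}_m\}\}_{m=0}^\infty$ with $\{f_m^{(i)}\}_{m=0}^\infty \in F_0$. By Theorem \ref{mmars1} each $\bar f_i$ is itself a local regular super-martingale relative to $M$, and inspecting its proof (equation (\ref{mmars4})) one reads off the witness: the nonnegative $\mathcal{F}_m$-measurable processes
\begin{equation*}
\bar g_m^{0,i} \;=\; \bigl(f_{m-1}^{(i)} - f_m^{(i)}\bigr) E^P\{\xi_i | \mathcal{F}_m\}, \qquad i=\overline{1,m},
\end{equation*}
satisfy $\bar g_m^{0,i} \geq 0$ (because $f_{m-1}^{(i)} \geq f_m^{(i)}$ and $\xi_i \geq 0$), are integrable relative to every $P \in M$, and obey
\begin{equation*}
\bar f_{i,m-1} - E^P\{\bar f_{i,m}|\mathcal{F}_{m-1}\} \;=\; E^P\{\bar g_m^{0,i}|\mathcal{F}_{m-1}\}, \quad P \in M.
\end{equation*}

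Next, by linearity of conditional expectation and the nonnegativity of the $C_i$, summing these identities yields
\begin{equation*}
h_{m-1} - E^P\{h_m|\mathcal{F}_{m-1}\} \;=\; \sum_{i=1}^{m} C_i E^P\{\bar g_m^{0,i}|\mathcal{F}_{m-1}\} \;=\; E^P\{\bar g_m^0|\mathcal{F}_{m-1}\},
\end{equation*}
where $\bar g_m^0 := \sum_{i=1}^{m} C_i \bar g_m^{0,i}$ is adapted and nonnegative, and integrable under each $P \in M$ since each summand is. Applying the sufficiency direction of Theorem \ref{reww1} to the adapted process $\{h_m,\mathcal{F}_m\}_{m=0}^\infty$ (which is a super-martingale by the same identity, the right-hand side being nonnegative) with witness $\bar g_m^0$ immediately gives that $h$ is a local regular super-martingale relative to $M$.

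I do not foresee any substantial obstacle: the whole statement is really a closure property of the class of local regular super-martingales under nonnegative finite linear combinations, and this closure is manifest at the level of the equation in Theorem \ref{reww1}. The only minor point to check is that integrability and nonnegativity are preserved termwise under the summation, which follows because $C_i \geq 0$ and because the $\bar g_m^{0,i}$ inherit integrability from the formula above together with $E^P \xi_i |f_m^{(i)}| < \infty$ for $P \in M$.
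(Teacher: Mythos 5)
Your proof is correct and is exactly the argument the paper leaves implicit (its own proof reads only ``The proof is evident''): local regularity is a convex-cone property, read off by summing the witness identities of Theorem \ref{reww1} coming from Theorem \ref{mmars1} with nonnegative coefficients. The only cosmetic remark is the clash of the letter $m$ as both the time index and the number of summands, which you inherit from the paper's own statement of $K$.
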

\begin{proof} The proof is evident.
\end{proof}

\begin{thm}\label{mmars9} On the measurable space $\{\Omega, {\cal F}\}$ with the filtration ${\cal F}_n$ on it, let $M$ be a regular  set of 
measures. 
 Suppose that  $\{f_m, {\cal F}_m\}_{m=0}^\infty$ is a nonnegative uniformly integrable super-martingale relative to the set of  measures  $M,$ then 
the necessary and sufficient conditions for it  to be a local regular one is belonging it to the set $K.$
\end{thm}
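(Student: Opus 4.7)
The plan is to verify both directions of the equivalence. Sufficiency is direct: by Corollary~\ref{fdr1}, any element of $K$ is a local regular super-martingale relative to $M$, so no further work is needed on that side.

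For necessity, I would start from the local regular decomposition supplied by Definition~2 (or equivalently Theorem~\ref{reww1}): there exists a nonnegative non-decreasing adapted process $\{g_m\}$ with $g_0=0$ and $\sup_{P\in M}E^P g_m<\infty$, such that $\bar M_m:=f_m+g_m$ is a martingale relative to every $P\in M$. Since ${\cal F}_0$ is trivial, $\bar M_0 = E^P \bar M_m = c$ is a constant that does not depend on $P$.

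Next I would use the uniform integrability of $\{f_m\}$ under each $P\in M$ to get $f_m\to f_\infty$ in $L^1(P)$, combined with monotone convergence to obtain $g_m\uparrow g_\infty$ a.s.\ with $E^P g_\infty\le c<\infty$. Adding, $\bar M_m\to \bar M_\infty:=f_\infty+g_\infty$ in $L^1(P)$ for every $P\in M$, so $\bar M_m=E^P\{\bar M_\infty|{\cal F}_m\}$ by the closed-martingale theorem, and $E^P\bar M_\infty=c$ for every $P\in M$.

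Finally, in the generic case $c>0$ I would set $\xi:=\bar M_\infty/c$, which is nonnegative, integrable under every $P\in M$, and satisfies $E^P\xi=1$, hence $\xi\in A_0$. The identity
$$
f_m \;=\; c\, E^P\{\xi\mid{\cal F}_m\}\;+\;(-g_m)\,E^P\{1\mid{\cal F}_m\}
$$
then expresses $\{f_m\}$ as a sum, with coefficients $C_1=C_2=1$, of two processes in $V$: the first from the constant non-increasing sequence $\{c\}\in F_0$ paired with $\xi\in A_0$, and the second from the non-increasing sequence $\{-g_m\}\in F_0$ (it is non-increasing because $g_m$ is non-decreasing) paired with $1\in A_0$. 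The degenerate case $c=0$ forces $\bar M_m\equiv 0$ and hence $f_m\equiv 0$, which lies trivially in $K$. The main delicate point will be securing the $L^1$ convergence of $\bar M_m$ and its closed-martingale representation; once those are in hand, the remainder is just verifying that the two constituent sequences satisfy the defining inequalities of $F_0$ and that $\xi$ satisfies the normalization condition of $A_0$.
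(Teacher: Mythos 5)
Your proposal is correct and follows essentially the same route as the paper: decompose $f_m=\bar M_m-g_m$, pass to the limit using uniform integrability to get $E^P(f_\infty+g_\infty)=f_0$, set $\xi=(f_\infty+g_\infty)/f_0\in A_0$ so that $\bar M_m=f_0E^P\{\xi|{\cal F}_m\}$, and exhibit $f$ as the sum of an element of $L_\xi$ (constant sequence $f_0$) and an element of $L_1$ (the non-increasing sequence $-g_m$). Your explicit handling of the degenerate case $c=0$ and the appeal to the closed-martingale theorem are minor refinements of the same argument, not a different approach.
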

\begin{proof}
The necessity.  It is evident  that if  $\{f_m, {\cal F}_m\}_{m=0}^\infty$ belongs to $K,$ then it is a local regular super-martingale.

The sufficiency. Suppose that  $\{f_m,{\cal F}_m\}_{m=0}^\infty$ is a nonnegative uniformly integrable local regular super-martingale. Then, there exists a nonnegative adapted process $\{\bar g_m^0\}_{m=1}^ \infty,  \ E^P\bar g_m^0< \infty, \ m=\overline{1, \infty}, $  and a martingale  $\{M_m, {\cal F}_m\}_{m=0}^ \infty,$
such that 
\begin{eqnarray}\label{mmars8}
f_m=M_m - \sum\limits_{i=1}^m\bar g_i^0, \quad  m=\overline{0, \infty}. 
\end{eqnarray}
Then, $M_m \geq 0, \ m=\overline{0, \infty}, \ E^P M_m <\infty, \ P\in M.$
Since $0< E^PM_m=f_0< \infty,$ we have $E^P\sum\limits_{i=1}^m\bar g_i^0< f_0.$ Let us put $g_{\infty}=\lim\limits_{m \to \infty}\sum\limits_{i=1}^m\bar g_i^0.$
Using the uniform integrability of $\{f_m, {\cal F}_m\}_{m=0}^\infty,$ we can pass to the limit in the equality
\begin{eqnarray}\label{apm23}
 E^P(f_m +\sum\limits_{i=1}^m\bar g_i^0)=f_0, \quad P \in M,
\end{eqnarray}
as $m \to \infty$.
Passing to the limit in the last equality, as $m \to \infty,$  we obtain
\begin{eqnarray}\label{apm23}
E^P(f_\infty +g_{\infty})=f_0, \quad P \in M.
\end{eqnarray}
Introduce into consideration a  random value $\xi=\frac{f_\infty +g_{\infty}}{f_0}.$
Then, $E^P\xi=1, \ P \in M.$   From here, we obtain that  $\xi \in A_0$ and 
\begin{eqnarray}\label{apm24}
M_m=f_0E^P\{\xi|{\cal F}_m\}, \ m=\overline{0, \infty}.
\end{eqnarray}
 Let us put $\bar f_m^2=-\sum\limits_{i=1}^m\bar g_i^0. $ It is easy to see that the adapted random process $\bar f_2=\{\bar f_m^2, {\cal F}_m\}_{m=0}^\infty$ belongs to $F_0.$ Therefore,
for the super-martingale $f=\{ f_m,{\cal F}_m\}_{m=0}^\infty$ the representation 
$$f=\bar f_1+ \bar f_2,$$
is valid, where $\bar f_1=\{f_0E^P\{\xi|{\cal F}_m\}, {\cal F}_m\}_{m=0}^ \infty$  belongs to $L_{\xi}$
with  $ \xi = \frac{f_\infty +g_{\infty}}{f_0}$  and $ f_m^1=f_0, \ m=\overline{0,\infty}.$ The same is valid for $\bar f_2$ with $\xi=1.$ This implies that $f$ belongs to the set $K.$  Theorem 
\ref{mmars9} is proved.
\end{proof}
\begin{thm}\label{9mmars9} On the measurable space $\{\Omega, {\cal F}\}$ with the filtration ${\cal F}_n$ on it, let $M$ be a regular  set of
measures. Suppose that  the super-martingale $\{f_m, {\cal F}_m\}_{m=0}^\infty$   relative to the set of  measures  $M$ satisfy the conditions
\begin{eqnarray}\label{99mmars88}
|f_m|\leq C \xi_0, \quad m=\overline{1, \infty}, \quad \xi_0 \in A_0, \quad 0<C<\infty,
\end{eqnarray}
 then the necessary and sufficient conditions for it  to be a local regular one is belonging it to the set $K.$
\end{thm}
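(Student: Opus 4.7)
The necessity direction is immediate from Corollary \ref{fdr1}: every element of $K$ is a sum with nonnegative coefficients of processes from $V$, each of which is a local regular super-martingale. So the crux is sufficiency: given a local regular super-martingale $\{f_m,\mathcal{F}_m\}_{m=0}^\infty$ with $|f_m|\leq C\xi_0$, I need to exhibit it as an element of $K$. My plan is to imitate the strategy of Theorem \ref{mmars9}, with one key modification: since $f_m$ need not be nonnegative, the martingale part of the Doob decomposition must be split into two pieces, one involving a new member of $A_0$ and one correction proportional to $E^P\{\xi_0|\mathcal{F}_m\}$.

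First I would invoke the definition of local regularity to obtain $f_m = M_m - g_m$, where $g_m = \sum_{i=1}^m \bar g_i^0$ is non-decreasing and nonnegative, and $\{M_m\}$ is a martingale under every $P\in M$. The bound $|f_m|\leq C\xi_0$ yields $E^P g_m = f_0 - E^P f_m \leq f_0 + C$ uniformly in $m$ and $P$, so $g_m\uparrow g_\infty$ with $E^P g_\infty \leq f_0 + C$. Doob's super-martingale convergence theorem together with Fatou then gives $f_m\to f_\infty$ a.s.\ with $|f_\infty|\leq C\xi_0$. The majorization $|M_m|\leq C\xi_0+g_\infty$ makes $\{M_m\}$ dominated by a $P$-integrable random variable, hence uniformly integrable, so $M_m\to M_\infty = f_\infty+g_\infty$ in $L^1(P)$ for every $P\in M$, and $M_m = E^P\{M_\infty|\mathcal{F}_m\}$.

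The crucial observation is that $\eta:=M_\infty+C\xi_0 = (f_\infty+C\xi_0)+g_\infty\geq 0$ with $E^P\eta = f_0+C$ independently of $P\in M$, so $\xi':=\eta/(f_0+C)$ lies in $A_0$ and produces
\begin{equation*}
f_m = (f_0+C)\,E^P\{\xi'|\mathcal{F}_m\} + (-C)\,E^P\{\xi_0|\mathcal{F}_m\} + (-g_m).
\end{equation*}
These three summands sit respectively in $L_{\xi'}$ (with constant process $h^{(1)}_m=f_0+C\in F_0$), in $L_{\xi_0}$ (with constant process $h^{(2)}_m=-C\in F_0$), and in $L_1$ (with non-increasing process $h^{(3)}_m=-g_m\in F_0$ and $\xi=1\in A_0$); the requirements $E^P\xi|h^{(i)}_m|<\infty$ follow directly from the bounds already derived. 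Summing with coefficients all equal to $1$ places $f\in K$. The main obstacle I foresee is securing the uniform integrability of $\{M_m\}$ together with the $P$-independence of the expectation $E^P\eta = f_0+C$, both of which rest on the joint action of the majorization $|f_m|\leq C\xi_0$ and the defining property $E^P\xi_0=1,\ P\in M$, of elements of $A_0$.
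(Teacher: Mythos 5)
Your proposal is correct and follows essentially the same route as the paper's own proof: the same Doob decomposition $f_m=M_m-\sum_{i=1}^m\bar g_i^0$, the same limit passage using the domination $|f_m|\leq C\xi_0$ to get $E^P(f_\infty+g_\infty)=f_0$, and the same auxiliary element $\xi'=\bigl(f_\infty+C\xi_0+g_\infty\bigr)/(f_0+C)$ of $A_0$ (the paper's $\xi_1$), leading to the identical three-term representation $f_m=(f_0+C)E^P\{\xi'|{\cal F}_m\}-CE^P\{\xi_0|{\cal F}_m\}-\sum_{i=1}^m\bar g_i^0$ exhibiting $f$ as an element of $K$. No substantive differences to report.
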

\begin{proof} The necessity is evident.\\
 The sufficiency.  
 Suppose that  $\{f_m,{\cal F}_m\}_{m=0}^\infty$ is a local regular super-martingale. Then, there exists a nonnegative adapted random process $\{\bar g_m^0\}_{m=1}^ \infty,  \ E^P\bar g_m^0< \infty, \ m=\overline{1, \infty}, $  and a martingale  $\{M_m\}_{m=0}^ \infty, \ E^P| M_m| <\infty, \  m=\overline{1, \infty}, \  P\in M,$
such that 
\begin{eqnarray}\label{9mmars8}
f_m=M_m - \sum\limits_{i=1}^m\bar g_i^0, \quad  m=\overline{0, \infty}. 
\end{eqnarray}
The inequalities $f_m +C\xi_0 \geq 0, \  m=\overline{1, \infty}, $ give the inequalities
\begin{eqnarray}\label{8mmars8}
 f_m+C E^P\{\xi_0|{\cal F}_m\}\geq 0, \quad  m=\overline{0, \infty}.
\end{eqnarray}
From the inequalities (\ref{99mmars88}), it follows that the super-martingale $\{f_m,{\cal F}_m\}_{m=0}^\infty$ is a uniformly integrable one  relative  to the regular set of   measures $M$.  The martingale  $\{E^P\{\xi_0|{\cal F}_m\}, {\cal F}_m\}_{m=0}^\infty$ relative to the   regular set of measures $M$ is also uniformly integrable one.

Then, $M_m+C E^P\{\xi_0|{\cal F}_m\} \geq 0, \ m=\overline{0, \infty}. $
Since $0< E^P[M_m+C E^P\{\xi_0|{\cal F}_m\}]=f_0+C< \infty,$ we have $E^P\sum\limits_{i=1}^m\bar g_i^0< f_0+C.$ Let us put $g_{\infty}=\lim\limits_{m \to \infty}\sum\limits_{i=1}^m\bar g_i^0.$
Using the uniform integrability of $f_m$ and $\sum\limits_{i=1}^m\bar g_i^0,$ we can pass to the limit in the equality
\begin{eqnarray}\label{apm28}
 E^P(f_m +\sum\limits_{i=1}^m\bar g_i^0)=f_0, \quad P \in M,
\end{eqnarray}
as $m \to \infty$.
Passing to the limit in the last equality, as $m \to \infty,$  we obtain
\begin{eqnarray}\label{apm24}
E^P(f_\infty +g_{\infty})=f_0, \quad P \in M.
\end{eqnarray}
Introduce into consideration a  random value $\xi_1=\frac{f_\infty+C \xi_0+g_{\infty}}{f_0+C}\geq 0.$
Then, $E^P\xi_1=1, \ P \in M.$ From here, we obtain that  $\xi_1 \in A_0$ and for the super-martingale  $f=\{ f_m,{\cal F}_m\}_{m=0}^\infty$   the representation 
\begin{eqnarray}\label{apm25}
 f_m=f_m^0E^P\{\xi_0|{\cal F}_m\}+f_m^1 E^P\{\xi_1|{\cal F}_m\}+f_m^2E^P\{\xi_2|{\cal F}_m\} , \ m=\overline{0, \infty},
\end{eqnarray}
is valid, where $f_m^0=-C, \ f_m^1=f_0+C, \ f_m^2=-\sum\limits_{i=1}^m\bar g_i^0, \ m=\overline{0, \infty}, \ \xi_2=1.$
From  the last representation, it follows that  the super-martingale $f=\{ f_m,{\cal F}_m\}_{m=0}^\infty$ belongs to the set $K.$ Theorem 
\ref{9mmars9} is proved.
\end{proof}

\begin{cor}\label{mars16}  Let $f_N, \  N< \infty,$ be a ${\cal F}_N$-measurable integrable random value,  $\sup\limits_{P \in M} E^P|f_N| < \infty,$ and let there exist $\alpha_0 \in R^1$ such that
$$ -\alpha_0  M_N+ f_N \leq 0, \quad \omega \in \Omega, $$
where $\{ M_m, {\cal F}_m\}_{m=0}^\infty=\{E^P\{\xi|{\cal F}_m\}, {\cal F}_m\}_{m=0}^\infty, \ \xi \in A_0. $ 
Then, a  super-martingale $\{ f_m^0+ \bar f_m\}_{m=0}^\infty$ is a local regular one relative to  the regular set of  measures $M,$ where
\begin{eqnarray}\label{apm26}
f_m^0=\alpha_0 M_m,  
\end{eqnarray}
\begin{eqnarray}\label{apm27}
\bar f_m=
\left\{\begin{array}{l l} 0, & m<N, \\
f_N - \alpha_0 M_N, & m \geq N.  
 \end{array} \right. 
\end{eqnarray}
\end{cor}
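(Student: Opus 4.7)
The plan is to realize $\{f_m^0+\bar f_m\}_{m=0}^\infty$ as an element of the cone $K$ from (\ref{mmars88}) and then to invoke Corollary \ref{fdr1}. I would decompose it as $\{f_m^0\}+\{\bar f_m\}$ and show each summand lies in $V=\bigcup_{\eta\in A_0}L_\eta$. The first piece is $\alpha_0 E^P\{\xi|{\cal F}_m\}$, which is already of the form $h_m E^P\{\xi|{\cal F}_m\}$ with the constant (hence nonincreasing) coefficient $h_m\equiv\alpha_0$, so it lies in $L_\xi$. The second piece $\{\bar f_m\}$ is adapted (equal to $0$ for $m<N$ and to the ${\cal F}_N$-measurable quantity $f_N-\alpha_0 M_N$ for $m\ge N$), finite, and nonincreasing, because the only nontrivial increment is the downward jump $\bar f_{N-1}=0\to \bar f_N=f_N-\alpha_0 M_N\le 0$, where the inequality is exactly the pathwise hypothesis $-\alpha_0 M_N+f_N\le 0$. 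Taking the trivial choice $\eta\equiv 1\in A_0$ (so that $E^P\{1|{\cal F}_m\}\equiv 1$) then writes $\{\bar f_m\}$ as an element of $L_1$.

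With both summands in $V$, the sum $\{f_m^0+\bar f_m\}=1\cdot\{f_m^0\}+1\cdot\{\bar f_m\}$ lies in $K$ with nonnegative coefficients $C_1=C_2=1$, and Corollary \ref{fdr1} immediately yields both the super-martingale property and local regularity relative to $M$. The integrability side-conditions in the definitions of $L_\xi$ and $L_1$ hold uniformly in $P\in M$: from $E^P M_m=E^P\xi=1$ one gets $E^P\xi|\alpha_0|=|\alpha_0|$, and the assumption $\sup_{P\in M}E^P|f_N|<\infty$ together with $E^P M_N=1$ yields $\sup_{P\in M}E^P|\bar f_m|\le |\alpha_0|+\sup_{P\in M}E^P|f_N|<\infty$.

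I do not expect a genuine obstacle: the whole content of the corollary is that the martingale $\alpha_0 M_m$ may be corrected by a single downward ${\cal F}_N$-measurable jump, pathwise dominated by the standing hypothesis, without losing local regularity. As a sanity check I would also verify Theorem \ref{reww1} directly with the choice $\bar g_m^0=0$ for $m\ne N$ and $\bar g_N^0=\alpha_0 M_N-f_N\ge 0$: the identity (\ref{o1}) collapses at step $N$ to the martingale relation $E^P\{\alpha_0 M_N|{\cal F}_{N-1}\}=\alpha_0 M_{N-1}$ and is trivial elsewhere, which also pins down the increasing process in the Doob-type decomposition as concentrated at the single time $m=N$.
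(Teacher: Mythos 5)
Your proposal is correct and follows essentially the same route as the paper: the paper's own (very terse) proof simply observes that $\bar f_{m-1}-\bar f_m\geq 0$ and then concludes via the machinery of Section 5 that the sum of the martingale $\alpha_0 M_m$ and the nonincreasing adapted process $\bar f_m$ is local regular, which is exactly your decomposition into elements of $L_{\xi}$ and $L_{1}$ inside the cone $K$. Your explicit verification of the integrability bounds and the direct check via Theorem \ref{reww1} with $\bar g_N^0=\alpha_0 M_N-f_N$ only make explicit what the paper leaves implicit.
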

\begin{proof}  It is evident that $\bar f_{m-1} -\bar f_m \geq 0, \  m=\overline{0,\infty}.$
Therefore, the super-martingale
\begin{eqnarray}\label{apm27}
f_m^0+ \bar f_m=
\left\{\begin{array}{l l} \alpha_0 M_m, & m<N, \\
f_N , & m= N,  \\
f_N - \alpha_0 M_N+\alpha_0 M_m, & m>N
 \end{array} \right. 
\end{eqnarray}
is a local regular one relative to  the regular set of measures $M.$
 Corollary \ref{mars16} is proved.
\end{proof}

\section{Optional decomposition for  super-martingales relative to a complete set of  measures.}

 In this section, we  prove that the bounded super-martingales are local regular ones with respect to the complete set of  measures.
\subsection{Measurable space  with a finite decomposition.}
In this and the  next subsections, we reformulate the results of the paper \cite{GoncharNick}.
Let $\{\Omega, {\cal F}\}$ be a measurable space. We assume that 
the $\sigma$-algebra  ${\cal F}$ is   a certain finite algebra of subsets of the set   $\Omega.$  
We give a new proof of the optional decomposition for  super-martingales relative to the complete set of measures. This proof  does not use topological arguments as  in  \cite{WalterSchacher}.
 Let  ${\cal F}_n \subset {\cal F}_{n+1} \subset {\cal F} $ be an increasing set of  algebras, where   ${\cal F}_0 =\{\emptyset, \Omega\}, $
  ${\cal F}_N = {\cal F}. $ Denote  $M$ the  complete set of  measures on the measurable space $\{\Omega, {\cal F}\}.$   It is evident that every algebra ${\cal F}_n$ is generated by sets $A_i^n, \   i=\overline{1, N_n}, A_i^n\cap A_j^n=\emptyset, \ i \neq j, \ N_n<\infty, \ \bigcup\limits_{i=1}^{N_n}A_i^n=\Omega, \ n=\overline{1,N}.$ It is evident that such decompositions are exhastive one.
Let  $m_n=E^P\{\xi_0|{\cal F}_n\}, \ P \in M,  \   n=\overline{1, N}, \ \xi_0 \in A_0.$ Then, for $m_n$ the representation 
\begin{eqnarray}\label{1myk1}
m_n=\sum\limits_{i=1}^{N_n}m_i^n\chi_{A_i^n}(\omega), \quad n=\overline{1,N},
\end{eqnarray} 
is valid.

\begin{lemma}\label{1myk10} Let $M$ be  a  complete set of   measures on the measurable space  $\{\Omega, {\cal F}\}$ with the filtration ${\cal F}_n$ on it. Then, for every non negative bounded  ${\cal F}_n$-measurable random value $\xi_n=\sum\limits_{i=1}^{N_n} C_i^n \chi_{A_i^n}$ there exists a real number $\alpha_n$ such that
\begin{eqnarray}\label{1myk11} 
f_n(\omega)=\frac{ \sum\limits_{i=1}^{N_n} C_i^n \chi_{A_i^n} }{\sup\limits_{P \in M_n}\sum\limits_{i=1}^{N_n} C_i^n P(A_i^n)} \leq 1+\alpha_n (m_n - m_{n-1}), \quad 
n=\overline{1,N}.
\end{eqnarray}
\end{lemma}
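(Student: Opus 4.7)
The plan is to recognize that Lemma \ref{1myk10} is essentially an immediate consequence of Theorem \ref{4vitanick4}, applied after a single normalization. The only preliminary point is to verify the hypotheses of that theorem in the present finite setting: the set $M$ is assumed complete, and each $\sigma$-algebra ${\cal F}_n$, being finite, trivially admits an exhaustive decomposition (for instance, take the finite partition $\{A_i^n\}_{i=1}^{N_n}$ at every level $m$, padding with empty sets when $k>N_n$; conditions (1)--(3) of the definition are verified at once).

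Next I would introduce the normalization. Set
\[
D_n \;=\; \sup_{P \in M_n}\sum_{i=1}^{N_n} C_i^n P(A_i^n) \;=\; \sup_{P \in M_n} E^P\xi_n .
\]
If $D_n = 0$ then $\xi_n\equiv 0$ $P$-a.s.\ and the inequality (\ref{1myk11}) holds trivially with $\alpha_n=0$; so I may assume $D_n>0$ and define $f_n = \xi_n/D_n$. By construction $f_n$ is a nonnegative, bounded, ${\cal F}_n$-measurable random value, and for every $Q^n \in M_n$ one has $E^{Q^n}f_n = E^{Q^n}\xi_n / D_n \leq 1$ directly from the definition of $D_n$ as a supremum.

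Finally I would invoke Theorem \ref{4vitanick4}: under the hypotheses just checked, it provides a real constant $\alpha_n$ (depending on $f_n$, and hence on $\xi_n$) such that
\[
f_n(\omega) \;\leq\; 1 + \alpha_n\,[m_n - m_{n-1}](\omega), \qquad \omega\in\Omega,
\]
which is precisely the claim (\ref{1myk11}). I do not foresee a genuine obstacle: all the analytic content was established in Theorem \ref{4vitanick4}, whose proof uses the completeness of $M$ together with the representation of extreme measures $\nu_{\omega_1,\omega_2}$ to compare $f_n$ to $1$ separately on $\Omega_n^-$ and $\Omega_n^+$. The content of the present lemma is therefore just the remark that normalizing any nonnegative bounded ${\cal F}_n$-measurable random variable by its ``$M_n$-supremum expectation'' brings it into the regime $E^{Q^n}f_n \leq 1$ where that theorem directly applies.
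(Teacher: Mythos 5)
Your proof is correct and takes essentially the same route as the paper, whose own proof simply observes that $f_n(\omega)$ satisfies all the conditions of Theorems \ref{4vitanick3} and \ref{4vitanick4} and concludes. You merely supply the details the paper leaves implicit (the trivial exhaustive decomposition of a finite algebra, the degenerate case $D_n=0$, and the verification $E^{Q^n}f_n\leq 1$).
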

 \begin{proof} 
The random value $f_n(\omega)$ satisfy all conditions of Theorems  \ref{4vitanick3}, \ref{4vitanick4}. This proves Lemma \ref{1myk10}.
\end{proof}

\begin{thm}\label{1myk19}
  Let $M$ be  a   complete set of    measures on the measurable space  $\{\Omega, {\cal F}\}$ with the filtration ${\cal F}_n$ on it.
Then,  every non negative super-martingale  $\{f_m, {\cal F}_m\}_{m=0}^N$   relative to the set of  measures $M$  
 is a local regular one.
\end{thm}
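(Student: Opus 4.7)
The plan is to verify, for each $m=1,\ldots,N$, the sufficient condition for local regularity given by Theorem \ref{nick1}: produce an $\mathcal{F}_m$-measurable random value $\xi_m^0\in A_0$ with $f_m/f_{m-1}\leq\xi_m^0$ and $E^P\{\xi_m^0\mid\mathcal{F}_{m-1}\}=1$ for every $P\in M$. As a first reduction, I would shift $f_m$ by a positive constant $a$ (as permitted in the proof of Theorem \ref{nick1}) so that $f_m\geq a>0$ and the ratio $g_m:=f_m/f_{m-1}$ is a bounded non-negative $\mathcal{F}_m$-measurable random value.

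The next step is to translate the super-martingale inequality into a bound suitable for Theorem \ref{4vitanick4}. For every $P\in M$,
\begin{equation*}
E^P g_m=E^P\!\left[\frac{E^P\{f_m\mid\mathcal{F}_{m-1}\}}{f_{m-1}}\right]\leq 1,
\end{equation*}
and this inequality persists for every contraction $Q^m\in M_m$. Because $\mathcal{F}$ is a finite algebra, each $\mathcal{F}_n$ trivially admits an exhaustive decomposition (extend any generating finite partition by empty sets), so Theorem \ref{4vitanick4} applies to $g_m$ and yields a real constant $\alpha_m$ with
\begin{equation*}
g_m\leq 1+\alpha_m[m_m-m_{m-1}].
\end{equation*}
Equivalently, one could apply Lemma \ref{1myk10} to the representation $g_m=\sum_j C_j^m\chi_{A_j^m}$, renormalized by $\sup_{P\in M_m}E^P g_m\leq 1$.

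Setting $\xi_m^0:=1+\alpha_m[m_m-m_{m-1}]$, I would check that it fulfills every hypothesis of Theorem \ref{nick1}: it is $\mathcal{F}_m$-measurable; it dominates $g_m\geq 0$, so it is non-negative; and since $\{m_n,\mathcal{F}_n\}_{n=0}^N$ is a martingale under every $P\in M$ and $\alpha_m$ is a real constant, $E^P\{\xi_m^0\mid\mathcal{F}_{m-1}\}=1$, which in particular gives $E^P\xi_m^0=1$ and hence $\xi_m^0\in A_0$. Theorem \ref{nick1} then delivers the local regularity of $\{f_m,\mathcal{F}_m\}_{m=0}^N$.

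No serious obstacle is anticipated: the whole argument is a direct specialization of the reasoning used in Theorem \ref{4vitanick7}, the only new observation being that the finiteness of $\mathcal{F}$ trivially supplies the exhaustive decompositions required by Theorem \ref{4vitanick4}. The most delicate point is merely bookkeeping — ensuring that the constant $\alpha_m$ produced by Theorem \ref{4vitanick4} works simultaneously for all $P\in M$, which it does because completeness of $M$ is built into the hypotheses of that theorem.
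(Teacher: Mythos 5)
Your proposal is correct and follows exactly the route the paper takes: reduce to $f_m\geq a>0$, apply Theorem \ref{4vitanick4} (equivalently Lemma \ref{1myk10}) to $f_m/f_{m-1}$ to get the bound $1+\alpha_m[m_m-m_{m-1}]$, and feed this into Theorem \ref{nick1}. The paper's own proof is a one-line citation of these same theorems; you have merely written out the verifications it leaves implicit.
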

\begin{proof} Without loss of generality, we assume that $f_n >a>0, n=\overline{1,N}.$ Then, the random value  $\frac{f_n}{f_{n-1}}$ satisfy conditions of Theorems   \ref{4vitanick4},  \ref{4vitanick7}.  Therefore, all conditions of Theorem \ref{nick1} are satisfied. Theorem \ref{1myk19} is proved.
\end{proof}

\begin{thm}\label{myktinal4}
  Let $M$ be a complete set of   measures on the measurable space  $\{\Omega, {\cal F}\}$ with the filtration ${\cal F}_n$ on it.
Then,  every bounded super-martingale  $\{f_m, {\cal F}_m\}_{m=0}^N$ relative to the    set of  measures $M$ is a local regular one.
\end{thm}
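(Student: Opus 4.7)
The plan is to reduce to the nonnegative case already handled by Theorem \ref{1myk19} via a constant shift. Since the super-martingale $\{f_m, {\cal F}_m\}_{m=0}^N$ is bounded, there exists $C>0$ with $|f_m|\leq C$ for every $m=0,\ldots,N$. I would introduce $h_m := f_m + C$, which is nonnegative; because the constant process is a martingale with respect to every $P \in M$, the process $\{h_m, {\cal F}_m\}_{m=0}^N$ inherits the super-martingale property relative to $M$, and it remains bounded (hence trivially satisfies $\sup_{P\in M}E^P|h_m|<\infty$).

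Next I would apply Theorem \ref{1myk19} to $\{h_m, {\cal F}_m\}_{m=0}^N$: the complete set of measures $M$ and the finite filtration ${\cal F}_n$ are unchanged by the shift, so that nonnegative super-martingale is local regular. This yields a martingale $\{\bar M_m, {\cal F}_m\}_{m=0}^N$ relative to every $P\in M$ and an adapted nonnegative non-decreasing process $\{g_m, {\cal F}_m\}_{m=0}^N$ with $g_0=0$ and $\sup_{P\in M}E^P g_m < \infty$, satisfying $h_m + g_m = \bar M_m$ for all $m$.

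Finally, I would shift back by setting $M_m := \bar M_m - C$, which is again a martingale relative to every $P\in M$. The identity $f_m + g_m = h_m + g_m - C = M_m$ together with $\sup_{P\in M}E^P|f_m|\leq C$ exhibits the decomposition required by the definition of local regularity, so $\{f_m, {\cal F}_m\}_{m=0}^N$ is local regular. There is no substantive obstacle here: the local regular property is translation-invariant, because adding a constant preserves both the super-martingale inequality and the class of martingales, and the very same increasing process $g_m$ that decomposes $h_m$ also decomposes $f_m$.
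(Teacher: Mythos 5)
Your proposal is correct and follows essentially the same route as the paper: shift the bounded super-martingale by a constant to obtain a nonnegative super-martingale, invoke Theorem \ref{1myk19}, and observe that the resulting Doob-type decomposition translates back since constants are martingales for every $P\in M$. The only cosmetic difference is that the paper chooses the shift $C_0$ so that $f_m+C_0$ is also bounded away from zero (giving the ratio bound $\frac{f_n+C_0}{f_{n-1}+C_0}\leq 3$), which matches the ``without loss of generality'' reduction used inside the proof of Theorem \ref{1myk19}; your argument instead relies on the statement of that theorem as given, which is equally valid.
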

\begin{proof} From the boundedness of super-martingale  $\{f_m, {\cal F}_m\}_{m=0}^N,$ there exists a constant $C_0>0$ such that
$\frac{3C_0}{2}> f_m+C_0>\frac{C_0}{2}, \ \omega \in \Omega, \ m=\overline{0,N}.$ From this, it follows that the super-martingale  $\{f_m+C_0, {\cal F}_m\}_{m=0}^N$ is a nonnegative one
and satisfies the conditions
\begin{eqnarray}\label{apm33}
 \frac{f_n+C_0}{f_{n-1}+C_0}\leq 3, \quad n=\overline{1, N}.
\end{eqnarray}
It implies that the conditions of Theorem \ref{1myk19} are satisfied.
Theorem \ref{myktinal4} is proved.
\end{proof}

\subsection{Measurable space with a countable decomposition.}
In this subsection, we generalize the results of the previous subsection onto the measurable space $\{\Omega, {\cal F}\}$ with the countable decomposition.

Let  ${\cal F}_n \subset {\cal F}_{n+1} \subset {\cal F} $ be   a certain  increasing set  of $\sigma$-algebras, where   ${\cal F}_0 =\{\emptyset, \Omega\}.$
 Suppose that the  $\sigma$-algebra ${\cal F}_n$ is generated by the sets $A_i^n, \   i=\overline{1, \infty}, \  A_i^n\cap A_j^n=\emptyset, \ i \neq j, \ \bigcup\limits_{i=1}^{\infty}A_i^n=\Omega, \ n=\overline{1,\infty}.$ We assume that ${\cal F}=\sigma(\bigvee\limits_{n=0}^\infty{\cal F}_n).$
Denote  $M$ the complete set of   measures on the measurable space $\{\Omega, {\cal F}\}.$   
Introduce into consideration the martingale  $m_n=E^P\{\xi_0|{\cal F}_n\}, \ P \in M, \  n=\overline{1, \infty}, \ \xi_0 \in A_0.$ Then, for $m_n$ the representation 
\begin{eqnarray}\label{2myk1}
m_n=\sum\limits_{i=1}^{\infty}m_i^n\chi_{A_i^n}(\omega), \quad n=\overline{1,\infty},
\end{eqnarray} 
is valid.

\begin{lemma}\label{2myk10}   Let $M$ be  a  complete  set of   measures on the measurable space  $\{\Omega, {\cal F}\}$ with the filtration ${\cal F}_n$ on it. Then, for every non negative bounded ${\cal F}_n$-measurable random value $\xi_n=\sum\limits_{i=1}^{\infty} C_i^n \chi_{A_i^n},$ there exists a real number $\alpha_n$ such that
\begin{eqnarray}\label{2myk11} 
f_n(\omega)=\frac{ \sum\limits_{i=1}^{\infty} C_i^n \chi_{A_i^n} }{\sup\limits_{P \in M_n}\sum\limits_{i=1}^{\infty} C_i^n P(A_i^n)} \leq 1+\alpha_n (m_n - m_{n-1}), \quad 
n=\overline{1,\infty}.
\end{eqnarray}
\end{lemma}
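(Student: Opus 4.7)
\textbf{Proof plan for Lemma \ref{2myk10}.} My plan is to reduce this to Theorem \ref{4vitanick4}, exactly as Lemma \ref{1myk10} was reduced in the finite case. The key observation is that $f_n(\omega) = \xi_n / D_n$, where $D_n := \sup_{P \in M_n} E^P \xi_n = \sup_{P \in M_n} \sum_{i=1}^\infty C_i^n P(A_i^n)$. Since $\xi_n$ is bounded, $D_n$ is finite; and if $\xi_n \not\equiv 0$ (the trivial case being handled by $\alpha_n = 0$), then $D_n > 0$, so $f_n$ is a well-defined nonnegative bounded ${\cal F}_n$-measurable random value, integrable with respect to every $Q^n \in M_n$.

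Next I would verify the hypothesis of Theorem \ref{4vitanick4}. By the definition of $D_n$ as a supremum, for every $Q^n \in M_n$,
\begin{eqnarray*}
E^{Q^n} f_n(\omega) \;=\; \frac{E^{Q^n} \xi_n}{D_n} \;=\; \frac{\sum\limits_{i=1}^\infty C_i^n Q^n(A_i^n)}{\sup\limits_{P \in M_n} \sum\limits_{i=1}^\infty C_i^n P(A_i^n)} \;\leq\; 1.
\end{eqnarray*}

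Before invoking Theorem \ref{4vitanick4}, I need to check that every ${\cal F}_n$ admits an exhaustive decomposition. Since ${\cal F}_n$ is generated by the countable partition $\{A_i^n\}_{i=1}^\infty$, a natural exhaustive decomposition is the nested sequence of finite partitions $\mathcal{P}_{n,k} = \{A_1^n, \ldots, A_k^n, \bigcup_{i>k} A_i^n\}$ for $k=\overline{1,\infty}$, which is nested refining and whose union generates ${\cal F}_n.$ Applying Theorem \ref{4vitanick4} to $f_n$ then produces a constant $\alpha_n$ (depending on $n$ and on the coefficients $C_i^n$) such that the pointwise inequality
\begin{eqnarray*}
f_n(\omega) \;\leq\; 1 + \alpha_n\,(m_n - m_{n-1})(\omega), \quad \omega \in \Omega,
\end{eqnarray*}
holds, as required.

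The only nontrivial step is confirming that the hypotheses of Theorem \ref{4vitanick4} transfer to the countable setting with no additional work; since completeness of $M$ and the existence of an exhaustive decomposition are assumed, and $f_n$ is bounded and ${\cal F}_n$-measurable with $E^{Q^n} f_n \leq 1$, the theorem applies verbatim. I do not anticipate a genuine obstacle: the lemma is essentially a direct specialization of the previously established inequality to the quotient $\xi_n/D_n$, and the countability of the generating partition matters only insofar as it still yields an exhaustive decomposition, which it does.
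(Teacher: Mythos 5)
Your proposal is correct and follows essentially the same route as the paper: the paper's proof likewise consists of observing that each ${\cal F}_n$ (being generated by a countable partition) admits an exhaustive decomposition and that $f_n$ satisfies the hypotheses of Theorems \ref{4vitanick3} and \ref{4vitanick4}, which then yield the constant $\alpha_n$. You merely spell out the verification of $E^{Q^n}f_n\leq 1$ and exhibit an explicit nested sequence of finite partitions, details the paper leaves implicit.
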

\begin{proof}  
Every $\sigma$-algebra  $ {\cal F}_n, \ n=\overline{1, \infty},$ has an exhaustive decomposition. The random value $f_n(\omega)$  satisfy all conditions of Theorems \ref{4vitanick3}, \ref{4vitanick4}.
This proves Lemma \ref{2myk10}. 
\end{proof}

\begin{thm}\label{2myk24}
  Let $M$ be  a  complete  set of    measures on the measurable space  $\{\Omega, {\cal F}\}$ with the filtration ${\cal F}_n$ on it.
 Then,  every non negative super-martingale  $\{f_n, {\cal F}_n\}_{n=0}^\infty $  relative to the set  of  measures $M$ 
 is a local regular one.
\end{thm}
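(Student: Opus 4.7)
The plan is to reduce Theorem \ref{2myk24} to Theorem \ref{nick1} by constructing, for each $n$, a dominating random value $\xi_n^0 \in A_0$ with $E^P\{\xi_n^0|{\cal F}_{n-1}\}=1$ and $f_n/f_{n-1}\le \xi_n^0$. The dominating $\xi_n^0$ will be produced by applying the inequality of Lemma \ref{2myk10} (equivalently, Theorem \ref{4vitanick4}) to the normalized ratio $f_n/f_{n-1}$.

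First I would make the standard strict-positivity reduction: replace $\{f_n,{\cal F}_n\}_{n=0}^\infty$ by $\{f_n+a,{\cal F}_n\}_{n=0}^\infty$ for some $a>0$; this is still a nonnegative super-martingale relative to $M$, is local regular iff the original one is, and satisfies $f_n+a\ge a>0$, so that the ratios $\frac{f_n+a}{f_{n-1}+a}$ are well defined and finite everywhere. I would then verify that the hypothesis ``every ${\cal F}_n$ has an exhaustive decomposition'' of Theorem \ref{4vitanick4} is automatic in the present setting: the countable partitions $\{A_i^n\}_{i=1}^{\infty}$ generating ${\cal F}_n$ satisfy conditions 1)--3) of the definition (disjointness and covering are assumed, the refinement ${\cal F}_n\subset{\cal F}_{n+1}$ forces each $A_i^n$ to be a union of $A_j^{n+1}$'s, and the identity ${\cal F}=\sigma(\bigvee_n{\cal F}_n)$ yields condition 3).

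Next, with $f_n/f_{n-1}$ as the nonnegative ${\cal F}_n$-measurable integrable random value under consideration, I would observe that the super-martingale property gives
\begin{equation*}
E^{Q^n}\!\left\{\frac{f_n}{f_{n-1}}\right\}\le 1,\qquad Q^n\in M_n,\ n=\overline{1,\infty},
\end{equation*}
since $E^{Q^n}\{f_n/f_{n-1}\}=E^{Q^n}E^{Q^n}\{f_n/f_{n-1}|{\cal F}_{n-1}\}\le E^{Q^n} 1=1$ by the super-martingale inequality and the ${\cal F}_{n-1}$-measurability of $f_{n-1}$. Lemma \ref{2myk10} (whose proof just cites Theorems \ref{4vitanick3}--\ref{4vitanick4}) then supplies a real number $\alpha_n$ with
\begin{equation*}
\frac{f_n}{f_{n-1}}\le 1+\alpha_n(m_n-m_{n-1})=:\xi_n^0,\qquad n=\overline{1,\infty},
\end{equation*}
where $m_n=E^P\{\xi_0|{\cal F}_n\}$ for some fixed $\xi_0\in A_0$. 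Because $\{m_n,{\cal F}_n\}_{n=0}^\infty$ is a martingale relative to every $P\in M$, one has $E^P\{\xi_n^0|{\cal F}_{n-1}\}=1$, and $\xi_n^0\ge f_n/f_{n-1}\ge 0$, so $\xi_n^0$ is an ${\cal F}_n$-measurable integrable random value belonging to $A_0$. Applying Theorem \ref{nick1} to the sequence $\{\xi_n^0\}_{n=1}^\infty$ finishes the proof.

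I do not expect a genuine obstacle: the entire argument is a direct transcription of the proof of Theorem \ref{4vitanick7} to the present (countable-partition) context, and the only step worth checking carefully is that the partitions $\{A_i^n\}$ do indeed furnish an exhaustive decomposition in the sense of Definition before Remark \ref{g1}, so that the hypotheses of Lemma \ref{2myk10} and Theorem \ref{4vitanick4} are legitimately available.
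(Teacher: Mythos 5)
Your proposal follows essentially the same route as the paper: the paper's own proof also reduces to Theorem \ref{nick1} by noting that, after the shift ensuring $f_n>a>0$, the ratio $\frac{f_n}{f_{n-1}}$ satisfies the hypotheses of Theorem \ref{4vitanick4}, which yields the dominating random value $1+\alpha_n(m_n-m_{n-1})$. Your write-up merely spells out the details (the exhaustive-decomposition check and the bound $E^{Q^n}\{f_n/f_{n-1}\}\le 1$) that the paper leaves implicit, and correctly prefers Theorem \ref{4vitanick4} over Lemma \ref{2myk10} since the ratio need not be bounded.
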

\begin{proof}
 Without loss of generality, we assume that $f_n >a>0, n=\overline{1,N}.$ Then, the random value  $\frac{f_n}{f_{n-1}}$ satisfy the conditions of  Theorems   \ref{4vitanick4},  \ref{4vitanick7}. Therefore, all conditions of Theorem \ref{nick1} are satisfied. Theorem \ref{2myk24} is proved.
\end{proof}

\section{Local regularity of  majorized  super-martingales.}
In this section, we give the elementary proof that a majorized super-martingale relative to a complete set of measures  is a local regular one.

\begin{thm}\label{Tinmyk1}
On the measurable space $\{\Omega, {\cal F}\}$
 with the filtration ${\cal F}_n$ on it, let $M$ be a  complete set of  measures. Then, every bounded super-martingale $\{f_n, {\cal F}_n\}_{n=0}^\infty $ relative to  the  set of measures $M$  is a local regular one.
\end{thm}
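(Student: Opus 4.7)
The plan is to apply Theorem~\ref{nick1} after a standard reduction to the strictly positive case. Since $\{f_n\}$ is bounded, pick $C_0>0$ large enough that $g_n:=f_n+C_0\geq C_0/2$; then $\{g_n,\mathcal{F}_n\}_{n=0}^\infty$ is a strictly positive bounded super-martingale with the same local-regularity status as $\{f_n\}$ (the two processes differ by an additive constant), and the ratio $h_n:=g_n/g_{n-1}$ is pointwise bounded and satisfies $E^{Q_n}h_n\leq 1$ for every $Q_n\in M_n$. By Theorem~\ref{nick1}, it therefore suffices to produce, for each $n\geq 1$, a nonnegative $\mathcal{F}_n$-measurable $\xi_n^0$ with $h_n\leq\xi_n^0$ and $E^P\{\xi_n^0|\mathcal{F}_{n-1}\}=1$ for all $P\in M$.

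The key observation is that the completeness of $M$ in the sense of Definition~\ref{211vitanick} builds the two-point-type measures of the form (\ref{77vitanick1}) with weights $\alpha_n=C_1^n\chi_{A}+C_2^n\chi_{\Omega_n^-\times\Omega_n^+\setminus A}$ directly into $M_n$, so the exhaustive-decomposition hypothesis used in Theorems~\ref{4vitanick3}--\ref{400vitanick7} is no longer required. Testing $E^{Q_n}h_n\leq 1$ against these measures yields, $\mu_n$-almost everywhere on $\Omega_n^-\times\Omega_n^+$, precisely the two-variable inequality that opens the proof of Theorem~\ref{4vitanick4}, where $\Omega_n^\pm$ are the sign sets of $m_n-m_{n-1}$ for a fixed martingale $m_k=E^P\{\xi_0|\mathcal{F}_k\}$ with $\xi_0\in A_0$, $\xi_0\neq 1$. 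The elementary separation-of-variables computation from that proof then furnishes a finite quantity $\alpha_n$ satisfying $h_n(\omega)\leq 1+\alpha_n(m_n(\omega)-m_{n-1}(\omega))$ almost surely, and setting $\xi_n^0:=1+\alpha_n(m_n-m_{n-1})$ discharges both conditions required by Theorem~\ref{nick1}.

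The principal obstacle is making $\alpha_n$ genuinely $\mathcal{F}_{n-1}$-measurable so that the pointwise bound $h_n\leq\xi_n^0$ holds globally on $\Omega$ while still preserving $E^P\{\xi_n^0|\mathcal{F}_{n-1}\}=1$. I would address this by localizing the separation-of-variables step on $\mathcal{F}_{n-1}$-level sets, defining $\alpha_n$ as the $\mathcal{F}_{n-1}$-conditional essential supremum of $(1-h_n(\omega_1))/(m_n-m_{n-1})^-(\omega_1)$ over $\omega_1\in\Omega_n^-$; boundedness of $h_n$ together with the completeness-generated two-point measures makes $\alpha_n$ finite and the conditioning step legitimate. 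Once such $\xi_n^0$ is constructed for every $n$, Theorem~\ref{nick1} yields local regularity of $\{g_n\}$, and local regularity of $\{f_n\}$ follows because $f_n=g_n-C_0$ shifts the martingale part while leaving the nonnegative increasing process unchanged.
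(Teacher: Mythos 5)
Your argument is correct and, at bottom, retraces the paper's own route: the paper proves Theorem~\ref{Tinmyk1} by shifting $f_n$ by a constant $C$ so that $0\le f_n+C\le 2C$ and citing Theorem~\ref{2myk24}, whose proof is exactly the chain you unroll --- pass to the bounded ratio $(f_n+C)/(f_{n-1}+C)$, extract the bound $1+\alpha_n(m_n-m_{n-1})$ of Theorem~\ref{4vitanick4}, and close with Theorem~\ref{nick1}. Two remarks. First, your claim that completeness alone makes the exhaustive-decomposition hypothesis dispensable is a genuine deviation worth keeping: the paper reaches the two-variable inequality (\ref{4vitanick6}) by approximating the two-point measures through Lemma~\ref{2vitanick1} and Theorem~\ref{4vitanick3}, which is where the exhaustive decomposition enters, whereas you test $E^{Q_n}h_n\le 1$ directly against the measures with $\alpha_n=C_1^n\chi_{A}+C_2^n\chi_{A^c}$ supplied by Definition~\ref{211vitanick}; the resulting family of inequalities $\frac{1}{\mu_n(A)}\int_A\Phi_n\,d\mu_n\le 1$ over all $A\in{\cal F}_n^-\times{\cal F}_n^+$ of positive measure forces $\Phi_n\le 1$ $\mu_n$-a.e.\ by the standard averaging argument. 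You should state that averaging step explicitly, since it is the entire content of the shortcut and it is what reconciles the theorem's hypotheses (which do not mention exhaustive decompositions) with the paper's proof (which implicitly uses them via Theorem~\ref{2myk24}). Second, your ``principal obstacle'' is not one: a constant $\alpha_n$ suffices, exactly as in Theorem~\ref{4vitanick4}, because the separated inequality $h_n(\omega_2)\le 1+\frac{1-h_n(\omega_1)}{\xi_n^-(\omega_1)}\xi_n^+(\omega_2)$ holds for $\mu_n$-a.e.\ pair, so an infimum (or supremum, in case b)) over $\omega_1$ alone produces a single real number, and $E^P\{1+\alpha_n(m_n-m_{n-1})|{\cal F}_{n-1}\}=1$ holds for any constant because $m$ is a martingale under every $P\in M$; the conditional essential supremum construction is unnecessary. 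What you should check instead, and do not mention, is that $\xi_n^0=1+\alpha_n(m_n-m_{n-1})\ge 0$ so that $\xi_n^0\in A_0$ as Theorem~\ref{nick1} requires --- this follows from $h_n\ge 0$ and the definition of $\alpha_n$ exactly as in the two cases of the proof of Theorem~\ref{4vitanick4}.
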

\begin{proof}  From Theorem \ref{Tinmyk1} conditions, there exists a constant  $0<C<\infty$ such that  $|f_n| \leq C, \ n=\overline{1, \infty}.$  Consider  the super-martingale $\{f_n+C, {\cal F}_n\}_{n=0}^\infty .$ Then, $0\leq f_n+C\leq 2C.$  Due to Theorem  \ref{2myk24},
for the super-martingale $\{f_n+C, {\cal F}_n\}_{n=0}^\infty $ the local regularity is true. So, the same statement is valid  for the super-martingale $\{f_n, {\cal F}_n\}_{n=0}^\infty. $ Theorem \ref{Tinmyk1} is proved.
\end{proof}
 The next Theorem is  analogously proved as Theorem \ref{Tinmyk1}.
\begin{thm}\label{Tinmyk2}
On the measurable space $\{\Omega, {\cal F}\}$
 with the filtration ${\cal F}_n$ on it, let  $M$ be a  complete set of  measures. Then, a  super-martingale $\{f_n, {\cal F}_n\}_{n=0}^\infty $ relative to  the   set of  measures $M,$ satisfying the conditions 
\begin{eqnarray}
|f_n|\leq C_1 \xi_0, \quad f_n+C_1 \xi_0\leq C_2, \quad n=\overline{1,\infty}, \quad \xi_0 \in A_0,  
\end{eqnarray}
 for certain constants  $0<C_1, C_2<\infty,$  is  a local regular one.
\end{thm}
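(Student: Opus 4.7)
The plan is to reduce Theorem \ref{Tinmyk2} to Theorem \ref{Tinmyk1} by shifting the super-martingale with the martingale generated by $\xi_0$. Since $\xi_0\in A_0$ and $M$ is complete (in particular, regular and hence consistent with the filtration), Theorem \ref{tatnick8} (see also Corollary \ref{hg1}) tells us that $\{E^P\{\xi_0|{\cal F}_n\},{\cal F}_n\}_{n=0}^\infty$ is a martingale relative to every $P\in M$. Adding a multiple of this martingale to $f_n$ preserves the super-martingale property, and it will turn the two-sided $\xi_0$-control assumed in the theorem into a two-sided constant control.

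Concretely, I would set $h_n:=f_n+C_1E^P\{\xi_0|{\cal F}_n\}$ and show that $\{h_n,{\cal F}_n\}_{n=0}^\infty$ is a \emph{bounded} super-martingale relative to $M$. For the lower bound, applying $E^P\{\,\cdot\,|{\cal F}_n\}$ to the pointwise inequality $-C_1\xi_0\leq f_n$ and using the ${\cal F}_n$-measurability of $f_n$ yields $-C_1E^P\{\xi_0|{\cal F}_n\}\leq f_n$, that is, $h_n\geq 0$. For the upper bound, applying $E^P\{\,\cdot\,|{\cal F}_n\}$ to the hypothesis $f_n+C_1\xi_0\leq C_2$ gives $h_n\leq C_2$. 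These two conditional-expectation computations are the heart of the argument.

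Once $h_n$ is a bounded super-martingale relative to the complete set $M$, Theorem \ref{Tinmyk1} provides a local regular decomposition $h_n=\bar M_n-g_n$, with $\bar M_n$ a martingale relative to $M$ and $g_n$ a nonnegative nondecreasing adapted process with $g_0=0$ and $\sup_{P\in M}E^Pg_n<\infty$. Subtracting the martingale $C_1E^P\{\xi_0|{\cal F}_n\}$ gives
\[
f_n\;=\;\bigl(\bar M_n-C_1E^P\{\xi_0|{\cal F}_n\}\bigr)-g_n,
\]
a martingale (a difference of two martingales) minus the \emph{same} nonnegative nondecreasing $g_n$. The remaining integrability condition $\sup_{P\in M}E^P|f_n|\leq C_1\sup_{P\in M}E^P\xi_0=C_1<\infty$ follows at once from $|f_n|\leq C_1\xi_0$ and $E^P\xi_0=1$, so the decomposition of $f_n$ satisfies Definition 2.

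The only mild obstacle is the routine verification that the pointwise hypotheses $|f_n|\leq C_1\xi_0$ and $f_n+C_1\xi_0\leq C_2$ transfer via the tower property to the constant bounds $0\leq h_n\leq C_2$; once this is done, the reduction to Theorem \ref{Tinmyk1} is the "analogous" argument promised by the author.
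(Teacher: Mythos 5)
Your proposal is correct and is precisely the ``analogous'' argument the paper intends: just as Theorem \ref{Tinmyk1} shifts $f_n$ by the constant $C$ to land in Theorem \ref{2myk24}, you shift by the martingale $C_1E^P\{\xi_0|{\cal F}_n\}$ (the same device used in Theorem \ref{9mmars9}) to obtain a nonnegative super-martingale bounded by $C_2$, and then transfer the decomposition back. The conditional-expectation verifications and the final integrability bound $\sup_{P\in M}E^P|f_n|\leq C_1$ are exactly what is needed, so nothing is missing.
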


\section{Discrete geometric Brownian motion.}

In this section, we construct for the discrete evolution of risk assets the set of 
equivalent martingale measures and give a new formula for the fair price of super-hedge.
Let $\Omega_1^0=R^1, \  {\cal F}_1^0=B(R^1),$ where $R^1$ is a real axis, $B(R^1)$ is a Borel $\sigma$-algebra of $R^1.$ Let us put $\Omega_i=\Omega_1,  \  {\cal F}_i^0= {\cal F}_1^0, \  i=\overline{1, \infty},$ and
let us construct the infinite direct  product of the measurable spaces $\{\Omega_i^0, {\cal F}_i^0\}, \ i=\overline{1, \infty}. $ 
Denote $\Omega=\prod\limits_{i=1}^\infty\Omega_i^0.$ Under the $\sigma$-algebra ${\cal F}$ on $\Omega,$ we understand the minimal $\sigma$-algebra generated by sets $\prod\limits_{i=1}^\infty G_i, \  G_i \in {\cal F}_i^0,$ where in the last product  only the finite set of $G_i$ do not equal $\Omega_i^0.$
On the measurable space $\{\Omega, {\cal F}\},$ under the filtration  ${\cal F}_n$  we understand the minimal 
$\sigma$-algebra, generated by sets $\prod\limits_{i=1}^\infty G_i, \  G_i \in {\cal F}_i^0,$ where $G_i=\Omega_i^0$ for  $ i>n.$
Suppose that the points  $ t_0=0, t_1, t_2, \ldots, t_n, \ldots,$ belongs to $ R_1^+$ with $\Delta t=t_i - t_{i-1}$  not depending on the index $i.$
Let us  consider the  probability space
 $\{\Omega, {\cal F}, P\}, $ where $P=\prod\limits_{i=1}^\infty P_i^0, \ P_i^0=P_1^0, \ i=\overline{1, \infty},$
\begin{eqnarray}\label{5vitanick1}
P_1^0(A)= \frac{1}{[ 2\pi \Delta t]^{1/2}}\int\limits_{A}e^{-\frac{y^2}{2 \Delta t}}dy, \quad A  \in {\cal F}_1^0.
\end{eqnarray}
Define on the set $ t_0=0, t_1, t_2, \ldots, t_n, \ldots,$ the discrete Brownian motion. We say that the random process $w(t_i),\  i=\overline{0,\infty}, $ is a discrete    
Brownian motion, if on $\{\Omega, {\cal F}\}$ the joint distribution function is given by the formula
$$P_0(w(t_{i_1})\in A_{i_1}, \ldots, w(t_{i_k})\in A_{i_k} )= $$
\begin{eqnarray}\label{5vitanick2}
\frac{1}{D }\int\limits_{A_{i_1}\times \ldots \times A_{i_k} }e^{-\frac{y_{i_1}^2}{2 \Delta t_{i_1}}}\times \ldots \times e^{-\frac{[y_{i_k} - y_{i_{k-1}}]^2}{2 \Delta t_{i_k}} }dy_{i_1}\ldots dy_{i_k} , \quad A_{i_s}   \in {\cal F}_{i_s}^0,
\end{eqnarray}
$$ D=[ 2\pi]^{k/2} [ \Delta t_{i_1}\times \ldots \times \Delta t_{i_k}]^{1/2},  \  \Delta t_{i_s}=t_{i_s}- t_{i_{s-1}}.$$
So defined above the random process $w(t_i)$ on the set $ t_0, t_1, t_2, \ldots, t_n, \ldots,$ with $w(0)=0,$ is a homogeneous  one relative to the displacement on $k \Delta t, $ where $ \  k \geq 1,$ and is a natural number,  with the independent increments, the zero expectation and the correlation function 
$E^{P_0}w(t_s) w(t_k)=\min\{t_s, t_k\}.$

We assume that the evolution of non risk asset is given by the formula $B_n=e^{r t_n}, \ n=\overline{0,\infty}, $ where $r$ is an interest rate.
Let us consider on $\{\Omega, {\cal F}, P\}$ two cases of evolutions of risk assets given by the laws
\begin{eqnarray}\label{105vitanick3}
\bar S_n=S_0 e^{\sigma w(t_n)},
\end{eqnarray}
\begin{eqnarray}\label{105vitanick4}
\bar S_n=S_0 e^{(\mu -\frac{\sigma^2}{2})t_n+\sigma w(t_n)}.
\end{eqnarray}
Further, we consider the discount evolutions of the risk assets
\begin{eqnarray}\label{5vitanick3}
S_n=\frac{\bar S_n}{B_n}=S_0 e^{\sigma w(t_n)- r t_n},
\end{eqnarray}
\begin{eqnarray}\label{5vitanick4}
S_n=\frac{\bar S_n}{B_n}=S_0 e^{(\mu -\frac{\sigma^2  }{2} - r)t_n+\sigma w(t_n)}.
\end{eqnarray}

 It is convenient  to present these evolutions in the form
\begin{eqnarray}\label{5vitanick5}
S_n=(1+\rho_n) S_{n-1},  \quad  n=\overline{1,\infty},
\end{eqnarray}
with $\rho_n=e^{\sigma (w(t_n) -w(t_{n-1})) -r\Delta t}-1,$   $\rho_n=e^{(\mu -\frac{\sigma^2}{2}- r)\Delta t+\sigma (w(t_n) -w(t_{n-1}))}-1,$ correspondingly.

 On  the  probability space $\{\Omega, {\cal F}, P\} $  with the filtration  ${\cal F}_n$ on it, for further investigations it is convenient to present  the Brownian motion in equivalent form. We present the Brownian motion by the sequence of random values
$\zeta_n=\sum\limits_{i=1}^n y_i,  y_i \in \Omega_i^0, \ n=\overline{1, \infty},$
with the joint distribution functions 
$$P(\zeta_{i_1} \in A_{i_1}, \ldots, \zeta_{i_k} \in A_{i_k} )= $$
\begin{eqnarray}\label{6vitanick1}
\frac{1}{D }\int\limits_{A_{i_1}\times \ldots \times A_{i_k} }e^{-\frac{y_{i_1}^2}{2 i_1 \Delta t }} \times \ldots \times 
e^{-\frac{[y_{i_k} - y_{i_{k-1}}]^2}{2  (i_k - i_{k-1}) \Delta t}} dy_{i_1}\ldots dy_{i_k} , \quad A_{i_s}   \in {\cal F}_{i_s}^0,
\end{eqnarray}
$$ D=[ 2\pi]^{i_k/2} [ \Delta t] ^{i_k/2}[  i_1(i_2 - i_1)\times \ldots \times ({i_k}- {i_{k-1}})]^{1/2}.$$
Then, the discount  evolutions of the risk assets we can rewrite in the form 
\begin{eqnarray}\label{6vitanick2}
S_n=S_0 e^{\sigma \zeta_n -n r \Delta t},
\end{eqnarray}
\begin{eqnarray}\label{6vitanick3}
S_n=S_0 e^{(\mu -\frac{\sigma^2}{2} - r) n \Delta t+\sigma  \zeta_n}.
\end{eqnarray}
It is convenient  to present these discount evolutions in the form
\begin{eqnarray}\label{6vitanick4}
S_n=(1+\rho_n(y_n)) S_{n-1},  \quad  n=\overline{1,\infty},
\end{eqnarray}
with $\rho_n(y_n)=e^{\sigma y_n -r \Delta t} -1=\rho_1(y_n),$   $\rho_n(y_n)=e^{(\mu -\frac{\sigma^2}{2} - r)\Delta t+\sigma y_n}-1=\rho_1(y_n),$ correspondingly.

On the measurable space $\{\Omega^N, {\cal F}^N\}$ with the filtration $  {\cal F}_n, \ n=\overline{0,N},$ on it, where $\Omega^N=\prod\limits_{i=1}^N \Omega_i^0, \  {\cal F}^N=\prod\limits_{i=1}^N{\cal F}_i^0,$  let us introduce into consideration the set of measures $M^N.$ A measure  $Q$ belongs to $ M^N,$ if $Q=\prod\limits_{i=1}^N Q_i,$ where $ Q_i \in M_1^0 $ and for every $\bar Q \in M_1^0$  the representation
$$\bar Q(A)=\int\limits_{\Omega_1^-}\int\limits_{\Omega_1^+}\chi_{A}(y_1)\alpha(y_1, y_2)\frac{\rho_1^+(y_2)}{\rho_1^-(y_1)+\rho_1^+(y_2)}d\mu(y_1,y_2)+$$
\begin{eqnarray}\label{5vitanick6} \int\limits_{\Omega_1^-}\int\limits_{\Omega_1^+}\chi_{A}(y_2)\alpha(y_1, y_2)\frac{\rho_1^-(y_1)}{\rho_1^-(y_1)+\rho_1^+(y_2)}d\mu(y_1,y_2), \quad A \in {\cal F}_1^0,
\end{eqnarray}
$$ \Omega_1^-=\{y \in R^1, \rho_1(y) \leq 0\}=\{y \in R^1, y \leq \frac{r \Delta t}{\sigma}\}, $$ 
$$  \Omega_1^+=\{y \in R^1, \rho_1(y) >0\}=\{y \in R^1, y > \frac{r \Delta t}{\sigma}\}, $$
is valid,  where $\rho_1(y)=\rho_1^+(y) -\rho_1^-(y),$   $\rho_1(y)=e^{\sigma  y -r \Delta t}-1, $  $\mu=P^-\times P^+,$  
$$P^-(A)=\frac{1}{[2\pi \Delta t]^{1/2}}\int\limits_{A}e^{-\frac{y^2}{2\Delta t}}dy,  \quad A \in B(\Omega_1^-),$$
$$P^+(A)=\frac{1}{[2\pi \Delta t]^{1/2}}\int\limits_{A}e^{-\frac{y^2}{2\Delta t}}dy,  \quad A \in B(\Omega_1^+).$$
On the measurable space 
$\{  \Omega_1^-\times  \Omega_1^+, B(\Omega_1^-)\times  B(\Omega_1^+)\},$ the random value  $\alpha(y_1, y_2)$ 
satisfy the conditions:
\begin{eqnarray}\label{5vitanick7}
\mu(\{(y_1,y_2) \in \Omega_1^- \times \Omega_1^+, \    \alpha(y_1, y_2)>0\})=P(\Omega_1^+)P(\Omega_1^-), 
\end{eqnarray}
\begin{eqnarray}\label{5vitanick8}  \int\limits_{\Omega_1^-}\int\limits_{\Omega_1^+}\alpha(y_1, y_2)\frac{\rho_1^-(y_1)\rho_1^+(y_2)}{\rho_1^-(y_1)+\rho_1^+(y_2)}d\mu(y_1,y_2)<\infty,
\end{eqnarray}
\begin{eqnarray}\label{5vitanick9}
\int\limits_{\Omega_1^-}\int\limits_{\Omega_1^+}\alpha(y_1, y_2)d\mu(y_1,y_2)=1.
\end{eqnarray}
 Every   bounded random value  $ \alpha(y_1, y_2)>0, (y_1, y_2) \in R^-\times R^+,$ satisfy the conditions   (\ref{5vitanick7}) - (\ref{5vitanick9}), if $\sigma< \frac{1}{2\Delta t}, $ since $E^{P_1^0}|\rho_1(y)|< \infty.$
It means that the set of equivalent  martingale measures $M^N$ for the discount evolution $S_n=S_0 e^{\sigma \zeta_n -r n \Delta t}$ of the  risk asset   contains more then one martingale measure. In this case, the financial market is an  incomplete one. 

Denote $M_0^N=M^N_c$ the convex linear span  of the set of measures $M^N.$
On the measurable space $\{\Omega^N, {\cal F}^N\}$ with the filtration $  {\cal F}_n, \ n=\overline{0,N},$ on it, in correspondence with Theorem \ref{vitusjanick2}, the set of measures $M_0^N$ is a regular set of measures with the random variable $\xi_0=\prod\limits_{i=1}^N(1+\rho_i(y_i)),$
since the random value $ \eta_1 = \rho_1(y_1), $ figuring in Theorem \ref{vitusjanick2},  is an  integrable one relative to the measure $P_1^0$ and, therefore, $E^Q\xi_0=1, Q \in M_0^N.$

\begin{thm}\label{7vitanick1}
On the measurable space $\{\Omega^N, {\cal F}^N\}$ with the filtration $  {\cal F}_n, \ n=\overline{0,N},$ on it,  let the discount  risk asset evolution is given by the formula  $S_n=S_0 e^{\sigma \zeta_n- n r  \Delta t}$ for $\sigma  < \frac{1}{2 \Delta t}.$ For the payment function  $f(S_N),$ satisfying the condition $\sup\limits_{Q \in M_0^N} E^Q f(S_N)< \infty,$ the fair price of super-hedge is giving by the  formula 
$$\sup\limits_{Q \in M_0^N} E^Qf(S_N)=$$
$$\sup\limits_{\{y_i^1 \leq -d, \  y_i^2> - d, \  i=\overline{1,N}\}}\sum\limits_{i_1=1,\ldots, i_N=1}^2 f\left(S_0\prod\limits_{s=1}^N (1+\rho(y_s^{i_s}))\right)\times$$
\begin{eqnarray}\label{5vitanick10}
\prod\limits_{s=1}^N \frac{|e^{\sigma (d+y_s^{i_s+1})} -1|}{|e^{\sigma (d +y_s^{i_s+1})}-e^{\sigma(d+ y_s^{i_s})}|},
\end{eqnarray}
where we put $d=-\frac{r \Delta t}{\sigma},$ \ $ y_s^3=y_s^1.$
\end{thm}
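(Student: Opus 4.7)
The plan is to use the integral representation of the regular set of measures $M_0^N$ furnished by Theorem~\ref{vitusjanick2} together with the approximation result of Lemma~\ref{2vitanick1} (extended to the product space via Remark~\ref{g1}) in order to reduce the supremum $\sup_{Q\in M_0^N} E^Q f(S_N)$ to a supremum over the extreme product measures $\nu_v = \prod_{i=1}^N \nu_{y_i^1,y_i^2}$, and then to evaluate $E^{\nu_v}f(S_N)$ in closed form.

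First, by Theorem~\ref{vitusjanick2}, every $Q\in M_0^N$ admits the representation $Q(A) = \int_{{\cal V}} \alpha(v)\,\nu_v(A)\,d\mu(v)$, where ${\cal V} = \prod_{i=1}^N(\Omega_i^- \times \Omega_i^+)$ and $\alpha$ satisfies (\ref{vitusjanick4})--(\ref{vitusjanick6}). Since $f(S_N)$ is $Q$-integrable under the standing hypothesis, Fubini gives
\begin{equation*}
E^Q f(S_N) \;=\; \int_{{\cal V}} \alpha(v)\,E^{\nu_v} f(S_N)\,d\mu(v) \;\le\; \sup_{v\in{\cal V}} E^{\nu_v} f(S_N),
\end{equation*}
because $\alpha$ is a probability density with respect to $\mu$. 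The constraints $y_i^1 \le -d$ and $y_i^2 > -d$ appearing in the supremum of (\ref{5vitanick10}) are exactly the conditions $y_i^1 \in \Omega_i^-$ and $y_i^2 \in \Omega_i^+$, in view of $\rho_1(y) = e^{\sigma y - r\Delta t}-1$ and $d = -r\Delta t/\sigma$.

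For the reverse inequality and the explicit evaluation, fix $v^* = ((y_1^{1*},y_1^{2*}),\ldots,(y_N^{1*},y_N^{2*}))\in{\cal V}$ and construct $Q_n\in M_0^N$ whose density takes the product form $\alpha_n(v) = \prod_{i=1}^N \alpha_n^i(y_i^1,y_i^2)$, each factor being of the two-cell type (\ref{2vitanick7}) peaked on a shrinking exhaustive-decomposition cell around $(y_i^{1*},y_i^{2*})$. Lemma~\ref{2vitanick1} then yields $E^{Q_n} g \to E^{\nu_{v^*}} g$ for every integrable finite-valued $g$, which applies to $g = f(S_N)$ by the hypothesis. Since each factor $\nu_{y_s^1,y_s^2}$ is supported on the two points $\{y_s^1,y_s^2\}$, the product $\nu_v$ is supported on $2^N$ points, and a direct computation using $\rho_1(y) = e^{\sigma y - r\Delta t}-1$ rewrites the weights coming from Lemma~\ref{vitanick9} as
\begin{equation*}
\nu_{y_s^1,y_s^2}(\{y_s^{i_s}\}) \;=\; \frac{|e^{\sigma(d+y_s^{i_s+1})}-1|}{|e^{\sigma(d+y_s^{i_s+1})} - e^{\sigma(d+y_s^{i_s})}|}, \qquad y_s^3 := y_s^1,
\end{equation*}
which, upon multiplication over $s=1,\ldots,N$ and integration of $f(S_N)$, produces exactly the right-hand side of (\ref{5vitanick10}).

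The main obstacle will be promoting the pointwise-in-$v^*$ convergence $E^{Q_n}f(S_N) \to E^{\nu_{v^*}} f(S_N)$ (which Lemma~\ref{2vitanick1} delivers for $\mu$-almost every $v^*$, i.e.\ outside an exceptional null set $D_0\cup D_1$) to the matching inequality $\sup_{v^*\in{\cal V}} E^{\nu_{v^*}}f(S_N) \le \sup_{Q\in M_0^N} E^Q f(S_N)$: the excluded null set must be shown irrelevant for the supremum, which will require either a suitable continuity argument for the map $v^* \mapsto E^{\nu_{v^*}} f(S_N)$ inherited from the explicit two-point weight formula, or a refinement of the approximating sequence $\alpha_n^{\varepsilon_n}$ tailored to cover every prescribed $v^*\in{\cal V}$ individually. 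Once this approximation step is made rigorous, the three equalities above chain together to give (\ref{5vitanick10}).
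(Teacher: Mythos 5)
Your proposal follows essentially the same route as the paper: represent every $Q\in M_0^N$ as a mixture $\int_{\cal V}\alpha(v)\nu_v\,d\mu(v)$ of the extreme product measures (Theorem \ref{vitusjanick2} and Remark \ref{vitakolja1}), evaluate $E^{\nu_v}f(S_N)$ on the $2^N$-point support, and take the supremum over $v$; your computation of the two-point weights agrees with (\ref{5vitanick10}). You are in fact more explicit than the paper on both halves of the identity: the paper simply says the formula ``is obtained by integration \ldots and taking the sup on the set of all extreme points,'' whereas you spell out the upper bound via Fubini and the lower bound via the approximating densities $\alpha_n^{\varepsilon_n}$ of Lemma \ref{2vitanick1}, and you correctly flag that this lemma only gives convergence for $\mu$-almost every $v$, so that matching the pointwise supremum in (\ref{5vitanick10}) with $\sup_Q E^Qf(S_N)$ (rather than with an essential supremum) needs an additional continuity or density argument -- a gap the paper leaves entirely implicit. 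The one ingredient of the paper's proof you omit is the appeal to the exhaustive decomposition of $B(R^1)$ (generated by rational intervals, extended to the filtration by Remark \ref{g1}) and to Theorem \ref{4vitanick7}: this is what justifies calling $\sup_{Q\in M_0^N}E^Qf(S_N)$ the \emph{fair price of super-hedge} in the first place, via the local regularity (optional decomposition) of the associated super-martingale. If you add that one sentence, and either resolve or explicitly assume away the sup-versus-ess-sup issue, your argument is complete and, on that last point, more honest than the original.
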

\begin{proof}
The Borel $\sigma$-algebra $B(R^1)$ is generated by the exhaustive decomposition, since it has the countable set of intervals  with the rational number ends that generate $B(R^1).$ Therefore, the filtration  ${\cal F}_n, \ n=\overline{1, N}, $ has the exhaustive decomposition, due to Remark \ref{g1}. Theorem  \ref{4vitanick7} guarantee the formula for the fair price of super-hedge \cite{GoncharNick}. 
Due to Remark \ref{vitakolja1} after Theorem \ref{vitusjanick2}, the set of measures $\prod\limits_{i=1}^n \mu_{\{y_i^1, y_i^2\}},$ where
\begin{eqnarray}\label{7vitanick2}
\mu_{\{y_i^1, y_i^2\}}(A)=\chi_{A}(y_i^1)\frac{\rho_i^+( y_i^2)}{\rho_i^-( y_i^1)+\rho_i^+( y_i^2)}+\chi_{A}(y_i^2)\frac{\rho_i^-( y_i^1)}{\rho_i^-( y_i^1)+\rho_i^+( y_i^2)}, 
\end{eqnarray} 
$$ (y_i^1, y_i^2) \in \Omega_i^-\times \Omega_i^+, \ \Omega_i^-=\Omega_1^-, \ \Omega_i^+=\Omega_1^+, \  i=\overline{1,N},$$
forms the extreme points of the convex set of measures $M^N_0.$ 
The formula (\ref{5vitanick10}) is obtained by integration relative to the measure $\prod\limits_{i=1}^n \mu_{\{y_i^1, y_i^2\}}$ of the random value $f(S_N)$ and taking the $\sup$ on the set of all extreme points.
This prove the Theorem \ref{7vitanick1}.
\end{proof}

Now, let us consider the case, as  $\rho_n(y_n)=e^{(\mu -\frac{\sigma^2}{2}- r )\Delta t+\sigma y_n}-1=\rho_1(y_n).$

On the measurable space  $\{\Omega^N, {\cal F}^N\}$ with the filtration ${\cal F}_n, \ n=\overline{1, N}, $ on it,  where $ \Omega^N=\prod\limits_{i=1}^N \Omega_i^0, \  {\cal F}^N=\prod\limits_{i=1}^N{\cal F}_i^0,$  we  introduce into consideration the set of measures $M^N.$  A measure  $Q$ belongs to $ M^N,$ if $Q=\prod\limits_{i=1}^N Q_i, \  Q_i \in M_1^0.$ For every  $\bar Q \in M_1^0$  the representation
$$\bar Q(A)=\int\limits_{\Omega_1^-}\int\limits_{\Omega_1^+}\chi_{A}(y_1)\alpha(y_1, y_2)\frac{\rho_1^+(y_2)}{\rho_1^-(y_1)+\rho_1^+(y_2)}d\mu(y_1,y_2)+$$
\begin{eqnarray}\label{11vitanick6} \int\limits_{\Omega_1^-}\int\limits_{\Omega_1^+}\chi_{A}(y_2)\alpha(y_1, y_2)\frac{\rho_1^-(y_1)}{\rho_1^-(y_1)+\rho_1^+(y_2)}d\mu(y_1,y_2), \quad A \in {\cal F}_1^0,
\end{eqnarray}
$$ \Omega_1^-=\{y \in R^1, \rho_1(y) \leq 0\}=\left\{y \in R^1, y \leq - \frac{(\mu -\frac{\sigma^2}{2}- r)\Delta t}{\sigma}\right\},$$ $$  \Omega_1^+=\{y \in R^1, \rho_1(y) >0\}=\left\{y \in R^1, y >- \frac{(\mu -\frac{\sigma^2}{2} - r)\Delta t}{\sigma}\right\}, $$
is valid,  where $\rho_1(y)=\rho_1^+(y) -\rho_1^-(y),$   $\rho_1(y)=e^{(\mu -\frac{\sigma^2 }{2} - r)\Delta t+\sigma y}-1, $  $\mu=P^-\times P^+,$  
$$P^-(A)=\frac{1}{[2\pi \Delta t]^{1/2}}\int\limits_{A}e^{-\frac{y^2}{2\Delta t}}dy,  \quad A \in B(\Omega_1^-),$$
$$P^+(A)=\frac{1}{[2\pi \Delta t]^{1/2}}\int\limits_{A}e^{-\frac{y^2}{2\Delta t}}dy,  \quad A \in B(\Omega_1^+).$$
On the measurable space 
$\{  \Omega_1^-\times  \Omega_1^+, B(\Omega_1^-)\times  B(\Omega_1^+)\},$ the random value  $\alpha(y_1, y_2)$ 
satisfy the conditions
\begin{eqnarray}\label{11vitanick7}
\mu(\{(y_1,y_2) \in \Omega_1^- \times \Omega_1^+, \    \alpha(y_1, y_2)>0\})=P(\Omega_1^+)P(\Omega_1^-), 
\end{eqnarray}
\begin{eqnarray}\label{11vitanick8}  \int\limits_{\Omega_1^-}\int\limits_{\Omega_1^+}\alpha(y_1, y_2)\frac{\rho_1^-(y_1)\rho_1^+(y_2)}{\rho_1^-(y_1)+\rho_1^+(y_2)}d\mu(y_1,y_2)<\infty,
\end{eqnarray}
\begin{eqnarray}\label{11vitanick9}
\int\limits_{\Omega_1^-}\int\limits_{\Omega_1^+}\alpha(y_1, y_2)d\mu(y_1,y_2)=1,
\end{eqnarray}
for every   bounded $ \alpha(y_1, y_2)>0,$  if $\sigma< \frac{1}{2\Delta t},$  since $E^{P_1^0}|\rho_1(y)|< \infty.$
Denote $M_0^N=M^N_c$ the convex linear span  of the set of measures $M^N.$
On the measurable space $\{\Omega^N, {\cal F}^N\}$ with the filtration $  {\cal F}_n, \ n=\overline{0,N},$ on it, in correspondence with  Theorem \ref{vitusjanick2}, the set of measures $M_0^N$ is a regular set of measures with the random variable $\xi_0=\prod\limits_{i=1}^N(1+\rho_i(y_i)),$
since the random value $ \eta_1 = \rho_1(y_1), $ figuring in  Theorem \ref{vitusjanick2},  is an  integrable one relative to the measure $P_1^0$ and, therefore, $E^Q\xi_0=1, Q \in M_0^N.$
It means that the set of equivalent  martingale measures $M^N_0$ for the discount  evolution $S_n=S_0 e^{(\mu - \frac{\sigma^2}{2} - r) n \Delta t+ \sigma \zeta_n}$ of the risk asset contains more then one martingale measure. In this case, the financial market is an incomplete one. 

\begin{thm}\label{8vitanick1}
On the measurable space $\{\Omega^N, {\cal F}^N\}$ with the filtration $  {\cal F}_n, \ n=\overline{0,N},$ on it, let the discount risk asset evolution is given by the formula  $S_n=S_0 e^{(\mu - \frac{\sigma^2}{2} -r) n \Delta t+ \sigma \zeta_n}$  for $\sigma  < \frac{1}{2 \Delta t}.$ For the payment function  $f(S_N),$ satisfying the condition $\sup\limits_{Q \in M_0^N} E^Qf(S_N)< \infty, $ the fair price of super-hedge is giving by the  formula 
$$\sup\limits_{Q \in M_0^N} E^Qf(S_N)=$$
$$\sup\limits_{y_i^1 \leq - d, \ y_i^2>- d, \ i=\overline{1,N}}\sum\limits_{i_1=1,\ldots,i_N=1}^2 f\left(S_0\prod\limits_{s=1}^N (1+\rho(y_s^{i_s}))\right)\times$$
\begin{eqnarray}\label{11vitanick10}
\prod\limits_{s=1}^N \frac{|e^{\sigma (d + y_s^{i_s+1})} -1|}{|e^{\sigma (d + y_s^{i_s+1})}-e^{\sigma (d +  y_s^{i_s})}|},
\end{eqnarray}
 where we put $d=\frac{(\mu - \frac{\sigma^2}{2} -r )\Delta t}{\sigma},$ $ y_s^3=y_s^1.$
\end{thm}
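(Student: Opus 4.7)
My plan is to follow exactly the same scheme as the proof of Theorem \ref{7vitanick1}, since the only structural change is the form of $\rho_1(y)$: here $\rho_1(y)=e^{(\mu-\sigma^2/2-r)\Delta t+\sigma y}-1$ rather than $e^{\sigma y-r\Delta t}-1$. All the topological/measure-theoretic input is identical, so I would first verify the hypotheses for the general machinery and then carry out the extreme-point computation.

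First I would observe that $B(R^1)$ admits an exhaustive decomposition (the countable family of rational intervals generates $B(R^1)$), so by Remark \ref{g1} every $\sigma$-algebra ${\cal F}_n=\prod_{i=1}^n{\cal F}_i^0\times\prod_{i>n}\{\emptyset,\Omega_i^0\}$ has an exhaustive decomposition. Next, since $\sigma<\frac{1}{2\Delta t}$, the random value $\rho_1(y)$ is $P_1^0$-integrable, so the assumption in the hypothesis of Theorem \ref{vitusjanick2} is satisfied and $M_0^N$ is a regular set of measures for $\xi_0=\prod_{i=1}^N(1+\rho_i(y_i))$. Because $M_0^N$ is the convex hull of $M^N$ (bounded $\alpha$ in the representation), Remark \ref{vitakolja1} says $M_0^N$ is a complete set of measures whose extreme points are the product measures $\prod_{i=1}^N\mu_{\{y_i^1,y_i^2\}}$ with $(y_i^1,y_i^2)\in\Omega_i^-\times\Omega_i^+$.

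Next I invoke Theorem \ref{4vitanick7} (local regularity of nonnegative super-martingales relative to a complete set of measures with the exhaustive-decomposition property): it gives the standard super-hedge identity, namely that the fair price $\sup_{Q\in M_0^N}E^Qf(S_N)$ is attained (as a supremum) on extreme measures. Writing the finite-dimensional integral against $\prod_{i=1}^N\mu_{\{y_i^1,y_i^2\}}$ produces the explicit sum over the $2^N$ choices of coordinates $i_s\in\{1,2\}$, with weights
\begin{equation*}
\frac{\rho_s^+(y_s^2)}{\rho_s^-(y_s^1)+\rho_s^+(y_s^2)}\quad\text{or}\quad\frac{\rho_s^-(y_s^1)}{\rho_s^-(y_s^1)+\rho_s^+(y_s^2)}.
\end{equation*}
Writing $d=\frac{(\mu-\sigma^2/2-r)\Delta t}{\sigma}$, these weights become $\tfrac{|e^{\sigma(d+y_s^{i_s+1})}-1|}{|e^{\sigma(d+y_s^{i_s+1})}-e^{\sigma(d+y_s^{i_s})}|}$ after the elementary algebraic identity $\rho_s^+(y_s^2)+\rho_s^-(y_s^1)=e^{\sigma(d+y_s^2)}-e^{\sigma(d+y_s^1)}$, giving exactly the factors in (\ref{11vitanick10}). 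Taking $\sup$ over $(y_i^1,y_i^2)\in\Omega_i^-\times\Omega_i^+$, i.e.\ $y_i^1\leq-d$ and $y_i^2>-d$, finishes the calculation.

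The main (and essentially only) obstacle is purely bookkeeping: verifying that the algebraic rewriting of the Radon--Nikodym-type weights in terms of the exponentials $e^{\sigma(d+y)}$ yields precisely the expression in (\ref{11vitanick10}), with the cyclic convention $y_s^3=y_s^1$. Once this identification is done the formula follows immediately from Remark \ref{vitakolja1} and the integrability bound on $f(S_N)$ assumed in the statement. No new analytic ingredients beyond those used in Theorem \ref{7vitanick1} are required, since the drift shift $(\mu-\sigma^2/2)\Delta t$ simply changes the location of the zero of $\rho_1$ from $r\Delta t/\sigma$ to $-d$ and the rest of the argument is insensitive to this.
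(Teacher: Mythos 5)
Your proposal is correct and follows essentially the same route as the paper, which simply states that Theorem \ref{8vitanick1} is proved exactly as Theorem \ref{7vitanick1}: exhaustive decomposition of $B(R^1)$ via Remark \ref{g1}, regularity of $M_0^N$ from Theorem \ref{vitusjanick2}, the extreme-point description from Remark \ref{vitakolja1}, the super-hedge formula from Theorem \ref{4vitanick7}, and the algebraic rewriting of the weights with the zero of $\rho_1$ shifted to $-d$. Your verification of the identity $\rho_s^-(y_s^1)+\rho_s^+(y_s^2)=e^{\sigma(d+y_s^2)}-e^{\sigma(d+y_s^1)}$ is the only computation needed and it is carried out correctly.
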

The proof of Theorem \ref{8vitanick1} is the same as the proof of   Theorem \ref{7vitanick1}.

\section{Conclusions.}

In the paper, we  generalize the results of the paper \cite{GoncharNick}.
Section 2 contains the definition of local regular super-martingales. Theorem  \ref{reww1} gives the necessary and sufficient conditions of the local regularity of  a super-martingale. In spite of its simplicity,  the Theorem \ref{reww1} appeared very useful for the description of the local regular super-martingales.

Section 3 contains the important  Definition \ref{mykvita1} of the set of equivalent measures consistent with the filtration. In Lemma \ref{tinnick1}, we give an example of the set of equivalent measures
consistent with the filtration. Theorem \ref{tatnick2} contains  the sufficient conditions under that there exists a nonnegative super-martingale on a measurable space with  the set of measures consistent with the filtration. In Theorem \ref{tatnick8}, the sufficient conditions are founded which guarantee the existence on a measurable space a regular martingale.
 
Lemma \ref{vitanick1} gives the sufficient conditions of the  existence of a  set of  measures consistent with the filtration.

Lemma \ref{vitanick9} contains the description of the set of measures being equivalent to a given measure  and satisfying the condition: mathematical expectation of a given random value relative to every such a measure equals zero.
 In Lemma \ref{101vitanick13}, we obtain the representation for the set of measures being equivalent to a given measure and satisfying the condition: the conditional expectation of a given random value relative to every of which equals zero.
At last, Theorem \ref{nick1} gives the necessary and sufficient conditions of the local regularity of a nonnegative super-martingale.

In Section 4, 
 in Lemma \ref{vitanick28}, we investigate the closure of the set of considered set of measure in  the case of the countable space of elementary events. It is proved that in metrics (\ref{vitanick27}) the closure of the set of considered set of measures contains the set of measures (\ref{vitanick29}). 

Further, we introduce the notion of the exhaustive decomposition   of a measurable space. Using this notion, in Lemma \ref{2vitanick1}, we describe the closure of the considered  set of measures relative to the pointwise convergence of measures and the closure of expectation values relative to this set of measures. 

Theorem \ref{koljavita1} is a consequence of Lemma \ref{vitanick9} and contains the description of the set of measures, being equivalent to the given measure,  expectations relative to which are equal one. Theorem \ref{koljavita7} states the necessary and sufficient conditions when the set of measures (\ref{koljavita3}) is consistent with filtration. 
In Theorem \ref{mykolvit1}, we give the necessary and sufficient conditions of the consistency with the filtration of the set of measures  (\ref{koljavita3}).

Theorem \ref{mykolvit1} states the necessary and sufficient conditions
of the consistency with the filtration of the set of measure (\ref{koljavita3}).
 Using Lemma \ref{vitanick9}, in Lemma \ref{1vitanick35},  we construct an example of the  set of  measures consistent with the filtration. 
In Theorem \ref{vitusjanick2}, we describe completely the local regular set of measures. 

In Definition \ref{211vitanick}, we introduce a fundamental notion of the completeness of the regular set of  measures.

Using Lemma \ref{vitanick28} and \ref{2vitanick1},  Theorem \ref{4vitanick3} states that the expectations of the integrable random values  relative to the contraction of the complete  set of measures on the  $\sigma$-algebras of filtration contains the points (\ref{4vitanick2}).

Theorem \ref{4vitanick4} states that for every nonnegative  ${\cal F}_n$ measurable random value, mathematical expectation for which relative to every martingale measure is bounded by 1, the inequality (\ref{4vitanick5}) is true.

In Theorem \ref{4vitanick7},  it is proved that every nonnegative super-martingale relative to the  regular set of measures is a local regular one.
 The same statement, as in Theorem \ref{4vitanick7}, it is proved in Theorem \ref{400vitanick7}  in the case, as a super-martingale is bounded from below.

Section 5 contains the description of the local regular super-martingales.
Using Theorem  \ref{reww1}, we prove Theorem \ref{mmars1}, giving the possibility
to describe  the local regular super-martingales.
 Further, we introduce a class $K$ of the local regular super-martingales relative to a regular set of measures. 
Theorem \ref{mmars9} states that every nonnegative  uniformly integrable
super-martingale relative to a regular set of measures belong to the class $K.$
The next Theorem \ref{9mmars9} states that all super-martingales that are majorized by elements from the set $A_0$ is also belong to the class $K.$
At last, in corollary \ref{mars16}, we give an example of the local regular super-martingele playing important role in the definition of the fair price of the contingent claim \cite{GoncharNick}. 

Section 6 contains an application of the results obtained above.
To make this helps us Theorem \ref{nick1} giving the necessary and sufficient conditions of  the local regularity of the nonnegative super-martingales.
In subsection 6.1, we consider the applications of the results obtained in the case as $\sigma$-algebra on the set of elementary events is generated by the finite set of events. In this case, Lemma \ref{1myk10} states that inequality   (\ref{1myk11})
is true. Theorem \ref{1myk19} states that every nonnegative super-martingale is local regular one. The same statement is true, when a super-martingale is only bounded, as it is shown in Theorem \ref{myktinal4}.
In subsection 6.2, we consider the measurable space with the countable decomposition. In Lemma \ref{2myk10}, we obtain the inequality (\ref{2myk11}).
Theorem \ref{2myk24} states that every nonnegative super-martingale is a local regular one.

Section 7 contains two statements.The first  statement is  that every bounded super-martingale is a local regular one. It is contained in Theorem \ref{Tinmyk1}.   
The second statement is contained in Theorem \ref{Tinmyk2}.  It declares that a majorized super-martingale is also a local regular one.

Section 8 contains the application of the results obtained above to calculation of the fair price of super-hedge, when the risk asset evolves by the discrete geometric Brownian motion.
In this case, we  describe the set of regular measures. We find the set of extreme points of the regular set of measures.  It is proved that the  the fair price of the super-hedge is given by the formula (\ref{11vitanick10}).

\vskip 5mm

\end{document}